\theoremstyle{plain}
\newtheorem{theorem}{Theorem}[section]
\theoremstyle{remark}
\newcommand{\commentsymbol}{\it\color{gray}$\triangleright$~}
\newcommand{\LComment}[1]{{\commentsymbol{#1}}}
\DeclareMathOperator*{\argmax}{arg\,max}
\newcommand{\preg}{\mathcal{P}}
\newcommand{\nb}[3]{{\colorbox{#2}{\bfseries\sffamily\scriptsize\textcolor{white}{#1}}}{\textcolor{#2}{\sf\small\textit{#3}}}}
\newcommand{\albo}[1]{\nb{Albo}{cyan}{#1}}
\newcommand{\bollo}[1]{\nb{Bollo}{red}{#1}}
\title{The Sample Complexity of Stackelberg Games}
\author{
	Francesco Bacchiocchi,
	Matteo Bollini,
	Matteo Castiglioni,
	Alberto Marchesi \&
	Nicola Gatti  \,\\
	Politecnico di Milano\\
	{\textcolor{black}{\texttt{\{name.surname\}@polimi.it}}}
}
\begin{document}

\maketitle

\begin{abstract}
	\emph{Stackelberg games} (SGs) constitute the most fundamental and acclaimed models of strategic interactions involving some form of commitment.
	Moreover, they form the basis of more elaborate models of this kind, such as, \emph{e.g.}, Bayesian persuasion and principal-agent problems.
	Addressing \emph{learning} tasks in SGs and related models is crucial to operationalize them in practice, where model parameters are usually {unknown}.
	In this paper, we revise the \emph{sample complexity} of learning an optimal strategy to commit to in SGs.
	We provide a novel algorithm that (i) does \emph{not} require any of the limiting assumptions made by state-of-the-art approaches and (ii) deals with a trade-off between sample complexity and termination probability arising when leader's strategies representation has finite precision.
	Such a trade-off has been completely neglected by existing algorithms and, if \emph{not} properly managed, it may result in them using exponentially-many samples.
	%
	%
	%
	%
	Our algorithm requires novel techniques, which also pave the way to addressing learning problems in other models with commitment ubiquitous in the real world.
\end{abstract}

\section{Introduction}


Asymmetries are ubiquitous in strategic interactions that involve multiple agents.
The most fundamental and acclaimed models of asymmetric interactions are \emph{Stackelberg games} (SGs)~\citep{von1934marktform,Conitzer2006,von2010leadership}.
In an SG, a \emph{leader} has the ability to publicly \emph{commit to a strategy} beforehand, while a \emph{follower} reacts by best responding to it.
Such a simple idea of commitment is at the core of several other (more elaborate) models of asymmetric strategic interactions, such as Bayesian persuasion~\citep{kamenica2011bayesian}, principal-agent problems~\citep{myerson1982optimal}, and mechanism design~\citep{myerson1989mechanism}.

%

Recently, models of strategic interactions that involve some form of commitment have received a growing attention.
Addressing \emph{learning} tasks related to such models is of paramount importance to operationalize them in practice, where model parameters are usually \emph{unknown}.
Several works pursued this goal in various settings, ranging from SGs~\citep{Letchford2009,Blum2014learning,Balcan2015Commit,Blum2019Computing,Peng2019,Fiez2020,bai2021sample,lauffer2022noregret} to Bayesian persuasion~\citep{castiglioni2020online,DBLP:conf/sigecom/ZuIX21}
and principal-agent problems~\citep{ho2015adaptive,cohen2022learning,zhu2022online,bacchiocchi2023learning}.

In this paper, we revise the \emph{sample complexity} of learning an optimal strategy to commit to in~SGs.
\citet{Letchford2009} first addressed this learning problem, by providing an algorithm that works by ``sampling'' suitably-selected leader's strategies to get information about follower's best responses.
The main drawback of such an algorithm is that, in the worst case, it may require a number of samples growing exponentially in the number of leader's actions $m$ and in the representation precision $L$ (expressed in terms of number of bits) of players' payoffs.
\citet{Peng2019} later built on top of the results by~\citet{Letchford2009} to design an algorithm requiring a number of samples growing polynomially in $L$ and exponentially in either the number of leader's actions $m$ or $n$.
Moreover, \citet{Peng2019} provide a lower bound showing that such a sample complexity result is tight.

The algorithm by~\citet{Peng2019} relies on a number of rather stringent assumptions that severely limit its applicability in practice.
Specifically, it assumes to have control over the action played by the follower when multiple best responses are available to them, and, additionally, that follower's payoffs satisfy some suitable non-degeneracy conditions.
Moreover, even if all its assumptions are met, the algorithm by~\citet{Peng2019} may still fail in some SGs that we showcase in this paper.

Another issue of the algorithm by~\citet{Peng2019} is that its theoretical guarantees crucially rely on the assumption that leader's strategies can be selected uniformly at random \emph{without} taking into account their representation precision.
This may result in the algorithm requiring an exponential number of samples, as we show in this paper.
Properly accounting for the representation precision of leader's strategies requires managing a challenging trade-off between the number of samples used by the algorithm and the probability with which it terminates.
Understanding how to deal with such a trade-off is fundamental, \emph{not} only for the problem of learning optimal commitments in SGs, but also to build a solid basis to tackle related learning problems in other models involving commitment.

In this paper, we introduce a new algorithm to learn an optimal strategy to commit to in SGs.
Our algorithm requires a number of samples that scales polynomially in $L$ and exponentially in either $m$ or $n$---this is tight due to the lower bound by~\citet{Peng2019}---, it does \emph{not} require any of the limiting assumptions made by~\citet{Peng2019}, and it circumvents all the issues of their algorithmic approach by properly managing the representation precision of leader's strategies.

\section{Preliminaries}\label{sec:preliminaries}


A \emph{(normal-form) Stackelberg game} (SG) is defined as a tuple $G \coloneqq ({A}_\ell, {A}_f, u_\ell, u_f )$, where: ${A}_\ell \coloneqq \{a_i \}_{i=1}^m$ is a finite set of $m$ leader's actions, ${A}_f \coloneqq \{a_j \}_{j=1}^n$ is a finite set of $n$ follower's actions, while $u_\ell, u_f : {A}_\ell \times\mathcal{A}_f \to \mathbb{Q} \cap [0,1]$ are leader's and follower's utility functions, respectively.
Specifically, $u_\ell(a_i,a_j)$ and $u_f(a_i,a_j)$ are the payoffs obtained by the players when the leader plays $a_i \in A_\ell$ and the follower plays $a_j \in A_f$. 
%
%
A leader's \emph{mixed strategy} is a probability distribution $p \in \Delta_{m}$ over leader's actions, with $p_i$ denoting the probability of action $a_i \in A_\ell$.
The space of all leader's mixed strategies is the $(m-1)$-dimensional simplex, namely $\Delta_m \coloneqq \{ p \in \mathbb{R}_+^m \mid \sum_{a_i \in A_\ell} p_i = 1 \}$.

In an SG, the leader commits to a {mixed strategy} beforehand, and the follower decides how to play after observing it.
Given a leader's commitment $p \in \Delta_{m}$, we can assume w.l.o.g.~that the follower plays an action deterministically.
%
In particular, the follower plays a \emph{best response}, which is an action maximizing their expected utility given $p$.
%
%
%
Formally, the set of follower's best responses is
\[
		A_f(p) \coloneqq \argmax_{a_j \in  {A}_f}	 \sum_{a_i \in A_\ell} p_i u_f(a_i,a_j) .
\]
%
%
As customary in the literature~\citep{Conitzer2006}, we assume that the follower breaks ties in favor of the leader when having multiple best responses available, choosing an action maximizing leader's expected utility.
Formally, after observing $p \in \Delta_{m}$, the follower plays an action $a_f^\star(p) \in {A}_f$ such that $a_f^\star(p) \in \argmax_{a_j \in A_f(p)} \sum_{a_i \in A_\ell} p_i u_\ell(a_i,a_j).$

In an SG, the goal of the leader is to find an \emph{optimal strategy to commit to}, which is one maximizing their expected utility given that the follower always reacts with a best response.
%
Formally, the leader faces the following bi-level optimization problem: $\max_{p \in \Delta_m} u_\ell(p)$, in which, for ease of notation, we let $u_\ell(p) \coloneqq \sum_{a_i \in A_\ell}p_i u_\ell(a_i,a_f^\star(p))$ be leader's expected utility by committing to $p \in \Delta_{m}$.
%

Follower's best responses induce a cover of $\Delta_{m}$, composed by a family of $n$ best-response regions $\mathcal{P}_j$ defined as follows.
Given any pair of follower's actions $a_j, a_k \in A_f : a_j \neq a_k$, we let $\mathcal{H}_{j k} \subseteq \mathbb{R}^m$ be the halfspace in which $a_j$ is (weakly) better than $a_k$ in terms of follower's utility, where:
\[
	\mathcal{H}_{j k} \hspace{-0.5mm} \coloneqq \hspace{-0.5mm} \left\{  p \in \mathbb{R}^{m} \hspace{-0.5mm} \mid \hspace{-0.5mm} \sum_{a_i \in A_\ell} \hspace{-0.5mm} p_i \big( u_f(a_i,a_j) \hspace{-0.5mm} - \hspace{-0.5mm} u_f(a_i, a_k) \big) \hspace{-0.5mm} \geq \hspace{-0.5mm} 0  \right\} \hspace{-0.5mm} .
\]
Moreover, we denote by $H_{j k} \coloneqq \partial \mathcal{H}_{jk}$ the hyperplane constituting the boundary of the halfspace $\mathcal{H}_{jk}$, which we call the \emph{separating hyperplane} between $a_j$ and $a_k$.\footnote{We let $\partial \mathcal{H}$ be the boundary hyperplane of halfspace $\mathcal{H} \subseteq \mathbb{R}^m$.
Notice that $H_{jk}$ and $H_{kj}$ actually refer to the same hyperplane. In this paper, we use both names for ease of presentation.}
Then, for every follower's action $a_j \in A_f$, we define its \emph{best-response} region $\mathcal{P}_j \subseteq \Delta_{m}$ as the subspace of leader's strategies in which action $a_j$ is a best response.
The set $\mathcal{P}_j$ is defined as the intersection of $\Delta_{m}$ with \emph{all} the halfspaces $\mathcal{H}_{jk}$ in which $a_j$ is better than another action $a_k \in A_f$. Formally:
\[
	\mathcal{P}_j \coloneqq \Delta_{m} \cap \Bigg(  \bigcap_{a_k \in A_f: a_k \neq a_j} \mathcal{H}_{jk} \Bigg).
\]
The family of all sets $\mathcal{P}_j$ constitutes a cover of $\Delta_{m}$, since $\Delta_{m} = \bigcup_{a_j \in A_f} \mathcal{P}_j$.
%
Notice that $\mathcal{P}_j$ is a polytope whose vertices are obtained by intersecting $\Delta_m$ with $m-1$ linearly-independent hyperplanes, selected among separating hyperplanes $H_{jk}$ and \emph{boundary hyperplanes} of~$\Delta_{m}$.
The latter are defined as $H_i \coloneqq \{ p \in \mathbb{R}^m \mid p_i = 0 \}$ for every $a_i \in A_\ell$.
Since there are at most $n-1$ separating hyperplanes and $m$ boundary ones, the vertices of $\mathcal{P}_j$ are at most $\binom{n+m}{m}$.
In the following, we let $V(\mathcal{P}_j) \subseteq \Delta_{m}$ be the set of all the vertices of $\mathcal{P}_j$, while we denote by $\text{vol}(\mathcal{P}_j)$ its volume relative to $\Delta_{m}$.
%
Moreover, we let $\text{int}(\mathcal{P}_j)$ be the interior of $\mathcal{P}_j$ relative to $\Delta_{m}$.\footnote{We denote by $\text{vol}_{d}(\mathcal{P})$ the Lebesgue measure in $d$ dimensions of a polytope $\mathcal{P} \subseteq \mathbb{R}^D$. For ease of notation, whenever $D = m$ and $d = m-1$, we simply write $\text{vol}(\mathcal{P})$. Moreover, we let $\text{int}(\mathcal{P})$ be the interior of $\mathcal{P}$ relative to a subspace that fully contains $\mathcal{P}$ and has minimum dimension. In the case of a best-response region $\mathcal{P}_j$, the $(m-1)$-dimensional simplex is one of such subspaces.}
%
%
%
%
%
Once all the best-response regions are available, the optimization problem faced by the leader can be formulated as $\max_{a_j \in A_f} \max_{p \in \mathcal{P}_j} u_\ell(p)$, where the inner $\max$ can be solved efficiently by means of an LP~\citep{Conitzer2006}.

\paragraph{Learning in Stackelberg games}
We study SGs in which the leader does \emph{not} know anything about follower's payoffs and they have to \emph{learn} an optimal strategy to commit to.
%
%
The leader can only interact with the follower by committing to a strategy $p \in \Delta_{m}$ and observing the best response $a_f^\star(p)$ played by the latter.
%
%
We assume that the leader interacts with the follower by calling a function \texttt{Oracle}$(p)$, which takes $p \in \Delta_{m}$ and returns $a^\star_f(p)$.
%
Since we are concerned with the \emph{sample complexity} of learning an optimal strategy to commit to in SGs, our goal is to design algorithms that the leader can employ to learn such a strategy by using the minimum possible number of \emph{samples}, \emph{i.e.}, calls to \texttt{Oracle}$(p)$.
Ideally, we would like algorithms that require a number of samples scaling polynomially in the parameters that define the size of SGs, \emph{i.e.}, the number of leader's actions $m$, that of follower's ones $n$, and the number of bits encoding players' payoffs.
However, a lower bound by~\citet{Peng2019} shows that an exponential dependence in either $m$ or $n$ is unavoidable.
Thus, as done by~\citet{Peng2019}, we pursue the goal of learning an optimal strategy to commit to by using a number of samples that is polynomial when either $m$ or $n$ is fixed.

\paragraph{On the representation of numbers}
Throughout the paper, we assume that all the numbers manipulated by our algorithms are rational.
%
%
%
We assume that rational numbers are represented as fractions, by specifying two integers which encode their numerator and denominator~\citep{Schrijver1986}.
%
Given a rational number $q  \in \mathbb{Q}$ represented as a fraction $\nicefrac{b}{c}$ with $b,c \in \mathbb{Z}$, we denote the number of bits that $q$ occupies in memory, called \emph{bit-complexity}, as $B_{\nicefrac{b}{c}} := B_b + B_c$, where $B_b$ ($B_c$) is the number of bits required to represent the numerator (denominator).
%
%
For ease of presentation and with an abuse of terminology, given a vector in $\mathbb{Q}^D$ of $D$ rational numbers represented as fractions, we let its bit-complexity be the maximum bit-complexity among its entries.

\section{Sate-of-the-art of learning in Stackelberg games}\label{sec:sota}

We start by presenting the algorithm by~\citet{Peng2019}, which is the state-of-the-art approach to learn an optimal strategy to commit to in SGs.
We describe in detail the limiting assumptions made by such an algorithm, and we showcase an SG in which the algorithm fails even if all its assumptions are met.
Moreover, we discuss all the issues resulting from the fact that~\citet{Peng2019} do \emph{not} account for the bit-complexity of leader's strategies in their analysis.

\subsection{High-level description of the algorithm by~\citet{Peng2019}}

The core idea underpinning the algorithm by~\citet{Peng2019} is to iteratively discover separating hyperplanes $H_{jk}$ to identify all the best-response regions $\mathcal{P}_j$.
The algorithm keeps track of some overestimates---called \emph{upper bounds}---of the best-response regions, built by using the separating hyperplanes discovered so far.
All the upper bounds are initialized to $\Delta_{m}$, and, whenever a separating hyperplane $H_{jk}$ is discovered, the upper bounds of both $\mathcal{P}_j$ and $\mathcal{P}_k$ are updated accordingly.
Moreover, the algorithm also keeps track of some underestimates---called \emph{lower bounds}---of the best-response regions, with the lower bound of $\mathcal{P}_j$ containing all the points that have already been discovered to belong to $\mathcal{P}_j$.
After finding a new separating hyperplane, the algorithm checks each vertex $p \in \Delta_{m}$ of the (updated) upper bounds, adding it to the lower bound of the best-response region $\mathcal{P}_j$ with $a_j = a^\star_f(p)$ (by taking its convex hull with the lower bound).

To find a separating hyperplane, the algorithm by~\citet{Peng2019} performs a \emph{binary search} over suitably-defined line segments in $\Delta_{m}$, by using a procedure introduced by~\citet{Letchford2009}.
As a first step, the procedure does a binary search on a line segment connecting two points $p^1, p^2 \in \Delta_{m}$ such that $a^\star_f(p^1) \neq a^\star_f(p^2)$.
In particular, the procedure identifies two actions $a_j, a_k \in A_f$ whose upper bounds have an intersection with non-zero volume relative to $\Delta_{m}$, and it randomly selects $p^1$ from the interior of such an intersection.
Moreover, $p^2$ is any point in the lower bound of either $a_j$ and $a_k$, chosen so that $a^\star_f(p^1) \neq a^\star_f(p^2)$.
The binary search recursively halves the line segment connecting $p^1$ and $p^2$ until it finds a point $p^\circ \in \Delta_{m}$ on some separating hyperplane (notice that this could be potentially different from $H_{jk}$).
Then, the procedure draws a ``small'' $(m-1)$-dimensional simplex centered at $p^\circ$ uniformly at random, it suitably chooses $m-1$ distinct pairs of vertices of such a simplex, and it performs binary search on the line segments defined by such pairs, so as to find additional $m-1$ points on the separating hyperplane and identify it.

\subsection{Assumptions of the algorithm by~\citet{Peng2019}}

The algorithm by~\citet{Peng2019} crucially relies on the following assumptions:
\begin{itemize}[noitemsep,nolistsep]
	\item[(a)] each best-response region $\mathcal{P}_j$ is either empty or it has volume $\text{vol}(\mathcal{P}_j)$ greater than or equal to $2^{-nL}$, where $L$ is the bit-complexity of follower's payoffs;
	\item[(b)] when indifferent, the follower breaks ties as needed by the algorithm to correctly terminate;
	\item[(c)] there are no $m+1$ separating/boundary hyperplanes that intersect in one point; and 
	\item[(d)] no separating hyperplanes coincide.
\end{itemize}
%
%
Notice that assumptions (c)~and~(d) considerably limit the set of SGs in which the algorithm can be applied, and assumption~(a) makes such limits even stronger by ruling out cases where a follower's action is a best response only for a ``small'' subset of leader's strategies. 
%
%
Moreover, assumption (b) is a rather unreasonable requirement in practice, since it amounts to assuming that the algorithm has some form of control on which best responses are played by the follower.

The algorithm by~\citet{Peng2019} needs assumptions (a),~(c),~and~(d) in order to meet the requirements of the procedure introduced by~\citet{Letchford2009}, while assumption~(b) is needed to properly build the lower bounds of best-response regions by vertex enumeration.
As we show in the rest of this paper, our algorithm employs novel techniques that allow to drop assumptions (a)--(d).
\begin{wrapfigure}[18]{R}{0.36\textwidth}
	\vspace{-.7cm}
	\begin{minipage}{0.43\textwidth}
		\begin{figure}[H]
			\centering
			\includegraphics[width=0.75\linewidth]{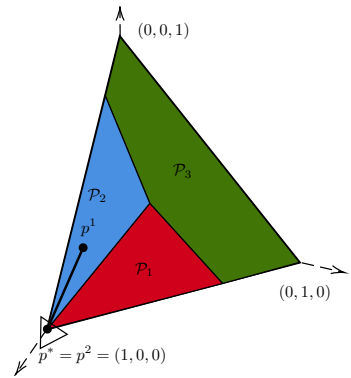}
			\caption{Algorithm by~\citet{Peng2019} fails even if all its assumptions are met.}
			\label{fig:simplex_1}
		\end{figure}
	\end{minipage}
\end{wrapfigure}


\paragraph{Minor issues of the algorithm by~\citet{Peng2019}}
We provide an instance of SG in which, even if assumptions (a)--(d) are met, the algorithm by~\citet{Peng2019} fails to find an optimal commitment.
The failure is depicted in Figure~\ref{fig:simplex_1}.
There, the algorithm fails since binary search finds a point $p^\circ$ on a boundary hyperplane.
This clearly makes the construction of a ``small'' $(m-1)$-dimensional simplex centered at $p^\circ$ impossible.
Intuitively, this happens when the procedure by~\citet{Letchford2009} randomly draws a leader's strategy from the intersection of the upper bounds of $\mathcal{P}_1$ and $\mathcal{P}_2$ obtaining a point $p^1$ in $\mathcal{P}_2$.
Then, since at that point of the algorithm execution the lower bound of $\mathcal{P}_1$ only contains a point $p^2$ on a facet of $\mathcal{P}_1$, binary search is performed on a segment fully contained in the best-response region $\mathcal{P}_2$.
Thus, binary search inevitably ends at $p^\circ = p^2$, resulting in a failure.
Additional details on this example are provided in Appendix~\ref{sec:problem_prev_works}, where we also showcase another minor issue of the procedure by~\citet{Letchford2009}, inherited by~\citet{Peng2019}.

\subsection{The algorithm by~\citet{Peng2019} may require exponentially-many samples}

The analysis of the algorithm by~\citet{Peng2019} relies on the assumption that leader's strategies can be selected uniformly at random \emph{without} taking into account their bit-complexity.
This allows \citet{Peng2019} to claim that the binary searches performed by their algorithm only require $\mathcal{O}(L)$ samples, where $L$ is the bit-complexity of follower's payoffs.
However, as we show in the rest of this paper, the number of samples required by those binary searches also depends on the bit-complexity of leader's strategies employed by the algorithm, and, if this is \emph{not} properly controlled, the algorithm by~\citet{Peng2019} may require an exponential number of samples.
\begin{wrapfigure}[17]{R}{0.43\textwidth}
	\vspace{-.8cm}
	\begin{minipage}{0.43\textwidth}
		\begin{figure}[H]
			\centering
			\includegraphics[width=0.75\linewidth]{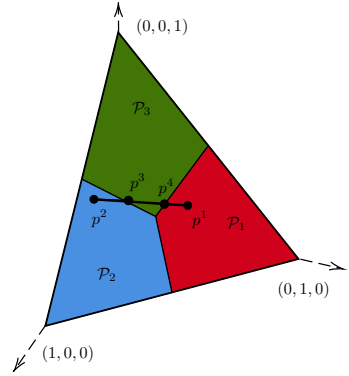}
			\caption{Algorithm by~\citet{Peng2019} requires exponentially-many samples.}
			\label{fig:simplex_2}
		\end{figure}
	\end{minipage}
\end{wrapfigure}

Figure~\ref{fig:simplex_2} shows an example of algorithm execution suffering the issue described above.
The complete example is provided in Appendix~\ref{sec:problem_prev_works}.
Intuitively, in Figure~\ref{fig:simplex_2}, a binary search is performed on the line segment connecting points $p^1$ and $p^2$, which are leader's strategies parametrized by some $\epsilon > 0$.
The line segment intersects two separating hyperplanes, at points $p^3$ and $p^4$, and the distance between $p^3$ and $p^1$ can be made arbitrarily small by lowering $\epsilon$.
%
%
To identify the separating hyperplane passing through $p^4$, the binary search procedure must go on until a point on the segment connecting $p^3$ and $p^1$ is found.
This requires a number of steps logarithmic in the distance between $p^1$ and $p^2$ divided by the one between $p^1$ and $p^3$.
Thus, binary search requires $\mathcal{O}(\log (\nicefrac{1}{\epsilon}))$ samples, and, whenever $\epsilon = \mathcal{O}(\text{exp}(-2^L))$, it requires $\mathcal{O}(2^L)$ samples.
Intuitively, the issue arises from the fact that the bit-complexity of the point $p^1$ is exponential in $L$.
This may happen any time $p^1$ is randomly selected \emph{without} properly controlling the bit-complexity of leader's strategies.
As we show in the rest of this paper, in order to avoid the issue depicted in Figure~\ref{fig:simplex_2}, it is necessary that leader's strategies have a properly-controlled (finite) bit-complexity.
This introduces a trade-off between the number of samples required by the algorithm and its termination probability, which must be suitably managed.

\section{Learning an optimal commitment is SGs}


At a high level, our algorithm works by \emph{closing} follower's actions one after the other, with an action being considered closed when a set of separating hyperplanes identifying its best-response region has been found.
The algorithm stops when the union of the identified best-response regions coincides with $\Delta_{m}$.
In this way, it builds all the best-response regions $\mathcal{P}_j$ with $\text{vol}(\mathcal{P}_j) > 0$, which we show to be sufficient to find an optimal strategy to commit to.
To close an action $a_j \in {A}_f$, our algorithm builds an upper bound $\mathcal{U}_j \subseteq \Delta_{m}$ of the best-response region $\mathcal{P}_j$, using the separating hyperplanes $H_{jk}$ computed so far.
To find a separating hyperplane, the algorithm adopts a binary-search-based procedure similar in nature to the one by~\citet{Letchford2009}.
However, our procedure has some crucial differences that allow to avoid the issue depicted in Figure~\ref{fig:simplex_2}.
In particular, it ensures that the point randomly drawn from the intersection of two upper bounds has a properly-controlled bit-complexity.
This is accomplished by using a suitable sampling technique, which manages the trade-off between number of samples and termination probability. 
%
%
%
Moreover, to understand whether action $a_j$ has been closed or \emph{not}, the algorithm checks all the vertices of $\mathcal{U}_j$ to understand if $a_j$ is a best response in each of them, with an idea similar to that in~\citep{Peng2019}.
However, to relax the stringent assumption~(b) made by~\citet{Peng2019}, our algorithm checks a vertex by querying a nearby point that is in the interior of the upper bound.
Such a point is obtained by moving from the vertex towards a known point in the interior of the best-response region.
Crucially, provided that the queried point is sufficiently close to the vertex, if a follower's action is a best response in the queried point, then it is also a best response in the vertex, and \emph{viceversa}.
Moreover, avoiding querying vertexes also ensures that the failure of the algorithm by~\citet{Peng2019} depicted in Figure~\ref{fig:simplex_1} does \emph{not} occur in our algorithm, since it guarantees that a vertex of the currently-considered upper bound is used in a binary search only when it is \emph{not} on the boundary of its associated best-response region.
%

Next, we describe all the components of our algorithm.
Section~\ref{sec:main_algo} introduces the main procedure executed by the algorithm, called \texttt{Learn-Optimal-Commitment}.
Sections~\ref{sec:find_hyperplane}~and~\ref{sec:binary_search} introduce \texttt{Find-Hyperplane} and \texttt{Binary-Search}, respectively, which are two procedures working in tandem to find separating hyperplanes.
Finally, Section~\ref{sec:sample point} describes \texttt{Sample-Int}, which is a procedure called any time the algorithm has to randomly sample a point in the interior of some polytope.


\subsection{\texttt{Learn-Optimal-Commitment}}\label{sec:main_algo}


The pseudocode of \texttt{Learn-Optimal-Commitment} is provided in Algorithm~\ref{alg:learning_commitment}.
Notice that the algorithm takes as input a parameter $\zeta \in (0,1)$, which is used to control the bit-complexity of leader's strategies selected at random, thus managing the trade-off between number of samples and probability of correctly terminating, as shown by our results in the following.\footnote{For ease of presentation, we assume that there are \emph{no} $a_j, a_k \in A_f$ such that $u_f(a_i,a_j) = u_f(a_i,a_k)$ for all $a_i \in A_\ell$. Indeed, if such two actions exist, their best-response regions $\mathcal{P}_j$ and $\mathcal{P}_k$ coincide, and, thus, in any $p \in \mathcal{P}_j \equiv \mathcal{P}_k$ the best response $a_f^\star(p)$ depends on leader's payoffs (according to tie breaking). As shown in Appendix~\ref{sec:app_extension}, our algorithm can be extended to also deal with such cases, and all our results continue to hold.}
\begin{wrapfigure}[24]{R}{0.56\textwidth}
	\vspace{-0.85cm}
	\begin{minipage}{0.56\textwidth}
		\begin{algorithm}[H]
			\caption{\texttt{Learn-Optimal-Commitment}}\label{alg:learning_commitment}
			\small
			\begin{algorithmic}[1]
				\Require Parameter $\zeta \in (0,1)$
				\State $B \gets$ Bit-complexity of \texttt{Sample-Int}\hfill
				\LComment{\textnormal{See Lemma~\ref{lem:sample_point_first}}}
				\State $\lambda \gets m 2^{-m(B+4L)-1}$  \hfill \LComment{\textnormal{See Lemma~\ref{lem:vertex_two}}}
				\State $\delta \gets \nicefrac{\zeta}{2(n^2 + nm)^2+n^3}$\label{line:delta}
				\State $\mathcal{C} \gets \varnothing$ \hfill \LComment{\textnormal{Set of \emph{closed} follower's actions}}
				\While{$\bigcup_{a_j \in \mathcal{C}} \mathcal{U}_{j} \ne \Delta_{m}$ }\label{line:partition_loop1}
				\State $p^{\text{int}} \gets $ Sample a point from $\text{int} \big(\Delta_m \setminus \bigcup_{a_k \in \mathcal{C}} \mathcal{U}_{k} \big) $ \label{line:sample_pint}
				\State $a_j \gets \texttt{Oracle}(p^\text{int})$
				\State $\mathcal{U}_{j} \gets \Delta_m$  \hfill \LComment{\textnormal{Initialize upper bound of $\mathcal{P}_j$}} 
				\State $\mathcal{V} \gets V(\mathcal{U}_{j})$ \hfill \LComment{\textnormal{Set of unchecked vertices of $\mathcal{U}_j$}}\label{line:set_vertexes}
				\While{$\mathcal{V} \neq \varnothing$}\label{line:partition_loop3}
				\State $v \gets $ Take any vertex in $\mathcal{V}$
				\State $p \gets \lambda p^\text{int} + (1-\lambda) v$ \label{line:convex_comb}
				\State $a \leftarrow \texttt{Oracle}(p)$\label{line:d_j}
				\If{$a \neq a_j$}
				\State $ H_{jk} \gets\texttt{Find-Hyperplane} (a_j,\mathcal{U}_{j}, p^\text{int}, v, \delta)$
				\State $\mathcal{U}_{j} \leftarrow \mathcal{U}_{j} \cap \mathcal{H}_{jk}$  \hfill \LComment{\textnormal{Update upper bound}}
				\State $\mathcal{V} \gets V(\mathcal{U}_j)$ \hfill \LComment{\textnormal{Update unchecked vertices}} \label{line:re_init_vertexes}
				\Else
				\State $\mathcal{V} \gets \mathcal{V} \setminus \{v\}$\hfill \LComment{\textnormal{Vertex $v$ has been checked}}
				\EndIf
				\EndWhile
				\State $\mathcal{C} \gets \mathcal{C}  \cup \{a_j\}$ \hfill \LComment{\textnormal{Action $a_j$ has been closed}}
				\EndWhile
				\State   $p^\star \gets \argmax_{p \in  \bigcup_{a_j \in \mathcal{C}} V(\mathcal{U}_{j} )}  u_\ell(p)$\label{line:optimal}
			\end{algorithmic}
		\end{algorithm}
	\end{minipage}
\end{wrapfigure}

During its execution, Algorithm~\ref{alg:learning_commitment} tracks already-closed follower's actions in a set $\mathcal{C}$.
If there are still actions that have to be closed, \emph{i.e.}, $\bigcup_{a_j \in \mathcal{C}} \mathcal{U}_j \neq \Delta_{m}$, the algorithm identifies one of them by randomly sampling a point from the interior of $\Delta_m \setminus \bigcup_{a_k \in \mathcal{C}} \mathcal{U}_{k} $, containing leader's strategies that have \emph{not} been covered yet by already-found best-response regions (Line~\ref{line:sample_pint}).\footnote{Notice that $\bigcup_{a_j \in \mathcal{C}} \mathcal{U}_j \neq \Delta_{m}$ may \emph{not} be convex in general. However, it can be expressed as the union of a finite number of polytopes. As we show in Appendix~\ref{sec:app_running_time}, this allows to suitably apply the \texttt{Sample-Int} procedure described in Section~\ref{sec:sample point} to get the needed point in polynomial time, when either $m$ or $n$ is fixed.}
With high probability, the sampling step provides a point $p^\text{int} \in \Delta_{m}$ in the interior of some best-response region $\mathcal{P}_j$ with $a_j = \texttt{Oracle}(p^\text{int})$ and $a_j \notin \mathcal{C}$.
Then, the algorithm focuses on closing action $a_j$.
First, it initializes the upper bound $\mathcal{U}_j$ of $\mathcal{P}_j$ to be $\Delta_{m}$.
Then, it iterates over the vertices $V(\mathcal{U}_j)$ of the upper bound $\mathcal{U}_j$, by employing a set $\mathcal{V}$ containing all the vertices that have \emph{not} been checked yet (Line~\ref{line:set_vertexes}).
The algorithm checks a vertex $v \in \mathcal{V}$ by querying a suitable convex combination $p$ of the interior point $p^\text{int}$ and $v$ (Line~\ref{line:convex_comb}), to get the best response $a = \texttt{Oracle}(p)$ played by the follower.
If the algorithm finds an action $a \neq a_j$, it means that a new separating hyperplane has to be discovered.
This is done by calling the \texttt{Find-Hyperplane} procedure, which takes as input $a_j$, the upper bound $\mathcal{U}_j$, the internal point $p^\text{int}$, and the vertex $v$, and it works as described in Section~\ref{sec:find_hyperplane}.
As shown later in this section, with high probability, the procedure returns a new separating hyperplane $H_{jk}$, for some action $a_k \in A_f$ possibly different from $a$.
%
Given $H_{jk}$, the algorithm updates the upper bound by intersecting it with the halfspace $\mathcal{H}_{jk}$ identified by $H_{jk}$.
This may change the vertices of $\mathcal{U}_j$, by either adding new ones or removing old ones.
Thus, the algorithm re-initializes $\mathcal{V}$ to contain all the vertices $V(\mathcal{U}_j)$ of the new upper bound (Line~\ref{line:re_init_vertexes}), and it starts checking vertices again.
%
When the algorithm has checked all the vertices in $\mathcal{V}$ obtaining best responses $a = a_j$, it means that all the separating hyperplanes defining the best-response region $\mathcal{P}_j$ have been identified.
Thus, it adds $a_j$ to the set $\mathcal{C}$ and goes on with a follower's action that still has to be closed (if any).

A crucial step of Algorithm~\ref{alg:learning_commitment} is to check a vertex $v \in V(\mathcal{U}_j)$ by querying a point nearby $v$ in the interior of $\mathcal{U}_j$.
This is obtained by moving towards the direction of the interior point $p^\text{int}$ of the best-response region $\mathcal{P}_j$.
This is crucial to drop assumption~(b), which is instead needed by the vertex enumeration procedure by~\citet{Peng2019}.
Indeed, thanks to such an assumption, when the algorithm by~\citet{Peng2019} queries a vertex of an upper bound in which there are multiple best responses available, it can choose \emph{any} of them as needed in order to complete vertex enumeration.
Without assumption~(b), $\texttt{Oracle}(v)$ for a vertex $v \in V(\mathcal{U}_j)$ may return a follower's action $a_k \neq a_j$ even though action $a_j$ is also a best response in $v$ (due to tie-breaking).
Our algorithm circumvents the issue by querying a nearby point in the interior of the upper bound.
The following lemma states that such a trick works provided that the point is sufficiently near the vertex. 
Intuitively, the lemma shows that, if $a_j$ is a best response in a leader's strategy sufficiently close to the vertex in the direction of the interior of $\mathcal{P}_j$, then it is also a best response in the vertex itself, and \emph{viceversa}.
%
%
Formally:
\begin{restatable}{lemma}{vertextwo}
	\label{lem:vertex_two}
	Given two points $p \in \textnormal{int}(\mathcal{P}_{j})$ with $a_j \in {A}_f$ and $p' \in \Delta_{m}$, each having bit-complexity bounded by $B$, let $\widetilde p\coloneqq \lambda p + (1- \lambda)p'$
	for some $\lambda\in (0,2^{-m(B+4L)-1}) $.
	Then:
	$
		\widetilde p \in \mathcal{P}_{j} \Leftrightarrow p' \in \mathcal{P}_{j}.
	$
	%
\end{restatable}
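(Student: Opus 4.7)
The plan is to rephrase membership in $\mathcal{P}_j$ in terms of the $n-1$ linear forms
\[
f_{jk}(q) \coloneqq \sum_{a_i \in A_\ell} q_i \bigl( u_f(a_i, a_j) - u_f(a_i, a_k) \bigr), \qquad a_k \in A_f, \; a_k \ne a_j,
\]
so that $q \in \mathcal{P}_j$ iff $q \in \Delta_m$ and $f_{jk}(q) \ge 0$ for all such $a_k$. Since both $p$ and $p'$ belong to $\Delta_m$ and the simplex is convex, $\widetilde p \in \Delta_m$ automatically, so only the $n-1$ halfspace inequalities matter. Moreover, $p \in \textnormal{int}(\mathcal{P}_j)$ is equivalent to $f_{jk}(p) > 0$ for every $a_k \ne a_j$.

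For the direction $p' \in \mathcal{P}_j \Rightarrow \widetilde p \in \mathcal{P}_j$, no bound on $\lambda$ is needed: for each $a_k$, linearity gives
\[
f_{jk}(\widetilde p) = \lambda\, f_{jk}(p) + (1-\lambda)\, f_{jk}(p') > 0,
\]
because the first summand is strictly positive and the second is nonnegative.

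The nontrivial direction is $\widetilde p \in \mathcal{P}_j \Rightarrow p' \in \mathcal{P}_j$, which I would prove by contrapositive. Assuming $p' \notin \mathcal{P}_j$, some $f_{jk}(p')$ is strictly negative. I would then combine two quantitative estimates: a trivial upper bound $|f_{jk}(p)| \le 1$ coming from payoffs in $[0,1]$ and $p \in \Delta_m$, and a lower bound on $|f_{jk}(p')|$ obtained by clearing denominators. Writing each $p_i'$ as a fraction of bit-complexity at most $B$ and each payoff as a fraction of bit-complexity at most $L$, the rational $f_{jk}(p')$ can be expressed as a single fraction whose denominator is bounded by a quantity of the form $2^{m(B + cL)}$ for a small constant $c$; hence any nonzero value of $f_{jk}(p')$ satisfies $|f_{jk}(p')| \ge 2^{-m(B + cL)}$. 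Plugging both bounds into the identity above yields
\[
f_{jk}(\widetilde p) \le \lambda - (1-\lambda)\, 2^{-m(B + cL)},
\]
which is strictly negative whenever $\lambda < 2^{-m(B+4L)-1}$, with the extra factor of $2^{2L}$ leaving ample slack for the constants hidden in the denominator bound. This contradicts $\widetilde p \in \mathcal{P}_j$.

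The main obstacle is the bit-complexity bookkeeping that produces the lower bound on $|f_{jk}(p')|$: one must carefully track how denominators multiply when summing $m$ products of two rationals of known bit-complexity, and verify that the resulting exponent $m(B + cL)$ comfortably fits below $m(B + 4L) + 1$ for the argument to close. Everything else reduces to the linear identity $f_{jk}(\widetilde p) = \lambda f_{jk}(p) + (1-\lambda) f_{jk}(p')$ and convexity of $\Delta_m$.
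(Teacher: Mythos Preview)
Your proposal is correct and follows essentially the same approach as the paper. The paper phrases the contrapositive direction by computing the crossing parameter $\lambda^\circ$ at which the segment $\text{co}(p,p')$ meets $H_{jk}$ and showing $\lambda^\circ \ge 2^{-m(B+4L)-1} > \lambda$, whereas you directly estimate the sign of $f_{jk}(\widetilde p)$ via the linear identity; these are equivalent computations, and both rest on the same bit-complexity lower bound $|f_{jk}(p')| \ge 2^{-m(B+4L)}$ (your constant $c$ is exactly $4$, obtained by noting that each payoff difference has bit-complexity at most $4L$ and each product $p_i'(u_f(a_i,a_j)-u_f(a_i,a_k))$ therefore has bit-complexity at most $B+4L$, so there is no additional slack beyond the single factor of~$2$ in the hypothesis on $\lambda$).
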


The two fundamental properties guaranteed to hold when Algorithm~\ref{alg:learning_commitment} terminates its execution are: (i) each upper bound $\mathcal{U}_j$ such that $a_j \in \mathcal{C}$ coincides with the best-response region $\mathcal{P}_j$, and (ii) all follower's actions $a_j \in {A}_f$ whose best-response regions have volume $\textnormal{vol}(\mathcal{P}_j) > 0$ have been closed by the algorithm.
As we show in the following, Algorithm~\ref{alg:learning_commitment} terminates with high probability  with properties (i)~and~(ii) satisfied.
The algorithm does \emph{not} terminate whenever either the sampling step in Line~\ref{line:sample_pint} does \emph{not} give a point in the interior of $\mathcal{P}_j$ or \texttt{Find-Hyperplane} is \emph{not} able to find a new separating hyperplane.
The probability of such events happening can be made arbitrarily low by increasing the bit-complexity of the points produced by \texttt{Sample-int}, as discussed in Section~\ref{sec:sample point}.

Next, we show that the properties of Algorithm~\ref{alg:learning_commitment} are sufficient to find an optimal strategy to commit to, by checking all the vertices of the upper bounds and taking the one providing the highest leader's expected utility (Line~\ref{line:optimal}).
This follows from a fundamental property of SGs, which holds since the follower breaks ties in leader's favor.
Intuitively, there is always an optimal strategy to commit that coincides with a vertex of a best-response region $\mathcal{P}_j$ with strictly-positive volume.
Formally:
%
%
\begin{restatable}{lemma}{VertexStackelberg}\label{lem:vertex_optimal}
	Given an SG, there exists an optimal strategy to commit to $p^\star \in \Delta_{m}$ and a follower's action $a_j \in {A}_f$ such that $\textnormal{vol}(\mathcal{P}_j) > 0$ and $p^\star = p$ for some vertex $p \in V(\mathcal{P}_j)$ of $\mathcal{P}_j$.
	%
\end{restatable}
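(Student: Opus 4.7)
The plan is to first reformulate the leader's bi-level optimization as a finite family of linear programs, one per follower's action. Because the follower breaks ties in the leader's favor, every $p\in\Delta_m$ satisfies $u_\ell(p)=\max_{a_j\in A_f(p)}\sum_{a_i\in A_\ell}p_iu_\ell(a_i,a_j)$, which rearranges the leader's problem as $\max_{a_j\in A_f}V_j$ with $V_j\coloneqq\max_{p\in\mathcal{P}_j}\sum_{a_i\in A_\ell}p_iu_\ell(a_i,a_j)$. Each $V_j$ is the optimum of a linear objective over a polytope and hence is attained at some vertex. Fixing $j_0\in\argmax_{a_j\in A_f}V_j$ and a vertex $v\in V(\mathcal{P}_{j_0})$ achieving $V_{j_0}$, the tie-breaking rule yields $u_\ell(v)\ge V_{j_0}=\max_j V_j$, so $v$ is an optimal strategy to commit to. If $\textnormal{vol}(\mathcal{P}_{j_0})>0$ the lemma already holds with $a_j=a_{j_0}$; otherwise I would relocate $v$ to be a vertex of a different, positive-volume region that also contains it.

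Next, I would observe that positive-volume best-response regions already cover $\Delta_m$. The set $\bigcup_{a_k:\textnormal{vol}(\mathcal{P}_k)>0}\mathcal{P}_k$ is a finite union of closed sets; were its complement in $\Delta_m$ non-empty, it would contain an $(m-1)$-dimensional ball of strictly positive Lebesgue measure, but this ball would be covered by the finitely many zero-volume regions $\mathcal{P}_k$ intersecting it, contradicting additivity of measure. Hence there exists $a_k\in A_f$ with $\textnormal{vol}(\mathcal{P}_k)>0$ and $v\in\mathcal{P}_k$, and it remains to show that $v$ is a vertex of $\mathcal{P}_k$.

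This last step is the main obstacle, requiring one to transfer the vertex-hood of $v$ from $\mathcal{P}_{j_0}$ to $\mathcal{P}_k$. Let $K\coloneqq\{a_j\in A_f:v\in\mathcal{P}_j\}$ be the set of follower's actions tied at $v$; then $a_{j_0},a_k\in K$, for every $a_j,a_{j'}\in K$ the follower utilities at $v$ coincide so $v\in H_{j,j'}$, while $v\notin H_{j,j'}$ whenever $a_j\in K$ and $a_{j'}\notin K$ (the follower strictly prefers the tied actions). Consequently the separating hyperplanes of $\mathcal{P}_{j_0}$ active at $v$ are exactly $\{H_{j_0,\ell}:a_\ell\in K\setminus\{a_{j_0}\}\}$, and analogously for $\mathcal{P}_k$. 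The key algebraic observation is that the spans of the respective normal vectors coincide, as both equal $\textnormal{span}\{u_f(\cdot,a_\ell)-u_f(\cdot,a_{\ell'}):a_\ell,a_{\ell'}\in K\}$; combined with the fact that the $\Delta_m$-boundary hyperplanes $\{H_i:v_i=0\}$ active at $v$ are identical for the two polytopes, this makes the total spans of active-constraint normals at $v$ identical for $\mathcal{P}_{j_0}$ and $\mathcal{P}_k$. Since $v\in V(\mathcal{P}_{j_0})$ forces this span to have dimension $m-1$ in the tangent space of $\Delta_m$, the same holds for $\mathcal{P}_k$, which yields $v\in V(\mathcal{P}_k)$. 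The other two ingredients (LP-vertex optimality and the measure-theoretic covering argument) are routine, so this labeling-independence observation is the only subtle part of the proof.
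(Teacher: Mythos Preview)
Your proof is correct, but it follows a genuinely different route from the paper's. Both arguments start by reducing to the LP $\max_{a_j}\max_{p\in\mathcal P_j}\sum_i p_i u_\ell(a_i,a_j)$ and picking an optimal vertex of the maximising region. The divergence is in how the zero-volume case is handled. The paper argues geometrically: if the optimal point $p^\star$ lies in a zero-volume $\mathcal P_k$, it must sit on the boundary of some positive-volume $\mathcal P_j$, hence in the face $\mathcal P_j\cap H_{jk}$; since this face is contained in $\mathcal P_k$, the leader's utility with $a_k$ fixed is linear on it, so one can \emph{slide} to a vertex of the face, which is automatically a vertex of $\mathcal P_j$. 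Your approach instead keeps the original vertex $v$ fixed and shows it is \emph{already} a vertex of some positive-volume $\mathcal P_k$, via the observation that the span of active separating-hyperplane normals at $v$ is $\mathrm{span}\{u_f(\cdot,a_\ell)-u_f(\cdot,a_{\ell'}):a_\ell,a_{\ell'}\in K\}$ and hence independent of which tied action labels the region.

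Your normal-span argument is clean and buys you the conclusion without relocating the witness point; the paper's face argument is slightly more elementary (it only uses that vertices of a face are vertices of the polytope, avoiding any rank computation) but has to construct a new optimal point. Both the LP-vertex reduction and your measure-theoretic covering of $\Delta_m$ by positive-volume regions are routine, so the substantive difference really is ``transfer vertex-hood via equal normal spans'' versus ``move along a face to reach a vertex of the enclosing full-dimensional region''.
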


Finally, by means of Lemma~\ref{lem:vertex_two}, Lemma~\ref{lem:vertex_optimal}, and all the results related to \texttt{Find-Hyperplane} and \texttt{Sample-Int}, presented in Sections~\ref{sec:find_hyperplane}~and~\ref{sec:sample point}, respectively, we can prove the following theorem, which also provides a bound on the number of samples required by Algorithm~\ref{alg:learning_commitment}.
\begin{restatable}{theorem}{maintheorem}\label{thm:main_thm}
	Given any $\zeta \in (0,1)$, with probability at least $1-\zeta$, Algorithm~\ref{alg:learning_commitment} terminates with $p^\star$ being an optimal strategy to commit to, by using $\widetilde{\mathcal{O}}\left(n^2\left(m^7L\log(\nicefrac{1}{\zeta})+ \binom{m+n}{m} \right)\right)$ samples. 
	%
	%
\end{restatable}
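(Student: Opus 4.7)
The plan is to split the argument into a probabilistic part, a correctness part under a ``good event'', and a deterministic counting part for the sample budget. First I would define the good event $\mathcal{E}$ as the intersection of two sub-events: every random draw made by \texttt{Sample-Int} (in Line~\ref{line:sample_pint} and internally in \texttt{Find-Hyperplane}) falls in the prescribed interior, and every invocation of \texttt{Find-Hyperplane} returns an authentic separating hyperplane $H_{jk}$ with $\mathcal{P}_j\subseteq\mathcal{H}_{jk}$. By the lemmas in Sections~\ref{sec:find_hyperplane} and~\ref{sec:sample point}, each such event fails with probability at most $\delta$. The algorithm performs at most $\mathcal{O}((n^2+nm)^2+n^3)$ such randomized calls (one \texttt{Sample-Int} per action to be closed, plus $\mathcal{O}(n)$ hyperplane searches per action, each of which itself invokes $\mathcal{O}(m)$ sub-samplings), so the choice $\delta=\zeta/(2(n^2+nm)^2+n^3)$ in Line~\ref{line:delta} gives $\mathbb{P}(\mathcal{E})\geq 1-\zeta$ by a union bound.

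Next I would verify correctness under $\mathcal{E}$ by a loop-invariant argument. For the inner loop the invariant is $\mathcal{P}_j\subseteq\mathcal{U}_j\subseteq\Delta_m$: it holds initially because $\mathcal{U}_j=\Delta_m$, and each update $\mathcal{U}_j\leftarrow\mathcal{U}_j\cap\mathcal{H}_{jk}$ preserves it because $H_{jk}$ is genuine under $\mathcal{E}$. Under the bit-complexity control guaranteed by \texttt{Sample-Int} (both $p^\text{int}$ and every vertex $v\in V(\mathcal{U}_j)$ having bit-complexity at most $B$), Lemma~\ref{lem:vertex_two} applied to $p=\lambda p^\text{int}+(1-\lambda)v$ with the chosen $\lambda$ gives the crucial equivalence $\texttt{Oracle}(p)=a_j\Leftrightarrow v\in\mathcal{P}_j$. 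Hence, when the inner loop exits with $\mathcal{V}=\varnothing$, every vertex of $\mathcal{U}_j$ lies in $\mathcal{P}_j$, and convexity forces $\mathcal{U}_j=\mathcal{P}_j$. For the outer loop, the termination test $\bigcup_{a_j\in\mathcal{C}}\mathcal{U}_j=\Delta_m$ guarantees that every $a_j$ with $\textnormal{vol}(\mathcal{P}_j)>0$ is eventually closed (such a region contains an open neighbourhood that must be covered). Lemma~\ref{lem:vertex_optimal} then places the optimum at a vertex of some $\mathcal{U}_j=\mathcal{P}_j$ with $a_j\in\mathcal{C}$, so the $\argmax$ in Line~\ref{line:optimal} returns an optimal commitment.

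I would then bound the samples deterministically. The outer loop runs at most $n$ times, and within a single iteration \texttt{Find-Hyperplane} is called at most $n-1$ times, since each successful call strictly shrinks $\mathcal{U}_j$ by an independent hyperplane and $\mathcal{P}_j$ is carved out by at most $n-1$ separating hyperplanes. Each \texttt{Find-Hyperplane} costs, by Section~\ref{sec:find_hyperplane}, $\widetilde{\mathcal{O}}(m^7L\log(1/\delta))$ samples, contributing $\widetilde{\mathcal{O}}(n^2m^7L\log(1/\delta))$ in total. Vertex checks in Line~\ref{line:d_j} cost one sample each; after each update $\mathcal{U}_j$ has at most $\binom{m+n}{m}$ vertices, giving $\mathcal{O}(n\binom{m+n}{m})$ per outer iteration and $\mathcal{O}(n^2\binom{m+n}{m})$ in all. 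Substituting $\log(1/\delta)=\mathcal{O}(\log(nm/\zeta))$ and absorbing polylogarithmic factors into $\widetilde{\mathcal{O}}$ yields the stated bound.

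The step I expect to be the main obstacle is propagating the bit-complexity bound through the vertices of the evolving upper bounds $\mathcal{U}_j$: these vertices are intersections of discovered separating hyperplanes whose coefficients come from \texttt{Find-Hyperplane}, so one must verify that the $B$ used to set $\lambda$ in Line~3 actually upper-bounds the bit-complexity of every $v$ reached in Line~\ref{line:convex_comb} over the entire run; this is where the interaction between Lemma~\ref{lem:vertex_two} and the output of \texttt{Find-Hyperplane} is most delicate. A close second is ensuring the counting argument for \texttt{Find-Hyperplane} invocations is unaffected by the re-initialization of $\mathcal{V}$ in Line~\ref{line:re_init_vertexes}, although amortizing against the monotone decrease of $\mathcal{U}_j$ should be enough to preserve the stated rate.
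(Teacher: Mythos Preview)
Your proposal is correct and follows essentially the same route as the paper's proof: a union bound over all randomized sub-calls to establish the good event, the loop invariant $\mathcal{P}_j\subseteq\mathcal{U}_j$ combined with Lemma~\ref{lem:vertex_two} to conclude $\mathcal{U}_j=\mathcal{P}_j$ when the inner loop exits, Lemma~\ref{lem:vertex_optimal} for optimality, and the same deterministic sample accounting. The obstacle you flag is resolved exactly as you suspect it should be: because \texttt{Find-Hyperplane} recovers the \emph{exact} separating hyperplane $H_{jk}$ (its coefficients are differences of follower payoffs, bit-complexity $\le 4L$), every facet of every $\mathcal{U}_j$ is a true separating or boundary hyperplane, so each vertex of $\mathcal{U}_j$ solves a linear system with $\mathcal{O}(L)$-bit entries and has bit-complexity $\mathcal{O}(m^2L)$---well below the $B$ coming from \texttt{Sample-Int}---which simultaneously justifies the use of Lemma~\ref{lem:vertex_two} and the claim that each call yields a genuinely new hyperplane (since $p^\circ$ lies in $\text{int}(\mathcal{U}_j)$).
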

By Theorem~\ref{thm:main_thm}, the number of samples required by Algorithm~\ref{alg:learning_commitment} is polynomial when either the number of leader's actions $m$ or that of follower's actions $n$ is fixed.
%
%
%
Notice that Algorithm~\ref{alg:learning_commitment} requires an overall running time that is polynomial when either $m$ or $n$ is fixed (see Appendix~\ref{sec:app_running_time}).
%


\subsection{\texttt{Find-Hyperplane}}\label{sec:find_hyperplane}

The pseudocode of the \texttt{Find-Hyperplane} procedure is in Algorithm~\ref{alg:hyperplane}.
Algorithm~\ref{alg:learning_commitment} ensures that it receives as input an action $a_j \notin \mathcal{C}$ that has to be closed, the upper bound $\mathcal{U}_{j}$ of $\mathcal{P}_j$, an interior point $p^\text{int} \in \text{int}(\mathcal{P}_j)$, a vertex $v \in V(\mathcal{U}_j)$ with $\texttt{Oracle}(v) \neq a_j$, and a probability parameter $\delta \in (0,1)$.\footnote{When $m=2$, the loop at Line~\ref{alg:hyperplane_for} can be skipped, as the algorithm can simply employ $p^\circ$ to compute $H_{jk}$.}

\begin{wrapfigure}[38]{R}{0.53\textwidth}
\vspace{-0.8cm}
\begin{minipage}{0.53\textwidth}
\begin{algorithm}[H]
	\caption{\texttt{Find-Hyperplane}}
	\label{alg:hyperplane}
	\small
	\begin{algorithmic}[1]
		\Require $a_j, \mathcal{U}_{j}, p^\text{int}, v, \delta$ \hfill \LComment{\textnormal{As given by Algorithm~\ref{alg:learning_commitment}}}
		\State $p \gets \texttt{Sample-Int}(\mathcal{U}_{j}, \delta)$
		\State $p^1 \gets p$ \hfill \LComment{\textnormal{$p^1$ always selected at random}}
		\If{ $\texttt{Oracle} (p)=a_j$ }
		\State $p^2 \gets v $ \hfill \LComment{\textnormal{$v \notin \mathcal{P}_j$ by design}}
		\Else
		\State $p^2 \gets p^\text{int}$\hfill \LComment{\textnormal{$\texttt{Oracle} (p)\neq a_j$}}
		\EndIf
		\State $p^\circ \gets \texttt{Binary-Search}(a_j,p^1, p^2)$\label{line:binary_search_hp_1}
		\State $\alpha \gets 2^{-m(B+4L)-1} / m$ \hfill \LComment{\textnormal{$B = $ bit-complexity of $p^\circ$}}\label{line:alpha_def}
		\State $\mathcal{S}_j \gets \varnothing$; $\mathcal{S}_k \gets \varnothing$
		\For{$i = 1 \ldots m$} \label{alg:hyperplane_for}
		\State $ p \gets \texttt{Sample-Int}(H_i \cap \Delta_{m}, \delta)$\label{line:sample_from_facet} \hfill \LComment{\textnormal{Sample a facet}}
		\State ${p}^{+i} \gets p^\circ + \alpha (p-p^\circ)$ \label{line:point_simlex}	
		\State $p^{-i} \gets p^\circ - \alpha (p-p^\circ)$
		\If{ $\texttt{Oracle} ({p}^{+i})=a_j$ }
		\State $\mathcal{S}_{j} \gets \mathcal{S}_{j} \cup \{p^{+i} \}  \wedge  \mathcal{S}_{k} \gets \mathcal{S}_{k} \cup \{ p^{-i} \} $
		\Else
		\State $\mathcal{S}_{k} \gets \mathcal{S}_{k} \cup \{p^{+i} \}  \wedge  \mathcal{S}_{j} \gets \mathcal{S}_{j} \cup \{ p^{-i} \} $
		\EndIf
		\EndFor
		\State Build $H_{jk}$ by $\texttt{Binary-Search}(a_j,p^1, p^{2})$ for $m-1$ pairs of linearly-independent points $p^1 \in \mathcal{S}_j, p^2 \in \mathcal{S}_k$
	\end{algorithmic}
\end{algorithm}
\end{minipage}
\begin{minipage}{0.53\textwidth}
	\begin{figure}[H]
		\centering
		\includegraphics[width=0.6\linewidth]{Tikz_pictures/hyperplane_example.tex}
		\caption{Example of points $p^{+i}$ and $p^{-i}$ computed by Algorithm~\ref{alg:hyperplane}, with the $m-1$ line segments used to compute the separating hyperplane $H_{jk}$.}
		\label{fig:hyperplane_example}
	\end{figure}
\end{minipage}
\end{wrapfigure}

First, Algorithm~\ref{alg:hyperplane} samples a point $p$ in the interior of $\mathcal{U}_{j}$, using the \texttt{Sample-Int} procedure.
If $\texttt{Oracle}(p)$ is equal to $a_j$, it performs a binary search between the sampled point $p$ and the vertex $v$.
If $\texttt{Oracle}(p)$ returns an action different from $a_j$, it performs a binary search between  $p^\text{int}$ and $p$.
%
%
Selecting the point $p$ at random is fundamental.
This ensures that, with high probability, the segment on which the binary search is performed does \emph{not} intersect a point on a facet of the best-response region $\mathcal{P}_j$ where two or more separating hyperplanes $H_{jk}$ intersect.
%
%
Moreover, thanks to Lemma~\ref{lem:vertex_two} and $\texttt{Oracle}(v) \neq a_j$, action $a_j$ is \emph{never} a best response in $v$.
This excludes that the binary search is done on a segment fully contained in $\mathcal{P}_j$, an issue of~\citet{Peng2019} (see Figure~\ref{fig:simplex_1}).
%
%

After the two extremes $p^1, p^2$ of the segment have been identified, Algorithm~\ref{alg:hyperplane} calls the \texttt{Binary-Search} procedure (Line~\ref{line:binary_search_hp_1}).
The  latter computes a point $p^\circ \in \Delta_{m}$ on a new separating hyperplane $H_{jk}$ with $a_k \in {A}_f$.
However, to actually compute the hyperplane, the algorithm has to identify $m-1$ linearly-independent points on it.\footnote{Notice that $m-1$ linearly-independent points are sufficient since separating hyperplanes pass by the origin.}
To do so, Algorithm~\ref{alg:hyperplane} uses a method that is different from the one implemented in existing algorithms.
In particular, instead of randomly drawing a ``small'' $(m-1)$-dimensional simplex centered at $p^\circ$, our algorithm samples a point from the interior of each facet $H_i \cap \Delta_{m}$ of leader's strategy space, by calling the \texttt{Sample-Int} procedure (Line~\ref{line:sample_from_facet}).
%
Then, it takes suitable linear combinations between sampled points and $p^\circ$, defined by a parameter $\alpha$ (see Line~\ref{line:alpha_def}).
Specifically, the algorithm takes the points $p^{+i} \coloneqq p^\circ + \alpha (p - p^\circ)$, where $p$ is the point sampled from the facet $H_i \cap \Delta_{m}$.
Moreover, the algorithm also takes the points $p^{-i} \coloneqq p^\circ - \alpha (p - p^\circ)$, which represent ``mirrored'' versions of the points $p^{+i}$.
Algorithm~\ref{alg:hyperplane} splits points $p^{+i}$ and $p^{-i}$ into two sets $\mathcal{S}_j$ and~$\mathcal{S}_k$, depending on which follower's action is a best response in such points, either $a_j$ or $a_k$.
Considering the ``mirrored'' points $p^{-i}$ is needed to ensure that the algorithm takes at least one point in which follower's best response is $a_j$ and at least one point in which the best response is $a_k$.
%
Finally, Algorithm~\ref{alg:hyperplane} takes $m-1$ pairs of linearly-independent points, by taking one point from $\mathcal{S}_j$ and the other from $\mathcal{S}_k$, and it computes the coefficients of the separating hyperplane $H_{jk}$.
Figure~\ref{fig:hyperplane_example} shows an example of how Algorithm~\ref{alg:hyperplane} works.
%
%
%

The following lemma shows some fundamental properties of the points $p^{+i}$:
%
%
\begin{restatable}{lemma}{vectorDifferent}\label{lem:lin_indep}
	With probability at least $1 - 2 \delta m$, the points $p^{+i}$ computed by Algorithm~\ref{alg:hyperplane} belong to $\textnormal{int}(\Delta_m)$, are linearly independent, and are {not} on the hyperplane $H_{jk}$.
	%
\end{restatable}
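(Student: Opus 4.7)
The plan is to establish the three claims separately and combine them by a union bound: the interior-containment claim is deterministic given the algorithmic construction, while the other two properties each contribute a failure probability of at most $m\delta$ coming from the $m$ invocations of \texttt{Sample-Int} and its non-degeneracy guarantee (Lemma~\ref{lem:sample_point_first}).

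For the interior claim, I would first observe that $p^\circ\in\textnormal{int}(\Delta_m)$. Indeed, $p^\circ$ is produced by \texttt{Binary-Search} on the segment joining $p^1\in\textnormal{int}(\mathcal{U}_j)\subseteq\textnormal{int}(\Delta_m)$ to $p^2\in\Delta_m$, and since the best response differs at the endpoints, binary search terminates at a point strictly between them, which inherits strictly positive coordinates from $p^1$. For each $i$, the $i$-th coordinate of $p^{+i}=(1-\alpha)p^\circ+\alpha p_i'$ equals $(1-\alpha)(p^\circ)_i>0$ (using $(p_i')_i=0$), and for $k\ne i$ it equals $(1-\alpha)(p^\circ)_k+\alpha(p_i')_k>0$. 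Hence $p^{+i}\in\textnormal{int}(\Delta_m)$ for every $i$.

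For the claim $p^{+i}\notin H_{jk}$, since $p^\circ\in H_{jk}$ and $H_{jk}$ is a linear hyperplane through the origin, the inclusion $p^{+i}\in H_{jk}$ is equivalent to $p_i'\in H_{jk}$. If $H_{jk}$ contained the entire affine hull of the facet $H_i\cap\Delta_m$, then $H_{jk}$ would have to coincide with the coordinate hyperplane $H_i$, which is impossible because $p^\circ\in H_{jk}$ lies in $\textnormal{int}(\Delta_m)$ and hence satisfies $(p^\circ)_i>0$. Therefore $H_{jk}\cap\textnormal{aff}(H_i\cap\Delta_m)$ is a strictly lower-dimensional subset of the facet, and by the non-degeneracy of \texttt{Sample-Int} we get $\Pr[p_i'\in H_{jk}]\le\delta$. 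A union bound over the $m$ samples bounds the failure probability for this event by $m\delta$.

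Finally, I would reduce linear independence to affine independence of $\{p_1',\ldots,p_m'\}$: since the $p^{+i}$'s lie in the affine hyperplane $\{p\in\mathbb{R}^m:\sum_k p_k=1\}$ that avoids the origin, linear and affine independence coincide on this set, and since $p^{+i}-p^{+1}=\alpha(p_i'-p_1')$, affine independence of $\{p^{+i}\}_{i=1}^m$ is equivalent to that of $\{p_i'\}_{i=1}^m$. I would then prove the latter by induction on $i$: given $p_1',\ldots,p_{i-1}'$ affinely independent with affine hull $A$ of dimension $i-2$, I show that $A\cap\textnormal{aff}(H_i\cap\Delta_m)$ has dimension strictly less than $m-2$, whence \texttt{Sample-Int} places $p_i'$ outside $A$ with probability at least $1-\delta$. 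For $i<m$ this is immediate from $\dim A=i-2<m-2$. The main obstacle is the borderline case $i=m$, where $A$ could a priori coincide with $\textnormal{aff}(H_m\cap\Delta_m)$; however, this would force $(p_1')_m=0$, contradicting $p_1'\in\textnormal{int}(H_1\cap\Delta_m)$ having strictly positive $m$-th coordinate. Combining with a union bound over $i$ completes the proof with total failure probability at most $2m\delta$.
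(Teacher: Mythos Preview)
Your proposal is correct and follows essentially the same approach as the paper: a deterministic argument for $p^{+i}\in\textnormal{int}(\Delta_m)$ via $p^\circ\in\textnormal{int}(\Delta_m)$, then two separate $m\delta$ union bounds---one for avoiding $H_{jk}$ (reducing to $p_i'\notin H_{jk}$ and using $p^\circ\in H_{jk}\cap\textnormal{int}(\Delta_m)$ to rule out $H_{jk}=H_i$), and one for linear independence via a sequential application of Lemma~\ref{lem:sample_point_first}, with the borderline last step handled by the observation that the first facet sample has strictly positive $m$-th coordinate.

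The presentational differences are minor but worth noting. Your interior argument (writing $p^{+i}=(1-\alpha)p^\circ+\alpha p_i'$ as a strict convex combination) is cleaner than the paper's, which invokes the bit-complexity lower bound $(p^\circ)_j\ge 2^{-B}$ unnecessarily. For linear independence, you reduce to affine independence of the facet samples $\{p_i'\}$ and argue about affine hulls, whereas the paper works directly with $\textnormal{span}\{p^{+1},\ldots,p^{+k}\}$ and an auxiliary linear space $H''_k$; the two are equivalent reformulations. One small point to tighten: to invoke Lemma~\ref{lem:sample_point_first} you need a \emph{linear} space not containing $H_i\cap\Delta_m$, so your condition on $A\cap\textnormal{aff}(H_i\cap\Delta_m)$ should be routed through $H=\textnormal{span}(A)$; this goes through exactly because $\textnormal{span}(H_i\cap\Delta_m)=H_i$ has dimension $m-1$, and your $(p_1')_m>0$ observation shows $\textnormal{span}(A)\neq H_m$ in the last step.
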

Notice that Lemma~\ref{lem:lin_indep} is proved by employing the properties of the procedure \texttt{Sample-Int}, shown in Lemma~\ref{lem:sample_point_first}.
The following lemma shows that the sets $\mathcal{S}_j$ and $\mathcal{S}_k$ built by Algorithm~\ref{alg:hyperplane} are always well defined.
In particular, it shows that, in each point $p^{+i}$ or $p^{-i}$, either $a_j$ or $a_k$ is a best response for the follower, and, additionally, the best response in $p^{+i}$ is always different from that in $p^{-i}$.
Formally:
%
%
\begin{restatable}{lemma}{intersectionNull}\label{lem:hyper_ball}
	With probability $\geq 1 - \delta (m+n)^2$, for every $p^{+i}$ in Algorithm~\ref{alg:hyperplane}, $a^\star_f(p^{+i}) \in \{ a_j, a_k \}$. The same holds for $p^{-i}$.
	If $p^{+i} \not \in H_{jk}$ and $a^\star_f(p^{+i}) = a_j$, then $a^\star_f(p^{-i}) = a_k$, and viceversa.
	%
	%
\end{restatable}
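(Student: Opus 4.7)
}

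The plan is to reduce the lemma to a family of ``bad events''---each one saying that some randomly sampled point landed on a lower-dimensional algebraic set---and then union-bound their probabilities via the guarantees of \texttt{Sample-Int} (Lemma~\ref{lem:sample_point_first}) and Lemma~\ref{lem:lin_indep}. Once these bad events are ruled out, both claims will follow from a short deterministic argument based on continuity of follower's utility and the smallness of $\alpha$ chosen in Line~\ref{line:alpha_def}.

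First, I will identify the events that must hold. The point $p^\circ$ produced by \texttt{Binary-Search} lies on at least one separating hyperplane, and the key structural fact I need is that $p^\circ$ lies on \emph{exactly one} separating hyperplane, namely $H_{jk}$, and in the relative interior of that hyperplane's intersection with $\Delta_m$. Because $p^1$ in Algorithm~\ref{alg:hyperplane} is produced by \texttt{Sample-Int} on $\mathcal{U}_j$, the segment $[p^1,p^2]$ meets the union of all separating hyperplanes other than $H_{jk}$ only on a set of segment-measure zero, except for a finite family of lower-dimensional sub-polytopes (intersections of $H_{jk}$ with other hyperplanes, and intersections with boundary hyperplanes). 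By Lemma~\ref{lem:sample_point_first}, each such coincidence is avoided with probability at least $1-\delta$; summing over the at most $\binom{n}{2}+m$ hyperplanes gives that, with probability $\geq 1-\delta(n+m)^2/2$ or so, $p^\circ$ lies only on $H_{jk}$ and strictly in $\text{int}(\Delta_m)$. In turn, this means that at $p^\circ$ the actions $a_j$ and $a_k$ are strictly better than every other follower's action.

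Next, I will control each iteration $i=1,\dots,m$ of the for loop. The point $p$ sampled from $H_i \cap \Delta_m$ via \texttt{Sample-Int} generates $p^{+i}$ and $p^{-i}$. For each $i$, I need $p^{+i}$ (and hence $p^{-i}$) to avoid every separating hyperplane $H_{j'k'}$ with $\{j',k'\}\neq\{j,k\}$: since $p^\circ$ is strictly off all such hyperplanes, the set of ``bad'' $p$ on the facet is contained in a finite union of hyperplanes of lower relative dimension, and applying Lemma~\ref{lem:sample_point_first} together with Lemma~\ref{lem:lin_indep} yields probability at most $\delta$ per such hyperplane. Union-bounding over all $i$ and all at most $\binom{n}{2}$ unwanted hyperplanes adds a further $\delta\cdot m\cdot n^2/2$-type term, which together with the previous step fits under $\delta(m+n)^2$.

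Conditioning on all these events, I conclude by a deterministic analysis. At $p^\circ$, the gap $g(a) \coloneqq \sum_i p^\circ_i(u_f(a_i,a_j)-u_f(a_i,a))$ is strictly positive for every $a\notin\{a_j,a_k\}$, while $g(a_k)=0$ and $p^\circ$ is bounded away from the facets of $\Delta_m$. Using the $2^{-L}$-lower bound on the bit-complexity of any rational payoff difference and the choice $\alpha \leq 2^{-m(B+4L)-1}/m$ (which is quantitatively smaller than any such positive gap divided by the perturbation magnitude $\|p-p^\circ\|$), the same ordering persists at $p^{+i}$ and $p^{-i}$: only $a_j$ and $a_k$ can be best responses there, proving the first part. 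For the second part, observe that for any action $a$, the function $q\mapsto \sum_i q_i(u_f(a_i,a_j)-u_f(a_i,a_k))$ is linear and vanishes on $H_{jk}\ni p^\circ$; therefore its values at $p^{+i}$ and $p^{-i}$ are exact negatives of each other. Whenever $p^{+i}\notin H_{jk}$, this value is nonzero, so $a_j$ strictly dominates $a_k$ at one point and strictly loses at the other, giving opposite best responses as required.

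The main obstacle I anticipate is the bookkeeping that guarantees the quantitative smallness of $\alpha$ really implies ``no other action overtakes $a_j$ or $a_k$''; this requires combining the bit-complexity bound $B$ on $p^\circ$ (inherited from \texttt{Binary-Search}) with a lower bound on the minimum positive utility gap at $p^\circ$ (of order $2^{-\mathrm{poly}(m,L)}$), and checking that the perturbation $\alpha\|p-p^\circ\|$ is indeed below that threshold. The rest is essentially a union bound plus the linear-algebra observation that anti-symmetric perturbations of a point on a hyperplane land on opposite sides.
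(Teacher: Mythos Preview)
Your overall architecture---probabilistically position $p^\circ$ in the relative interior of a single facet, then a deterministic ``small ball'' argument, then the linear anti-symmetry for $p^{+i}$ versus $p^{-i}$---is exactly the paper's approach, and your steps 3 and 4 match the paper's deterministic portion. There are, however, two concrete gaps in the probabilistic part.

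\textbf{Step 2 is superfluous and breaks the probability budget.} Once $p^\circ$ lies on a \emph{single} hyperplane $H_{jk}$ (your step~1), the paper shows that the $\ell_2$-distance from $p^\circ$ to every \emph{other} hyperplane bounding $\mathcal{P}_j$ or $\mathcal{P}_k$ is at least $m^{-1}2^{-m(B+4L)}$, by the bit-complexity lower bound you yourself invoke in step~3. Since $\|p^{+i}-p^\circ\|_2 = \alpha\|p-p^\circ\|_2 \le \alpha\sqrt{2} < m^{-1}2^{-m(B+4L)}$ \emph{for every} $p\in\Delta_m$, the inclusion $p^{+i}\in\textnormal{int}(\mathcal{P}_j\cup\mathcal{P}_k)$ is deterministic and does not depend on where the facet sample $p$ lands. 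So the first claim of the lemma follows from step~1 alone plus the quantitative argument you already have in step~3; the randomness of the facet samples is \emph{not} used here (it is used in Lemma~\ref{lem:lin_indep}, a separate statement with its own budget). If you retain your step~2, the total failure probability becomes roughly $\delta(m+n)^2/2 + \delta m n^2/2$, which exceeds $\delta(m+n)^2$ once $mn^2 > (m+n)^2$---e.g.\ already for $m=n=5$.

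\textbf{Step 1 should be indexed by \emph{pairs} of hyperplanes, not single hyperplanes.} You do not know $H_{jk}$ before running the binary search; it is whatever facet the search halts on. The correct bad event is ``the segment $[p^1,p^2]$ meets $H\cap H'$ for some pair of distinct facet hyperplanes $H,H'$ of $\mathcal{P}_j$'', which translates to $p^1\in\textnormal{span}\{p^2,\,H\cap H'\}$. There are at most $\binom{m+n}{2}\le(m+n)^2$ such pairs, and this is precisely what exhausts the $\delta(m+n)^2$ budget. Your single-hyperplane count $\binom{n}{2}+m$ implicitly conditions on $H_{jk}$ being fixed, which is circular.
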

Lemma~\ref{lem:hyper_ball} follows since there exists a neighborhood of $p^\circ$ that does \emph{not} intersect any other separating/boundary hyperplane defining $\mathcal{P}_j$ and $\mathcal{P}_k$, as the bit-complexity of all elements involved is bounded.
Moreover, with high probability, the segment on which binary search is performed does \emph{not} intersect hyperplanes different from $H_{jk}$.
%
%
%
By Lemma~\ref{lem:hyper_ball}, with high probability, the sets $\mathcal{S}_j$ and $\mathcal{S}_k$ are well defined, \emph{i.e.}, it is possible to identify $m-1$ pairs $p^1, p^2$ of linearly-independent points with $p^1 \in \mathcal{S}_j$ and $p^2 \in \mathcal{S}_k$.
%
%
Finally, by Lemmas~\ref{lem:lin_indep}~and~\ref{lem:hyper_ball}:
\begin{restatable}{lemma}{hyperSegmen}\label{lem:find_hyperplane}
	With probability of at least $1- 2(m+n)^2 \delta$, Algorithm~\ref{alg:hyperplane} returns a separating hyperplane $H_{jk}$ by using $\mathcal{O}(m^7L + m^4\log(\nicefrac{1}{\delta}))$ samples. 
	%
\end{restatable}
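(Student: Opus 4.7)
The plan is to combine the correctness guarantees of Lemmas~\ref{lem:lin_indep} and~\ref{lem:hyper_ball} with a bit-complexity accounting for the $m$ binary searches performed inside Algorithm~\ref{alg:hyperplane}. I would first argue that, on the joint good event of those two lemmas, Algorithm~\ref{alg:hyperplane} outputs a valid separating hyperplane $H_{jk}$; then union-bound the failure probabilities to obtain the $2(m+n)^2\delta$ bound; and finally count samples by bounding the number of iterations of each binary search via the bit-complexity of its endpoints.

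For correctness, the initial \texttt{Binary-Search} is always invoked on a segment whose endpoints lie in different best-response regions: by the invariant maintained by Algorithm~\ref{alg:learning_commitment}, $\texttt{Oracle}(v)\neq a_j$, and Lemma~\ref{lem:vertex_two} together with Line~\ref{line:convex_comb} gives $v\notin\mathcal{P}_j$, while $p^\text{int}\in\text{int}(\mathcal{P}_j)$; the case split on $\texttt{Oracle}(p)$ then ensures that $p^1,p^2$ straddle a separating hyperplane. Consequently the binary search converges to some $p^\circ\in H_{jk}$ for a follower's action $a_k\in A_f$. Conditioned on the event of Lemma~\ref{lem:hyper_ball}, each $p^{\pm i}$ has best response in $\{a_j,a_k\}$ and, whenever $p^{+i}\notin H_{jk}$ (granted by Lemma~\ref{lem:lin_indep}), the pair $(p^{+i},p^{-i})$ straddles $H_{jk}$. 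Hence the partition into $\mathcal{S}_j,\mathcal{S}_k$ is consistent, each of the $m-1$ facet binary searches runs on a segment crossing only $H_{jk}$ (the choice of $\alpha$ in Line~\ref{line:alpha_def} rules out every other separating/boundary hyperplane within the $\alpha$-ball around $p^\circ$, exactly as already established in Lemma~\ref{lem:hyper_ball}), and the linear independence of the $m$ vectors $p^{+i}$ supplies $m-1$ linearly-independent pairs in $\mathcal{S}_j\times\mathcal{S}_k$ whose binary-search outputs pin down $H_{jk}$ uniquely. Union-bounding the $2\delta m$ term from Lemma~\ref{lem:lin_indep} with the $\delta(m+n)^2$ term from Lemma~\ref{lem:hyper_ball} yields at most $2(m+n)^2\delta$, matching the stated probability.

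For the sample bound, Lemma~\ref{lem:sample_point_first} ensures that every point returned by \texttt{Sample-Int}$(\cdot,\delta)$ has bit-complexity $B=\widetilde{\mathcal{O}}(\text{poly}(m)\,L+\text{poly}(m)\log(\nicefrac{1}{\delta}))$, so every endpoint fed to a binary search is a rational vector of bit-complexity $\mathcal{O}(B+L)$. A standard Cramer's-rule argument shows that the intersection of such a segment with a separating hyperplane has rational coordinates with denominator bounded by $2^{\mathcal{O}(m(B+L))}$, so the segment length that must be resolved is at least $2^{-\mathcal{O}(m(B+L))}$; hence each binary search terminates within $\mathcal{O}(m(B+L))$ samples. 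Multiplying by the $m$ binary searches per call of \texttt{Find-Hyperplane} (the initial one on Line~\ref{line:binary_search_hp_1} plus $m-1$ facet searches) gives $\mathcal{O}(m^2(B+L))$ samples in total, which after substituting the bound on $B$ is the claimed $\mathcal{O}(m^7 L + m^4\log(\nicefrac{1}{\delta}))$. The main obstacle is the joint bit-complexity bookkeeping: making $\alpha$ small enough that the $\alpha$-neighborhood of $p^\circ$ meets no separating hyperplane other than $H_{jk}$, yet not so small that the subsequent binary searches exceed the polynomial sample budget---both are controlled by tracking how the bit-complexity $B$ of the sampled points propagates through Lines~\ref{line:point_simlex} and~\ref{line:binary_search_hp_1}.
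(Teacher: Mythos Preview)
Your correctness argument matches the paper's: invoke Lemmas~\ref{lem:lin_indep} and~\ref{lem:hyper_ball}, observe that on their joint good event the $m-1$ pairs chosen from $\mathcal{S}_j\times\mathcal{S}_k$ straddle only $H_{jk}$ and are linearly independent, and union-bound $2\delta m+\delta(m+n)^2\le 2\delta(m+n)^2$.

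The sample-complexity accounting, however, has a concrete gap. You assert that ``every endpoint fed to a binary search is a rational vector of bit-complexity $\mathcal{O}(B+L)$'' with $B$ the \texttt{Sample-Int} bound, and then compute $m\cdot \mathcal{O}(m(B+L))=\mathcal{O}(m^2(B+L))$ total samples. With $B=\mathcal{O}(m^3L+\log(1/\delta))$ from Lemma~\ref{lem:sample_point_first} this gives only $\mathcal{O}(m^5L+m^2\log(1/\delta))$, which is \emph{not} the stated bound. The missing piece is the propagation through $p^\circ$ and $\alpha$: by Lemma~\ref{lem:binary_search} the first binary search outputs $p^\circ$ of bit-complexity $\mathcal{O}(mB)=\mathcal{O}(m^4L+m\log(1/\delta))$; the scale $\alpha$ in Line~\ref{line:alpha_def} is set to $2^{-m(\text{bitcomp}(p^\circ)+4L)-1}/m$, so $\alpha$ (and hence each $p^{\pm i}$) has bit-complexity $\mathcal{O}(m^5L+m^2\log(1/\delta))$; only then do the $m-1$ facet binary searches cost $\mathcal{O}(m^6L+m^3\log(1/\delta))$ each and $\mathcal{O}(m^7L+m^4\log(1/\delta))$ in total. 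You correctly flag in your last sentence that the $\alpha$-bookkeeping is the delicate step, but the calculation you actually carry out skips it and lands two powers of $m$ short; inserting the chain $p^\circ\to\alpha\to p^{\pm i}$ as in the paper's proof fixes this.
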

The number of samples required by Algorithm~\ref{alg:hyperplane} can be bounded by observing that binary search is always performed between points with bit-complexity of the order of $\mathcal{O}(m^5L+ m^2\log(\nicefrac{1}{\delta}))$, and that $\mathcal{O}(m)$ binary searches are performed.
This is ensured by Lemmas~\ref{lem:sample_point_first}~and~\ref{lem:binary_search}, and the fact that the bit-complexity of a vertex is $\mathcal{O}(m^2L)$.
%
%





\subsection{\texttt{Binary-Search}}\label{sec:binary_search}

The pseudocode of \texttt{Binary-Search} is provided in Algorithm~\ref{alg:binary_search}.
The algorithm performs a binary search on the line segment of extreme points $p^1, p^2$, with $\texttt{Oracle}(p^1) = a_j$ and $\texttt{Oracle}(p^2) \neq a_j$.
At each iteration of the binary search, the algorithm queries the middle point of the line segment.
If the best response played by the follower is $a_j$, then the algorithm advances one extreme of the line segment, otherwise it moves the other extreme.
The binary search is done until the line segment is sufficiently small, so that a point on a new separating hyperplane can be computed.

\begin{wrapfigure}[12]{R}{0.55\textwidth}
	\vspace{-.8cm}
\begin{minipage}{0.55\textwidth}
\begin{algorithm}[H]
	\caption{\texttt{Binary-Search}}\label{alg:binary_search}
	\small
	\begin{algorithmic}[1]
		\Require $a_j$, $p^1, p^2 $ of bit-complexity $\leq B$\hfill \LComment{\textnormal{Algorithm~\ref{alg:hyperplane}}}
		\State $\lambda_1 \gets 0 $; $\lambda_2 \gets 1 $
		\While{$|\lambda_2 - \lambda_1 | \ge 2^{-6m(5B+8L)}$}
		\State $\lambda \gets {(\lambda_1 + \lambda_2)}/{2} $; $p^\circ \gets p^1 + \lambda (p^2 - p^1)$
		\If{ $\texttt{Oracle}(p^\circ) = a_j$ }
		\State $\lambda_1 \gets \lambda $ \hfill \LComment{\textnormal{New point with $a_j$ as best response}}
		\Else
		\State $\lambda_2 \gets \lambda $ \hfill \LComment{\textnormal{New point with best response $\neq a_j$}}
		\EndIf
		\EndWhile
		\State $\lambda \gets \texttt{Stern-Brocot-Tree}(\lambda_1, \lambda_2,3m(5B +8L))$
		\State $p^\circ \gets \lambda p^1 + (1- \lambda) p^2$
	\end{algorithmic}
\end{algorithm}
\end{minipage}
\end{wrapfigure}

The number of binary search steps needed to determine a point on an hyperplane depends on the bit-complexity $B$ of $p^1,p^2$, and the bit-complexity $L$ of follower's payoffs.
%
%
Indeed, in an interval with a suitably-defined length there exists a single rational number with bit-complexity $\mathcal{O}(m(B+L) )$, and this is the point that belongs to the separating hyperplane.
Such a point can be efficiently computed by binary search on the Stern–Brocot tree.
See~\citep{Fori2007} for an efficient implementation.
%
%
\begin{restatable}{lemma}{binarySearch}\label{lem:binary_search}
	Let $p^1, p^2 \in \Delta_{m}$ be such that $a^\star_f(p^1) = a_j \ne a^\star_f(p^2)$ with $a_j \in {A}_f$, they are not both in $\mathcal{P}_j \setminus \textnormal{int} (\mathcal{P}_j)$, and they have bit-complexity bounded by $B$.
	Then, Algorithm~\ref{alg:binary_search} finds a point $p^\circ \in H_{jk}$ for some $a_k \in {A}_f$.
	Furthermore, the algorithm requires $\mathcal{O}(m(B + L))$ samples, and the point $p^\circ$ has bit-complexity bounded by $\mathcal{O}(m(B + L))$.
	%
	%
\end{restatable}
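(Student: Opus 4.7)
The plan is to establish three facts: (i) the binary-search loop converges to an interval around a specific $\lambda^\star$ corresponding to a point on some separating hyperplane $H_{jk}$; (ii) this $\lambda^\star$ is a rational of bit-complexity $\mathcal{O}(m(B+L))$; and (iii) once the interval is narrow enough, the call to \texttt{Stern-Brocot-Tree} recovers $\lambda^\star$ exactly without any further oracle queries.

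First, I would verify the loop invariant: at each iteration, $a^\star_f(p^1 + \lambda_1(p^2 - p^1)) = a_j$ while $a^\star_f(p^1 + \lambda_2(p^2 - p^1)) \neq a_j$. Since by hypothesis $p^1, p^2$ are not both in $\mathcal{P}_j \setminus \textnormal{int}(\mathcal{P}_j)$, the segment genuinely crosses the boundary of $\mathcal{P}_j$, and so there exists a $\lambda^\star \in (0,1)$ at which the parametrized point first leaves $\mathcal{P}_j$; this crossing lies on some separating hyperplane $H_{jk}$ for a suitable $a_k \in A_f$. A standard halving argument shows that after $t$ iterations the interval length is $2^{-t}$, so $\mathcal{O}(m(B+L))$ oracle queries suffice to reach the termination threshold $|\lambda_2 - \lambda_1| < 2^{-6m(5B+8L)}$.

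Second, I would bound the bit-complexity of $\lambda^\star$. Substituting the line parametrization into the equation $\sum_{a_i \in A_\ell} p_i (u_f(a_i, a_j) - u_f(a_i, a_k)) = 0$ that defines $H_{jk}$ yields
\[
	\lambda^\star = \frac{\sum_{a_i \in A_\ell} p^1_i \bigl(u_f(a_i,a_j) - u_f(a_i,a_k)\bigr)}{\sum_{a_i \in A_\ell} (p^1_i - p^2_i)\bigl(u_f(a_i,a_j) - u_f(a_i,a_k)\bigr)} .
\]
Each sum is a linear combination of at most $m$ products of rationals of bit-complexities at most $B$ and $L$; placing the terms over a common denominator shows that both numerator and denominator have bit-complexity $\mathcal{O}(m(B+L))$, and writing $\lambda^\star = p/q$ in lowest terms we therefore have $\log q = \mathcal{O}(m(B+L))$.

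Third, I would invoke the classical Stern-Brocot property: if an interval $(\lambda_1, \lambda_2)$ contains a rational $p/q$ (in lowest terms) and has length less than $1/(2q^2)$, then $p/q$ is the unique rational of denominator at most $q$ in it, and the Stern-Brocot descent recovers it in $\mathcal{O}(\log q)$ arithmetic operations (cf.~\citep{Fori2007}). The threshold $2^{-6m(5B+8L)}$ is calibrated precisely so that, upon exiting the loop, $|\lambda_2 - \lambda_1| < 1/(2q^2)$, and hence the returned $\lambda$ equals $\lambda^\star$ exactly. Consequently $p^\circ = \lambda^\star p^1 + (1-\lambda^\star) p^2 \in H_{jk}$. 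The total sample count equals the number of binary-search iterations, $\mathcal{O}(m(B+L))$; the bit-complexity of $p^\circ$ is bounded by the sum of the bit-complexities of $\lambda^\star$, $p^1$, and $p^2$, again giving $\mathcal{O}(m(B+L))$.

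The main obstacle is not the conceptual structure but the careful constant bookkeeping needed in steps (ii) and (iii): one must explicitly track how $B$ and $L$ combine through the arithmetic that produces $\lambda^\star$ so as to justify that the specific constants $5$, $6$, $8$ appearing in the loop's termination condition and in the depth parameter $3m(5B+8L)$ of the Stern-Brocot call are large enough to ensure the uniqueness condition $|\lambda_2 - \lambda_1| < 1/(2q^2)$, while simultaneously keeping the overall sample and bit-complexity bounds of order $\mathcal{O}(m(B+L))$.
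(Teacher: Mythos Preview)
Your proposal is correct and follows essentially the same approach as the paper: derive the closed-form expression for the crossing parameter $\lambda^\star$, bound its bit-complexity as $\mathcal{O}(m(B+L))$ via common-denominator arithmetic (the paper does this through its auxiliary Lemma on sums of rationals), and then argue that once the interval is shorter than the stated threshold the Stern--Brocot search recovers $\lambda^\star$ exactly. The only cosmetic difference is in the uniqueness step---you invoke the classical $1/(2q^2)$ criterion for a single rational, whereas the paper phrases it as a lower bound on the gap between any two ``valid'' crossing values---but these are equivalent here and yield the same constants and conclusion.
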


\subsection{\texttt{Sample-Int}}\label{sec:sample point}

\begin{wrapfigure}[17]{R}{0.53\textwidth}
	\vspace{-0.85cm}
	\begin{minipage}{0.53\textwidth}
\begin{algorithm}[H]
	\caption{\texttt{Sample-Int}}
	\label{alg:sample_point}
	\small
	\begin{algorithmic}[1]
		\Require $\mathcal{P} \subseteq \Delta_m: \textnormal{vol}_{m-1}(\mathcal{P})>0 \vee \mathcal{P} := H_i \cap \Delta_{m}, \delta$
		\State $\left\{ \begin{array}{ll}
		d \gets m & \textcolor{gray}{\triangleright\textnormal{ If }\mathcal{P} \subseteq \Delta_m:\textnormal{vol}_{m-1}(\mathcal{P})>0} \\
		d \gets m-1 & \textcolor{gray}{\triangleright\textnormal{ If }\mathcal{P} = H_i \cap \Delta_{m}}
		\end{array}\right.$
		\State $\mathcal{V} \gets d$ linearly-independent vertexes of $\mathcal{P}$\label{line:sample_li_vertices}
		\State $p^\diamond \gets \frac{1}{d} \sum_{v \in \mathcal{V}} v$\label{line:sample_1}
		\State $\rho \gets \left(d^3 2^{9d^3L +4dL} \right)^{-1}$; $M \gets \left \lceil \nicefrac{\sqrt{d}}{\delta} \right \rceil$\label{line:sample_rho} 
		\State $x \sim \text{Uniform}(\{- 1, -\frac{M-1}{M}, \ldots, 0, \ldots, \frac{M-1}{M}, 1 \}^{d-1})$\label{line:sample_2}
		\State $l \gets 1$; $k \gets 1$
		\While{$l \leq d-1$}
		\State \textbf{if} $k \neq i$ \textbf{then} $p_k \gets p_k^\diamond + \rho x_l $; $l \gets l+1$
		\State \textbf{else} $p_k \gets 0$; 
		\State $k \gets k+1$
		\EndWhile
		\State \textbf{if} $i = m$ \textbf{then} $p_{m-1} \gets 1-\sum_{l= 1}^{m-2} p_l$
		\State \textbf{else} $p_{m} \gets 1-\sum_{l= 1}^{m-1} p_l$ 
	\end{algorithmic}
\end{algorithm}
\end{minipage}
\end{wrapfigure}

The pseudocode of \texttt{Sample-Int} is in Algorithm~\ref{alg:sample_point}. It takes as input a polytope $\mathcal{P} \subseteq \Delta_m:\text{vol}_{m-1}(\mathcal{P}) > 0$ or a facet $\mathcal{P} \coloneqq H_i \cap \Delta_{m}$ and a probability parameter $\delta \in (0,1)$, and it returns a point from $\textnormal{int}(\mathcal{P})$.
First, Algorithm~\ref{alg:sample_point} computes $p^\diamond \in \text{int}(\mathcal{P})$ by taking the average of $d$ linearly-independent vertexes of $\mathcal{P}$ (Line~\ref{line:sample_1}), where either $d=m$ or $d=m-1$, depending on the received input.
Then, the algorithm samples a point $x \in \mathbb{R}^{d-1}$ from a discrete uniform distribution with support on points belonging to an hypercube (Line~\ref{line:sample_2}).
In particular, the support is an equally-spaced grid in the hypercube, with step $1/M$ (Line~\ref{line:sample_rho}).
Then, the algorithm adds each component $x_l$ of $x$ to a component $p^\diamond_k$ of $p^\diamond$, by scaling it by a factor $\rho$ (Line~\ref{line:sample_rho})
The last component of $p$ is set so that the $p_i$ sum to $1$.
Notice that the choice of $\rho$ is done to guarantee that the $p$ produced by Algorithm~\ref{alg:sample_point} belongs to $\text{int}(\mathcal{P})$.
The main property of Algorithm~\ref{alg:sample_point} is that, given any linear space $H \subseteq \mathbb{R}^{m}$ (with $\mathcal{P} \not\subseteq H$), the probability that the sampled point belongs to $H$ is at most $\delta$.
%
%
Notice that, when existing algorithms~\citep{Peng2019,Letchford2009} randomly sample a point, they consider the probability of the aforementioned event to be zero, since they use strategies with arbitrarily large bit-complexity.
The property is made formal by the following lemma.
\begin{restatable}{lemma}{samplepointfirst}
	\label{lem:sample_point_first}
	Given a polytope $\mathcal{P} \subseteq \Delta_m:\textnormal{vol}_{m-1}(\mathcal{P}) >~0$ defined by separating or boundary hyperplanes, or a facet $\mathcal{P} \coloneqq H_i \cap \Delta_{m}$, Algorithm~\ref{alg:sample_point} computes $p \in \textnormal{int}(\mathcal{P})$ such that, for every linear space $H \subset \mathbb{R}^m : \mathcal{P} \not\subseteq H$ of dimension at most $m-1$, the probability that $p \in H$ is at most $\delta$.
	Furthermore, the bit-complexity of $p$ is $\leq 40m^3L + 2\log_2(\nicefrac{1}{\delta})$.
	%
\end{restatable}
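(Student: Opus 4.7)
The plan is to establish separately the three assertions of the lemma: (i) $p \in \mathrm{int}(\mathcal{P})$, (ii) the probability bound $\Pr[p \in H] \le \delta$, and (iii) the bit-complexity bound. I would treat both input cases uniformly via the parameter $d \in \{m, m-1\}$, noting that in either case the $d$ linearly independent vertices chosen on Line~\ref{line:sample_li_vertices} span the affine hull $\mathrm{aff}(\mathcal{P})$, and that the parameterization $x \mapsto p$ sweeps $p$ over all of $\mathrm{aff}(\mathcal{P})$ as $x$ ranges over $\mathbb{R}^{d-1}$.

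For (i), the centroid $p^\diamond$ lies in the relative interior of $\mathcal{P}$ since it is a uniform convex combination of the spanning vertices, and I would quantify how far inside it sits. By Hadamard/Cramer estimates the vertices of $\mathcal{P}$ have bit-complexity $O(mL)$, so $p^\diamond$ has bit-complexity $O(m^2 L)$. For any separating or boundary hyperplane $\{p : b^\top p = c\}$ defining $\mathcal{P}$, the coefficients $b, c$ have bit-complexity $O(L)$ and $b^\top p^\diamond \ne c$, so the rational $b^\top p^\diamond - c$ has denominator of bit-complexity $O(m^2 L)$ and absolute value at least $2^{-O(m^2 L)}$. Since $\|p - p^\diamond\|_\infty \le m\rho$ (the last coordinate aggregates up to $m-1$ perturbations of magnitude $\rho$) and $\rho = 2^{-\Theta(m^3 L)}$ is double-exponentially smaller than this distance, $p$ remains strictly inside every defining halfspace, hence in $\mathrm{int}(\mathcal{P})$.

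The key step is (ii), which I would prove by a Schwartz--Zippel-style argument. Given $H$ with $\mathcal{P} \not\subseteq H$, pick any $q \in \mathcal{P}\setminus H$ and let $b \in \mathbb{R}^m$ be the orthogonal projection of $q$ onto $H^\perp$; then $b \ne 0$, $b^\top$ vanishes on $H$, and $b^\top q = \|b\|^2 \ne 0$, so $\Pr[p\in H] \le \Pr[b^\top p = 0]$. Because $p$ is an affine function of $x$ whose image equals $\mathrm{aff}(\mathcal{P})$, $b^\top p$ is an affine function of $x$. If this function is constant in $x$, then $b^\top$ is constant on $\mathrm{aff}(\mathcal{P})$; the constant cannot be zero, for otherwise $\mathcal{P} \subseteq H$, so $\Pr[b^\top p = 0] = 0$. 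Otherwise some coefficient of $x_l$ in $b^\top p$ is nonzero; conditioning on the remaining coordinates, at most one value of $x_l$ from its uniform support of size $2M+1$ makes $b^\top p = 0$, giving $\Pr[b^\top p = 0] \le 1/(2M+1) \le \delta/(2\sqrt d) \le \delta$ by the choice of $M$. The main obstacle to watch is verifying that the affine parameterization actually covers $\mathrm{aff}(\mathcal{P})$, so that the ``constant vs.\ non-constant on $\mathrm{aff}(\mathcal{P})$'' dichotomy applies cleanly; this holds because the $d-1$ free coordinates are an unconstrained parameterization of $\mathrm{aff}(\mathcal{P})$ with the last coordinate determined by the simplex (and, in the facet case, the zero) constraint.

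Assertion (iii) is then pure bookkeeping. All entries of $p^\diamond$ share a common denominator of bit-complexity $O(m^2 L)$; $\rho$ has denominator with bit-complexity $9 d^3 L + 4 d L + O(\log d)$; and each $x_l$ has denominator $M$ with $\log_2 M \le \tfrac{1}{2}\log_2 d + \log_2(1/\delta) + 1$. Combining these for the free coordinates $p_k = p_k^\diamond + \rho x_l$ yields a common denominator of bit-complexity $O(m^3 L + \log(1/\delta))$; the determined last coordinate does not inflate denominators since all free $p_k$ already share a common denominator, so only numerators are summed. Tracking the precise constants gives the stated bound $40 m^3 L + 2\log_2(1/\delta)$.
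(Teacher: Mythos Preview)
Your argument is correct, and for parts (i) and (iii) it mirrors the paper's proof: lower-bound the distance from $p^\diamond$ to each defining hyperplane via bit-complexity of the numerator, then check that the perturbation $\rho$ is small enough; and do common-denominator bookkeeping for the bit-complexity of $p$. One small slip: the vertex bit-complexity is $O(m^2L)$ (see Lemma~\ref{lem:vertex_bits}), not $O(mL)$, so $p^\diamond$ has bit-complexity $O(m^3L)$ and the distance lower bound is $2^{-O(m^3L)}$ rather than $2^{-O(m^2L)}$. This is exactly why $\rho$ is set to $2^{-\Theta(m^3L)}$, so once you correct the constant the argument goes through unchanged.

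Where you genuinely diverge from the paper is in part (ii). The paper first rewrites $\Pr[p\in H]=\Pr[x\in H']$ for a hyperplane $H'\subset\mathbb{R}^{d-1}$, then bounds this probability by a continuous volume ratio and invokes Ball's theorem on hyperplane sections of the cube~\citep{Ball1986,Ivano2021} to obtain $\Pr[x\in H']\le\sqrt{m}/(\sqrt{2}M)\le\delta$. Your Schwartz--Zippel argument is more elementary: you reduce to a single non-trivial linear form $b^\top p$ (which is affine in $x$), and observe that conditioning on all but one grid coordinate leaves at most one bad value, giving $\Pr\le 1/(2M+1)\le\delta$. This avoids the appeal to cube-section geometry entirely and even yields a slightly sharper constant. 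The only wording to tighten is the sentence ``the constant cannot be zero, for otherwise $\mathcal{P}\subseteq H$'': when $\dim H<m-1$ you only get $\mathcal{P}\subseteq\ker b^\top\supsetneq H$, but the contradiction you actually need is already in hand, namely $b^\top q=\|b\|^2\neq 0$ for the witness $q\in\mathcal{P}$.
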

%
%
%
Lemma~\ref{lem:sample_point_first} follows by bounding the volume of the intersection of the hypercube with a linear space of dimension at most $m-1$~\citep{Ball1986,Ivano2021}, and by the fact that the bit-complexity of all elements involved is bounded.

\clearpage

\subsubsection*{Acknowledgments}

This paper is supported by the Italian MIUR PRIN 2022 Project “Targeted Learning Dynamics: Computing Efficient and Fair Equilibria through No-Regret Algorithms”, by the FAIR (Future Artificial Intelligence Research) project, funded by the NextGenerationEU program within the PNRR-PE-AI scheme (M4C2, Investment 1.3, Line on Artificial Intelligence), and by the EU Horizon project ELIAS (European Lighthouse of AI for Sustainability, No. 101120237).

%

\bibliographystyle{plainnat}
\bibliography{references}

\appendix
\newpage
\onecolumn
\section*{Appendix}
The appendixes are organized as follows:
\begin{itemize}
	\item Appendix~\ref{sec:related_works} provides a detailed discussion of the previous works most related to ours.
	\item Appendix~\ref{sec:problem_prev_works} provides a detailed discussion on instances in which the algorithmic approaches by~\citet{Letchford2009} and~\citet{Peng2019} fail.
	\item Appendix~\ref{sec:appendix_omitted_proof_main} provides all the proofs omitted from the main body of the paper.
	\item Appendix~\ref{sec:omitted_lemma_proof} discusses some useful technical details omitted from the paper.
	\item Appendix~\ref{sec:app_running_time} provides a discussion on the running time of Algorithm~\ref{alg:learning_commitment}.
	\item Appendix~\ref{sec:app_extension} provides a discussion on how the results presented in the main body of the paper generalize to the case in which there are equivalent follower's actions.
\end{itemize}

\section{Related Works}\label{sec:related_works}
\paragraph{Learning in Stackelberg games} The two works closely related to ours are the ones by~\cite{Letchford2009} ~and~\citet{Peng2019}, as discussed in Section~\ref{sec:sota}. In addition to these two works, the problem of learning optimal strategies in Stackelberg games has received significant attention from the scientific community. 
While we examined a setting with a finite number of actions and where the leader can commit to mixed strategies, \citet{Fiez2020}~study games with continuous action spaces where the leader commits to a single action. They present an algorithm based on a gradient-descent approach. 
\citet{bai2021sample}~propose a model where both the leader and the follower learn through repeated interactions, observing only noisy samples of their rewards. 
\citet{lauffer2022noregret}~study Stackelberg games characterized by a random state that influences the leader's utility and the set of available actions.

\paragraph{Learning in Stackelberg security games} 
Our work is also related to the problem of learning optimal strategies in Stackelberg security games. 
In these settings a defender needs to allocate a set of resources to protect some targets against one or multiple attackers. 
\citet{Blum2014learning}~provide an algorithm to learn an approximately optimal strategies in polynomial time in the number of targets. 
\citet{Balcan2015Commit}~consider instead a sequence of attackers of different types. 
In particular, they focus on the case in which an adversary chooses the attacker's types from a known set and they design a no-regret algorithm. 
\citet{Haghtalab2016}~focuses instead on security games where the attacker is not completely rational.
Thanks to this assumption, they develop an algorithm that can be executed offline on an existing dataset.
Finally, the algorithm proposed by \citet{Peng2019} for general Stackelberg games can be modified to tackle directly Stackelberg security games in their natural representation.


\section{Issues of the algorithms by~\citet{Peng2019}~and~\citet{Letchford2009}}\label{sec:problem_prev_works}


\subsection{Minor issue of the algorithm by~\citet{Peng2019}}


In the following, we present an instance in which the algorithm proposed by~\citet{Peng2019} is not guaranteed to terminate correctly. Intuitively, the procedure proposed by~\cite{Peng2019} maintains an upper bound for each best response region of the follower's action it has already discovered and proceeds by shrinking them by discovering new separating hyperplanes. One of the main differences with the approach by~\cite{Letchford2009} is that, whenever the intersection of the upper bounds of the discovered follower's actions has zero measure, the algorithm enumerates the vertices of each upper bound and either terminates or discovers a new follower's action. Meanwhile, the algorithm also maintains a lower bound for the best response regions of the follower's action it has discovered so far.

We consider the instance presented in Figure~\ref{fig:peng_fails_1}, in which $|\mathcal{A}_\ell|=|\mathcal{A}_f|=3$. As a first step, the algorithm proposed by~\cite{Peng2019} prescribes to randomly sample a point from the interior of $\Delta_m$, which coincides with the commitment $p \in \mathcal{P}_{3}$ in Figure~\ref{fig:peng_fails_1}. Then, it initializes the upper bound $\mathcal{U}_{3}$ with the whole simplex and queries each vertex of such upper bound. When the algorithm queries the vertex $v^1$, it also chooses one of the two followers' best responses to eventually add the commitment $v^1$ to a suitably defined lower bound. More specifically, the algorithm sets the lower bound $\mathcal{L}_{1} = \{v^1\}$ and also updates its corresponding upper bound $\mathcal{U}_{1} = \Delta_3$. Equivalently, the algorithm updates the lower bound of the follower's action $a_3$, \emph{i.e}., $\mathcal{L}_{3} = \text{co}(v^2, v^3, p)$. As a further step, the algorithm attempts to find a separating hyperplane between the upper bounds $\mathcal{U}_{1}$ and $\mathcal{U}_{3}$. To do so, it follows the procedure proposed by~\cite{Letchford2009} and deterministically computes the point $p^1$ in Figure~\ref{fig:peng_fails_1} lying at the intersection of the two upper bounds $\mathcal{U}_{1}$ and $\mathcal{U}_{3} = \Delta_3$. Thus, following the steps presented in~\cite{Letchford2009}, the procedure performs a binary search between the commitment $p^1$ and the vertex $v^1$ since $v^1 \in \mathcal{L}_{1}$ and $a_f^\star(p^1)=a_2$. If the binary search correctly terminates, it returns the vertex $v^1$ as the point ling on the hyperplane $H_{12}$. Consequently, at least one vertex of the small simplex centred in $v^1$ falls outside $\Delta_3$ and cannot be queried, leading to a failure of the procedure by~\cite{Peng2019}.

\begin{figure*}[!htp]
	\centering
	\includegraphics[width=0.36\linewidth]{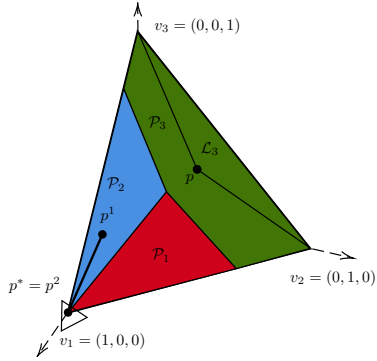}
	\caption{Instance in which the approach by~\cite{Peng2019} fails due to how $p^2$ is selected}
	\label{fig:peng_fails_1}
\end{figure*}

\subsection{The algorithm by~\citet{Peng2019} may require exponentially-many samples}

In the following, we present an SG in which the binary search procedure proposed by by~\citet{Letchford2009} and employed by the algorithm by~\citet{Peng2019} to compute a separating hyperplane requires an exponential number of samples, if the leader's strategies are \emph{not} specified with a bounded bit-complexity.
We consider the scenario depicted in Figure~\ref{fig:ubounded_binary_search_app}. This refers to an SG in which both the leader and the follower have three actions available, namely $A_\ell = \{a_1,a_2,a_3 \}$ and $A_f = \{a_1,a_2,a_3 \}$. Furthermore, follower's utilities are specified by the following matrix:
$$
u_f(a_i,a_j) = \left( \begin{matrix}
	0 & 1 & 0\\
	1 & 0 & 0\\
	0 & 0 & 1
\end{matrix}\right)$$
where the entry at position $(i,j)$ encodes follower's payoff $u_f(a_i,a_j)$ when the leader plays action $a_i \in {A}_{\ell}$ and the follower plays action $a_j \in {A}_{f} $.
The separating hyperplanes that define the three follower's best-response regions are given by: 
\begin{align*}
\begin{cases}
	H_{12}:=\{p \in \mathbb{R}^m \mid p_1 - p_2 = 0\}, \\
	H_{13}:=\{p \in \mathbb{R}^m \mid p_2 - p_3 = 0\}, \\
	H_{23}:=\{p \in \mathbb{R}^m \mid p_1 - p_3 = 0\}.
\end{cases}
\end{align*}
%

\begin{figure*}[!htp]
	\centering
	\includegraphics[width=0.32\linewidth]{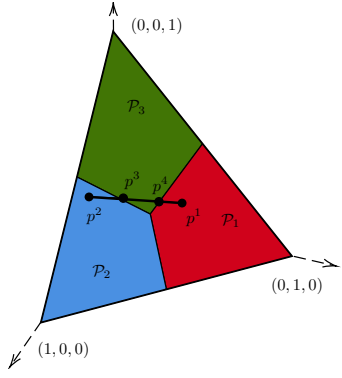}
	\caption{The binary search proposed employed by~\citet{Peng2019} requires exponential samples.}
	\label{fig:ubounded_binary_search_app}
\end{figure*}

The algorithm by~\citet{Peng2019} starts by sampling a point at random from the simplex that coincides with $p^1 \in \Delta_3$ in the example depicted in Figure~\ref{fig:ubounded_binary_search_app}. Then, the algorithm by~\citet{Peng2019} observes the follower's best-response in $p^1$, which coincides with the follower's action $a_1$.
Consequently, it initializes the relative upper bound $\mathcal{U}_1$ to the whole simplex $\Delta_m$, and the lower bound $\mathcal{L}_1$ to the set $\{p^1\}$.
As a further step, the algorithm checks the vertex $v^1=(1,0,0)$ of $\mathcal{U}_1=\Delta_m$.
Since the follower's best-response in this point is $a_2$, it initializes the upper bound $\mathcal{U}_2$ to $\Delta_m$ and the the lower bound $\mathcal{L}_2$ to the set $\{v^1\}$.
Subsequently, the algorithm by~\citet{Peng2019} computes a hyperplane defining a facet of $\mathcal{P}_1$.

To do so, it computes a point in the interior of intersection of the two upper bounds of the follower's best-response regions $\mathcal{P}_1$ and $\mathcal{P}_2$, which initially coincides with the whole simplex. If the sampled point coincides with $p^2$ in Figure~\ref{fig:ubounded_binary_search_app}, then the algorithm performs binary search between $p^1$ and $p^2$, since the follower's best response in $p^2$ coincides with $a_2$. 
%
Formally, the commitments $p^1$ and $p^2$ presented in Figure~\ref{fig:ubounded_binary_search_app} are defined as $ p^1 \coloneqq \left(\nicefrac{1}{3}-\epsilon,\nicefrac{1}{3}+\epsilon,\nicefrac{1}{3} \right) \in \mathcal{P}_{1}$ and $p^2=\left(\nicefrac{1}{2},\nicefrac{1}{10},\nicefrac{2}{5} \right) \in \mathcal{P}_{2}$, for some constant $\epsilon >0$.
Clearly, since the commitment $p^1$ is sampled at random from the simplex, a large number of bits may be needed, as $\epsilon$ could potentially be arbitrarily.
We also define $p^3 \in \Delta_m$ as the point at the intersection of the line segment connecting $p^1$ to $p^2$ and the hyperplane $H_{23}$.
Therefore, there exists a single value of $\lambda' \in (0,1)$ such that $p^3 = \lambda' p^1 + (1-\lambda')p^2$.
%
Simple calculations show that the value of such a $\lambda'$ is $({10\epsilon +1})^{-1}$, which in turn implies that $p^3 = {(30\epsilon +3)}^{-1} ( 12\epsilon+1, 6\epsilon+1, 12\epsilon+1)$. 

A crucial observation is that, to compute the point $p^4 \in \Delta_m$ at the intersection between the line segment that connects $p^1$ to $p^2$ and the hyperplane $H_{13}$, the binary search must employ at least $\mathcal{O}(\log_2(\nicefrac{d_2(p^1,p^2)}{d_2(p^1,p^3)}))$ samples, where we let $d_2(p^1,p^2)\coloneqq \| p^1-p^2\|_2$ and $d_2(p^1,p^3)\coloneqq \| p^1-p^3\|_2$. This is because, to correctly identify the point $p^4 \in H_{13} \cap \text{co}(p^1,p^2)$, the binary search has to query at least one point that belongs to the line segment connecting $p^1$ to $p^3$, as $p^4 \in  \text{co}(p^1,p^3)$. Thus, by observing that $p^1 -p^3 = {3(10\epsilon+1)}^{-1} (-30\epsilon^2 -5\epsilon, 30\epsilon^2 +7\epsilon, -2\epsilon)$, we can compute the two square distances $d_2(p^1,p^2)^2$ and $d_2(p^1,p^3)^2$ as follows.
\begin{equation*}
	d_2(p^1,p^2)^2 = \sum_{i \in [3]} (p^1_i-p^2_i)^2= \left(-\frac{1}{6}-\epsilon\right)^2 +  \left(\frac{7}{30}\right)^2 + \left(\frac{1}{15}\right)^2
	= 2\epsilon^2 +\frac{4}{5}\epsilon +\frac{13}{150}.
\end{equation*}

Similarly, the square distance $d_2(p^1,p^3)^2$ is equal to:
\begin{align*}
	d_2(p^1,p^3)^2 &= \sum_{i \in [3]} (p^1_i-p^3_i)^2 \\
	&= \frac{1}{9(10\epsilon+1)^2} \left((30\epsilon^2+5\epsilon)^2 + (30\epsilon^2+7\epsilon)+ 4\epsilon^2 \right)	 \\
	&= \frac{1800\epsilon^4 +720\epsilon^3 +78\epsilon^2}{9(100\epsilon^2 +20\epsilon +1)}
\end{align*}
As a result, the ratio between such distances is given by:
\begin{align*}
	\frac{d_2(p^1,p^2)}{d_2(p^1,p^3)} = \sqrt{\frac{9(2\epsilon^2 +\frac{4}{5}\epsilon +\frac{13}{150})(100\epsilon^2+20\epsilon+1)}{1800\epsilon^4 +720\epsilon^3 +78\epsilon^2}} = \mathcal{O}\left(\frac{1}{\epsilon}\right).
\end{align*}

As previously observed, the number of samples required by the binary search to be correctly executed is of the order $\mathcal{O}(\log_2(\nicefrac{d_2(p^1,p^2)}{d_2(p^1,p^3)}))$ and, consequently, $\mathcal{O}(\log_2(\nicefrac{1}{\epsilon}))$. Therefore, when $\epsilon=2^{-2^L}$, the sample complexity of such a procedure is $\mathcal{O}(2^L)$, showing an exponential dependence in the parameter $L$. Intuitively, this is because the point $p^1$ is taken arbitrarily close to the commitment $p=(\nicefrac{1}{3},\nicefrac{1}{3},\nicefrac{1}{3}),$ which lies at the intersection of the three best response regions. As a consequence, it is necessary to bound the bit complexity of the leader's commitments to avoid an exponential dependence on the number of samples required to compute an optimal solution.\footnote{Given $n \in \mathbb{N}_{+}$, we let $[n]\coloneqq \{1,\dots,n\}$.}

\subsection{Minor issue in the procedure by~\citet{Letchford2009}}

In this section, we present an instance in which the algorithm proposed by~\cite{Letchford2009} and employed by~\cite{Peng2019} to compute a separating hyperplane does not terminate correctly.
Given two overestimates $\mathcal{U}_j$ and $\mathcal{U}_k$ of some best-response regions $\mathcal{P}_j$ and $\mathcal{P}_k$ such that $\text{vol}(\mathcal{U}_j \cap \mathcal{U}_k)>0$, this procedure finds a separating hyperplane $H_{jh}$ for some $h \in [h]$.
Observe that $\mathcal{P}_j$ and $\mathcal{P}_k$ may not share a facet, and thus $a_h$ can be different from $a_k$.
To identify this hyperplane, the algorithm deterministically computes a point $p^1 \in \text{int}(\mathcal{U}_j \cap \mathcal{U}_k)$.
Subsequently, it performs a binary search on the segment connecting $p^1$ to a suitable point $p^2$, where $p^2$ belongs to either $\mathcal{L}_j$ or $\mathcal{L}_k$ depending on the follower's best response in $p^1$.
Then, a random simplex of dimension $m$ is built centered at the point returned by the former procedure, and new binary searches are performed between the vertices of such a simplex in which different actions are implemented. 
In this way, the algorithm eventually computes a set of linearly independent points lying on the hyperplane, which is enough to uniquely identify it.

\begin{figure*}[!htp]
	\centering
	\includegraphics[width=0.32\linewidth]{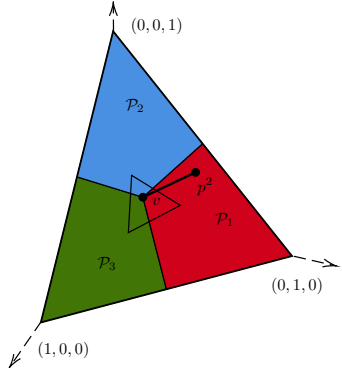}
	\caption{Instance in which the procedure by~\cite{Letchford2009} to compute hyperplanes fails}
	\label{fig:letchford_fails}
\end{figure*}

The issue with the algorithm proposed by~\cite{Letchford2009} lies in the choice of the segment where the binary search is executed. 
Specifically, we observe that the point $p^1$ should \emph{not} be chosen deterministically.
In the following, we consider an instance (see Figure~\ref{fig:letchford_fails}) where $|\mathcal{A}_\ell|=|\mathcal{A}_f|=3$. 
Furthermore, we assume that the three follower's best-response regions share a vertex, \emph{i.e.}, $\mathcal{P}_{1} \cap \mathcal{P}_{2} \cap \mathcal{P}_{3} = {v}$ and the action actually played by the follower in such a commitment is equal to $a_3$, namely $a_f^\star(v)=a_3$.  
The algorithm by~\citet{Peng2019} samples a random point from the simplex, which we suppose to be the point $p^2$ in Figure~\ref{fig:letchford_fails}.
Since $a^\star_f(p^2)=a_1$, the algorithm initializes the upper bound $\mathcal{U}_1$ of $\mathcal{P}_1$ to the simplex $\Delta_3$, and the corresponding lower bound $\mathcal{L}_1$ to $\{p^1\}$.
Subsequently, it queries the vertex $v^1=(1,0,0)$ of $\mathcal{U}_1$.
Thus, it discovers the follower's action $a_3$ and initializes $\mathcal{U}_3 = \Delta_3$ and $\mathcal{L}_3 = \{a_3\}$.

At this point, the Algorithm by~\citet{Peng2019} employs the procedure proposed by~\citet{Letchford2009} to compute a new separating hyperplane.
In particular, it computes deterministically a point $p^1$ in the interior of $\mathcal{U}_1 \cap \mathcal{U}_3=\Delta_3$.
If such a point is the vertex $v$ at the intersection of the three best-response regions, then the algorithm performs a binary search on the segment connecting $p^1=v$ and $p^2 \in \mathcal{L}_1$, as $a^\star_f(v)=a_3 \neq a_1$.
Thus, the point $v$ coincides with the commitment returned by such a procedure, which belongs to the hyperplane $H_{13}$. 
As a further step, the procedure by~\cite{Letchford2009} attempts to organize the vertices of a small simplex centered in $v$ into two sets in which either $a_1$ or $a_3$ is a best response. 
However, such a procedure fails no matter the how small the simplex is drawn, as its three vertices lie in three (and not two) different best-response regions. 
This example shows that if the point $p^1$ over which the binary search is performed is chosen deterministically, there may be instances in which the procedure proposed by~\cite{Letchford2009} fails with probability one.

\section{Omitted Proofs}\label{sec:appendix_omitted_proof_main}


\subsection{Proofs Omitted from \texttt{Main-Algorithm}}
\vertextwo*
\begin{proof}
($\Leftarrow$) If $p' \in \preg_{j}$, then $\widetilde{p}$ is also in $\preg_{j}$ by convexity of $\preg_{j}$.

($\Rightarrow$) In order to prove that $\widetilde{p} \in \preg_{j}$ implies $p' \in \preg_{j}$, we assume by contradiction that $p' \not \in \mathcal{P}_{j}$.
This immediately implies that $p' \in \text{int}(\mathcal{H}_{kj} \cap \Delta_m)$ for some follower's action $a_k \in A_f : a_k \ne a_j$, otherwise we would have $p' \in \mathcal{P}_j$ (unless $\mathcal{P}_j = \Delta_m$, but in that case the contradiction is readily obtained).
%
%
Let $p^\circ \in \Delta_m$ be a leader's strategy defined as $p^\circ \coloneqq \text{co}(p, p') \cap H_{kj}$. Intuitively, $p^\circ$ coincides with the intersection between the separating hyperplane $H_{kj}$ and the line segment connecting the two points $p$ and $p'$.
By definition of $p^\circ$, there must exist a weight $\lambda^\circ \in (0,1)$ such that $p^\circ = \lambda^\circ p + (1- \lambda^\circ)p'$, since $p^\circ \in \text{co}(p,p')$. Notice that $\lambda^\circ \neq 0$, otherwise we would have $p' \in H_{kj}$, contradicting the assumption that $p' \not \in \preg_{j}$. Furthermore, the value of $\lambda^\circ$ can be easily computed as follows:
\begin{equation*}
		\lambda^\circ := \frac{\left|  \displaystyle\sum\limits_{a_i \in A_\ell} p'_i \Big(u_f(a_i,a_j) -u_f(a_i,a_k) \Big) \right | }{\left | \displaystyle\sum\limits_{a_i \in A_\ell} (p_i- p'_i) \Big(u_f(a_i,a_j) -u_f(a_i,a_k) \Big) \right|}.
\end{equation*}
Notice that the numerator in the definition of $\lambda^\circ$ is the sum of $m$ terms with bit-complexity bounded by $B+4L$, as a consequence of Lemma~\ref{lem:sum_rationals} in Appendix~\ref{sec:omitted_lemma_proof}. Thus, according to the lemma, the absolute value of their sum is lower bounded by $2^{-m(B+4L)}$, since $\lambda^\circ \neq 0$. Moreover, the denominator of $\lambda^\circ$ can be upper bounded as follows:
\begin{align*}
& \left|\sum_{i=1}^{m} (p_i- p'_i)  \Big(u_f(a_i,a_j) -u_f(a_i,a_k)\Big) \right | \le \\   & \quad \quad \quad  \left|\sum_{i=1}^{m}  p_i \Big(u_f(a_i,a_j) -u_f(a_i,a_k) \Big)\right| + \left|\sum_{i=1}^{m} p'_i \Big(u_f(a_i,a_j) -u_f(a_i,a_k)\Big)\right| \le 2.
\end{align*}
Thus, the value of $\lambda^\circ$ can be lower bounded by $ {2^{-m(B+4L)-1}}$.
Therefore, by observing that $\lambda < \lambda^\circ$, we have $\widetilde{p} \in \text{co}(p^\circ, p')$, $\widetilde{p} \neq p^\circ$, and $\widetilde{p} \neq p'$. Thus, $\widetilde{p} \notin \preg_{j}$, since $\widetilde{p} \in \text{int}(\mathcal{H}_{kj} \cap \Delta_m)$.
%
This is the contradiction which concludes the proof.
\end{proof}

\VertexStackelberg*
\begin{proof}
	Let $p^\star \in \Delta_m$ be an optimal strategy to commit to.
	
	First, let us consider the case in which $a_f^\star(p^\star) = a_j$ for some follower's actions $a_j \in A_f$ with $\text{vol}(\mathcal{P}_j)>0$. In such a case, the statement trivially holds since leader's expected utility is linear in $p$ over the polytope $\mathcal{P}_j$, and, thus, there must exist an optimal commitment (possibly different from $p^\star$) coinciding with one of the vertices of $\mathcal{P}_j$.
	
	Now, let us consider the case in which $a_f^\star(p^\star) = a_k$ for some follower's action $a_k \in A_f$ with $\textnormal{vol}(\mathcal{P}_k) = 0$.
	As a first step, we show that $p^\star$ is on the boundary (relative to $\Delta_m$) of some best-response region $\mathcal{P}_j$ with $\textnormal{vol}(\mathcal{P}_j) > 0$.
	Suppose by contradiction that $p^\star \in \textnormal{int}(\mathcal{P}_j)$.
	Then, $a_k$ is a best-response in $\mathcal{P}_j \cap \mathcal{H}_{kj}$, and it is easy to see that this set has positive volume. Hence, we reach a contradiction with $\textnormal{vol}(\mathcal{P}_k) = 0$.
	
	%
	%
	%
	As a consequence, we immediately get that the point $p^\star$ must belong to the boundary (relative to $\Delta_m$) of some $\mathcal{P}_j$ with $\textnormal{vol}(\mathcal{P}_{j}) > 0$.
	Moreover, $p^\star$ must also belong to the separating hyperplane $H_{jk}$, since $ \{a_k,a_j\} \subseteq A(p^\star)$.
	As a result, $p^\star \in \mathcal{P}_j \cap H_{jk}$.
	%
	%
	%
	Moreover, we have that $\mathcal{P}_j \cap H_{jk} \subseteq \mathcal{P}_k$. This holds since in any point in the set $\mathcal{P}_j \cap H_{jk}$ it must be the case that $a_j$ is a best response, and the utility of actions $a_j$ is equal to the one of action $a_k$ by definition of $H_{jk}$.
	
	Finally, it is easy to see that  $V(\mathcal{P}_j \cap H_{jk} ) \subseteq V(\mathcal{P}_j )$ by the definition of $\mathcal{P}_j$.
	By the convexity of $\mathcal{P}_j \cap H_{jk}$ and the fact that leader's expected utility is linear over $\mathcal{P}_j \cap H_{jk}$ when follower's best response is fixed to $a_f^\star(p^\star) = a_k$, we can conclude that there exists at least one vertex $p$ of $\mathcal{P}_j \cap H_{jk}$, such that the utility of the leader is equal to the one in $p^\star$ assuming that follower's best response is $a_f^\star(p^\star) = a_k$. The proof is concluded noticing that $a_k\in A(p)$ belongs to the set of best responses, and that ties are broken in favor of the leader.
	
\end{proof}

\maintheorem*
\begin{proof}
	
	First, we prove that, with probability at least $1- \delta n^2 (2(m+n)^2 + n)$, Algorithm~\ref{alg:learning_commitment} returns an optimal commitment in a finite number of steps. To do this, we employ an inductive argument on the different executions of the while loop at Line~\ref{line:partition_loop3}. Formally, we show that, if before the execution of the while loop at Line~\ref{line:partition_loop3} in Algorithm~\ref{alg:learning_commitment} it holds $\textnormal{vol}(\mathcal{P}_{j}) > 0$ and $\mathcal{U}_j = \preg_j$ for every action $a_j \in \mathcal{C}$, then such a while loop terminates in a finite number of rounds. Furthermore, the set $\mathcal{C}$ is updated as $\mathcal{C} \gets \mathcal{C} \cup \{a_k\}$, where $\textnormal{vol}(\mathcal{P}_{k}) > 0$ and $\mathcal{U}_k = \preg_k$, with probability at least of $1- \delta n (2(m+n)^2 + n ) $.
	
	We start by noticing that, with probability at least $1-\delta n^2$, the new action $a_k \in {A}_f$ added to $\mathcal{C}$ is such that $\textnormal{vol}(\mathcal{P}_{k}) > 0$. This is because the commitment $p^\textnormal{int}$ is sampled outside the union of the best-response regions $\mathcal{P}_j $ with $a_j \in \mathcal{C}$. Indeed, for every action $a_j \in \mathcal{C}$, it holds that $\mathcal{P}_{j} = \mathcal{U}_{j}$, thanks to the inductive hypothesis. As a result, $a^\star_f(p^{\textnormal{int}})\ne a_j$ for every $a_j \in \mathcal{C}$. Furthermore, with probability at least $1 - \delta n^2$, the strategy $p^\textnormal{int}$ does not belong to any separating hyperplane, as guaranteed by Lemma~\ref{lem:sample_point_first} and a union bound on the total number of separating hyperplanes, which is at most $n^2$. Consequently, $p^{\textnormal{int}} \in \textnormal{int}(\mathcal{P}_k)$ for some $a_k \in {A}_f$, proving that $\textnormal{vol}(\mathcal{P}_k)>0$.

	As a further step, we prove that if $p^{\textnormal{int}} \in \textnormal{int}(\mathcal{P}_k)$ with $\textnormal{vol}(\mathcal{P}_k)>0$, then with probability at least $1- 2\delta n(m+n)^2$, the while loop at Line~\ref{line:partition_loop3} terminates in a finite number of rounds. This is because, with probability at least $1- 2\delta n(m+n)^2$, during the execution of such a loop, the algorithm never computes the same separating hyperplane $H_{kl}$ with $a_l  \in A_\ell$ multiple times.
	Clearly, this hold only in the case, whenever invoked, Algorithm~\ref{alg:hyperplane} correctly identifies the separating hyperplane. Observe that during the execution of the while loop at Line~\ref{line:partition_loop3}, Algorithm~\ref{alg:hyperplane} is always invoked receiving as input a commitment $p^{\textnormal{int}} \in \textnormal{int}(\mathcal{P}_k)$ and a vertex $v$ in which $a^\star_f(v) \ne a_j$, as ensured by Lemma~\ref{lem:vertex_two}.
	Furthermore, the point $p^\circ$ found by binary search at Line~\ref{line:binary_search_hp_1} in Algorithm~\ref{alg:hyperplane} belongs to both a separating hyperplane $H_{kl}$ and the interior of $\mathcal{U}_k$, thus it does not belong to a previously discovered hyperplanes defining $\mathcal{U}_k$.
	As a result, Algorithm~\ref{alg:hyperplane} returns a new hyperplane with probability at least $1- 2\delta (m+n)^2$ thanks to Lemma~\ref{lem:find_hyperplane}. Furthermore, by observing that Algorithm~\ref{alg:hyperplane} is invoked at most $n$ different times and each time it returns a new hyperplane with a probability of $1-2\delta (m+n)^2$, we have that the while loop at Line~\ref{line:partition_loop3} terminates in a finite number of steps with probability at least $1-2\delta n (m+n)^2$, by employing a union bound over all the possible calls to Algorithm~\ref{alg:hyperplane}.
	%
	
	Moreover, we show that, if during the execution of the while loop at Line~\ref{line:partition_loop3} Algorithm~\ref{alg:hyperplane} correctly identifies a new separating hyperplane for the region $\mathcal{P}_k$ whenever invoked, then $\mathcal{U}_k = \preg_k$. This is because, throughout the execution of the while loop at Line~\ref{line:partition_loop3}, it always holds that $\mathcal{P}_k \subseteq \mathcal{U}_k$, by how Algorithm~\ref{alg:learning_commitment} works. At the same time, when such a loop terminates, we have $\mathcal{U}_k= \text{co}(V(\mathcal{U}_k))$ with $a_f^\star(p)=a_k$ for each $p \in V(\mathcal{U}_k)$, and thus $\mathcal{U}_k \subseteq \mathcal{P}_k$, proving that $\mathcal{P}_k=\mathcal{U}_k$.
	
	As a result, after the execution of the while loop at Line~\ref{line:partition_loop3}, by combining the above observations, we have that the set $\mathcal{C}$ is updated as $\mathcal{C} \gets \mathcal{C} \cup \{a_k\}$, where $\textnormal{vol}(\mathcal{P}_{a_k}) > 0$, $\mathcal{U}_k = \preg_k$, and $a_k \not \in \mathcal{C}$ with probability at least $1- \delta n (2(m+n)^2 + n )$. Moreover, we observe that the initial step of the induction, \emph{i.e.}, when $\mathcal{C}=\varnothing$, can be proved with the same argument as above. Therefore, we conclude that when Algorithm~\ref{alg:learning_commitment} terminates, the union of the sets $\mathcal{U}_j$ with $a_j \in \mathcal{C}$ coincides with the simplex $\Delta_m$ with probability at least $1-\delta n^2(2(m+n)^2 + n)$. Such a result follows by induction and employing a union bound over all the possible executions of the while loop at Line~\ref{line:partition_loop3}, which are at most $n$. Therefore, by means of Lemma~\ref{lem:vertex_optimal}, we have that Algorithm~\ref{alg:learning_commitment} returns an optimal strategy to commit to in a finite number of steps wit probability at least $1-\delta n^2(2(m+n)^2 + n)$.
	
	Finally, we observe that, with probability at least $1- \delta n^2 (2(m+n)^2 + n)$, the number of samples required by Algorithm~\ref{alg:learning_commitment} is equal to ${\mathcal{O}}\left(n^2\left(m^7L\log\left(\frac{1}{\delta}\right)+ \binom{m+n}{m} \right)\right)$. This follows from the observation that during the execution of the loop at Line~\ref{line:partition_loop3}, Algorithm~\ref{alg:hyperplane} is invoked at most $n$ times, and the number of samples required by such a procedure is ${\mathcal{O}}\left(m^7L\log\left(\frac{1}{\delta}\right) \right)$, as guaranteed by Lemma~\ref{lem:find_hyperplane}. Additionally, before computing an hyperplane, in the worst-case scenario Algorithm~\ref{alg:learning_commitment} has queried every vertex of the region $\mathcal{U}_k$. Since the number of vertexes of each $\mathcal{U}_k$ is bounded by $\binom{m+n}{m}$, and such a check is performed at most $n$ times during the execution of the loop, the number of samples required to execute the while loop at Line~\ref{line:partition_loop3} is ${\mathcal{O}}\left(n\left(m^7L\log\left(\frac{1}{\delta}\right)+ \binom{m+n}{m} \right)\right)$. Consequently, observing that the number of follower's actions is $n$, the while loop at Line~\ref{line:partition_loop3} is executed at most $n$ times. Thus, the overall number of samples required by Algorithm~\ref{alg:learning_commitment} is of the order of ${\mathcal{O}}\left(n^2\left(m^7L\log\left(\frac{1}{\delta}\right)+ \binom{m+n}{m} \right)\right)$. As a result, since $\zeta = \delta n^2 (2(m+n)^2 + n)$, with probability at least $1-\zeta$, the number of samples required by Algorithm~\ref{alg:learning_commitment} is $\widetilde{\mathcal{O}}\left(n^2\left(m^7L\log\left(\nicefrac{1}{\zeta}\right)+ \binom{m+n}{m} \right)\right)$, which concludes the proof.
\end{proof}
\subsection{Omitted Proofs from \texttt{Separating-Hyperplane}}
%
%
%
%

\vectorDifferent*

\begin{proof}
	First, we observe that $p^\circ \in \textnormal{int}(\Delta_m)$. This is because the binary search is always performed between an interior point $p \in \textnormal{int}(\Delta_m)$, in which $a^\star_f(p)= a_j$, and another point that is either an interior point or a vertex $v$ in which follower's best response does not coincide with $a_j$ (as guaranteed by Algorithm~\ref{alg:learning_commitment} and Lemma~\ref{lem:vertex_two}). Thus, the resulting $p^\circ$ cannot coincide with the vertex $v$ itself. Next, we show that each $p^{+i}$ with $i \in [m]$ belongs to $\textnormal{int}(\Delta_m)$. To prove that, notice that:
	\begin{equation*}
		\sum_{i \in [m]} p^{+i}_i = \sum_{i \in [m]} p^\circ_i + \sum_{i \in [m]} \alpha (p_i-p^\circ_i) = 1
	\end{equation*}
	and:
	\begin{equation*}
		p^{+i}_j = p^\circ_j + \alpha (p_j-p^\circ_j) \ge \frac{1}{2^{B}}-  \frac{1}{m 2^{m(B+4L)-1} }  > 0,
	\end{equation*}	
	for every $j \in [m]$. Notice that the inequality above holds because $p_j^\circ$ is at least $2^{-B}$ for every $j \in [m]$, given that the bit-complexity of $p^\circ$ is bounded by $B$, and the fact that $p^\circ$ belongs to the interior of $\Delta_m$. Therefore, noticing that $\alpha (p_j-p^\circ_j) \ge -\alpha$ for every $j \in [m]$, it follows that each $p^{+i}$ belongs to $\Delta_{m}$, with each component strictly greater than zero. Thus, this proves that all the points $p^{+i}$ belong to $\textnormal{int}(\Delta_{m})$.

	As a further step, we prove that, with probability at least $1-\delta m$, all the points $p^{+i}$ with $i \in [m]$ computed by Algorithm~\ref{alg:hyperplane} are linearly independent.
	To do that, at each iteration $k \in [m]$ of the for loop at Line~\ref{alg:hyperplane_for}, we define ${H}'_{k}:=\textnormal{span}\left\{ p^{+1}, \dots ,p^{+{k}} \right\}$ as the linear space generated by the linear combinations of the points $p^{+k}$ computed up to round $k \in [m]$.
	To ensure that all these commitment are linearly independent, we have to guarantee that $p^{+(k+1)} \not \in {H}'_{k}$ for every $k \in [m-1]$. 
	This condition is true only if the point $p$ sampled from the facet $H_{k+1} \cap \Delta_m$ does not belong to the linear space defined as $${H}''_{k} \coloneqq \left\{ \sum_{l=1}^{k} \lambda_l p^{+l} + \left(1-\nicefrac{1}{\alpha} \right)p^\circ \mid \lambda_l \in \mathbb{R} \quad \forall l \in [k] \right\}.$$ This is because, with a simple calculation, we can show that $p^{+(k+1)}$ belongs to ${H}'_{k}$ if and only if $p$ sampled from $H_{k+1} \cap \Delta_m$ belongs to ${H}''_{k}$. As a result, we have to bound the probability that $p \not \in {H}''_{k} \cap (H_{k+1} \cap \Delta_m)$ for every $k \in [m-1]$. We notice that, when $k \le m-2$, the linear space ${H}''_{k}$ has dimension at most $m-2$ and does not coincide with $H_{k+1} \cap \Delta_m$, thus ${H}''_{k} \cap (H_{k+1} \cap \Delta_m)$ has dimension at most $m-3$. When $k=m-1$, we observe that $p$ sampled from $\textnormal{int}(H_1 \cap \Delta_m)$ is such that $p_i>0$ for every $i \ge 2$, and thus $p \not \in H_m$. As a result, ${H}''_{m-1} \cap \Delta_m \not \subseteq H_{m}$. Consequently, ${H}''_{k} \cap (H_{k+1} \cap \Delta_m)$ has a dimension of at most $m-3$. Thus, it holds that $H_{k+1} \cap \Delta_m \not \subseteq {H}''_{k}$ for every $k \in [m-1]$. Consequently, Algorithm~\ref{alg:sample_point} ensures that the probability of sampling $p \in H''_{k} \cap \left(H_{k+1} \cap \Delta_m\right)$ is at most $\delta$, for every $k \in [m-1]$. Therefore, by employing a union bound, all the points $p^{+k}$ with $k \in [m]$ are linearly independent with probability at least $1-\delta m$
	
	
	To conclude the proof, we bound the probability that each $p^{+i}$ with $i \in [m]$ does not belong to the separating hyperplane $H_{jk}$. We observe that $p^{+i}$ belongs to $H_{jk} \cap \Delta_m$ only when the point $p$ sampled from $H_i \cap \Delta_m$ belongs to $H_{jk}$. This is because $p^{+i}$ is on the line segment connecting $p$ to $p^\circ$, and such a segment belongs to $H_{jk}$ only if $p$ also belongs to $H_{jk}$. Additionally, we notice that, for every $i \in [m]$, the linear space $H_{jk} \cap \left(H_i \cap \Delta_m\right)$ has dimension $m-2$ only in the case in which $H_i = H_{ij}$. Such a scenario is not possible since $p^\circ \in \textnormal{int}(\Delta_m) \cap H_{ij}$, and thus $p^\circ_i \neq 0$, which in turn implies that $p^\circ \not \in H_i$ for every $i \in [m]$. Therefore, the dimension of $H_{jk} \cap \left(H_i \cap \Delta_m\right)$ is at most $m-3$. As a result, Algorithm~\ref{alg:sample_point} ensures that the probability of sampling $p \in H_{jk} \cap \left(H_i \cap \Delta_m\right)$ is at most $\delta$, since $\left( H_i \cap \Delta_m \right) \not \subseteq H_{jk}$ for every $i \in [m]$. As a result, by employing a union bound over all the points sampled from the $m$ facets of the simplex, we have that $p^{+i} \not \in $ $H_{jk}$ for all $i \in [m]$ with probability at least $1-\delta m$. 
	
	Finally, by employing a union bound, we have that the lemma holds with probability at least $1-2\delta m$.
\end{proof}

\intersectionNull*

\begin{proof}
	As a first step, we bound the probability that the point $p^\circ$ computed at Line~\ref{line:binary_search_hp_1} in Algorithm~\ref{alg:hyperplane} belongs to a \emph{single} hyperplane $H_{jk}$ with $k \in [n]$ or to multiple coinciding hyperplanes $H_{jk}$ for $k \in [n]$, and thus, $p^\circ$ belongs to the interior of a facet of $\mathcal{P}_j$.
	Let $H, H’$ be either separating hyperplanes or boundary planes defining two different facets of $\mathcal{P}_{j} \subseteq \Delta_m$.
	In the following, we bound the probability that the line segment connecting $p^1$ to $p^2$ does not intersect any $H \cap H' $ with non-empty intersection.
	To prove that, we define $\mathcal{P} = H \cap H'$ as the linear space defined as the non-empty intersection of $H$ and $H'$. We observe that such a linear space has a dimension of $m-2$, given that $H$ and $H'$ are distinct and non-parallel hyperplanes corresponding to different facets of $\mathcal{P}_{j}$. 
	Consequently, there exists a linear space $H''$, defined as the linear combination of the point $p^2$ where the binary search is conducted and the linear space $\mathcal{P}$, formally defined as $H'' = \textnormal{span} \{ p^2, \mathcal{P} \}$, with a dimension at most $m-1$. 
	To prove the statement of the lemma, we first consider the case in which the randomly sampled point $p^1$ is such that $a^\star_f(p^1)=a_j$. Then, in order to ensure that the binary search determines a point $p^\circ$ that belongs to a single or possibly coinciding hyperplanes $H_{jk}$, we have to bound the probability that $p^1 \not \in H''$. We  notice that such a probability is greater or equal to  $1-\delta (m+n)^2$.
	Such a result follows by means of both Lemma~\ref{lem:sample_point_first} and a union over all the possible pairs of hyperplanes $H$ and $H'$ defining the facets of $\mathcal{P}_j$, which are at most $\binom{m+n}{2} \le (m+n)^2$.
	Furthermore, if $a^\star_f(p^1) \neq a_j$ and $p^2=p^{\textnormal{int}}$, we can employ the same argument to bound the probability that $p^1$ does not belong to $H''$.
	Thus, by combining the two previous results and employing the law of total probability,  the point $p^\circ$ lies on a single or possibly coinciding separating hyperplanes $H_{jk}$ with $k \in [n]$ with probability of at least $1-\delta (m+n)^2$.
	
	
	We also define $\mathbb{S}^{m}_{\epsilon}(p^\circ)\coloneqq \left\{ x \in \mathbb{R}^m \,|\, \| x- p^\circ\|_2 \le \epsilon \right\}$ as the sphere in $\mathbb{R}^m$ of radius $\epsilon$ centered in $p^\circ$ and we show that when $\epsilon<\epsilon'=  2^{-m(B+4L)}/m$ such a sphere has null intersection with the boundaries of both $\mathcal{P}_j$ and $\mathcal{P}_k$.
	To do that, we prove that the $\| \cdot \|_2$-distance between the commitment $p^{\circ}$ and any separating hyperplane defining the facets of both $\mathcal{P}_j$ and $\mathcal{P}_k$ can be computed as follows:
	\begin{equation*}
		d_2(p^{\circ},\widetilde H_{lm}) = \left| \frac{ \displaystyle\sum_{i=1}^{m} p^{\circ}_i \Big(u^f(a_i,a_{l}) -u^f(a_i,a_{m})\Big)}{ \sqrt{\displaystyle\sum_{i=1}^{m} \Big( u^f(a_i,a_{l}) -u^f(a_i,a_{m}) \Big)^2} } \right| \ge \frac{1}{m 2^{m(B+4L)}}
	\end{equation*}
	for every $(l,m) \in \bigcup_{k' \in [n]}(k',j) \cup \bigcup_{ j' \in [n]}(k,j')$ such that $H_{lm}$ does not coincide with $H_{jk}$. We observe that the inequality follows by means of Lemma~\ref{lem:sum_rationals}, as the numerator of the above fraction is the sum of $m$ terms whose bit complexities are bounded by $B+4L$. Furthermore, the denominator of such a quantity is at most $m$. Analogously, we can prove that the same lower bound holds for each boundary plane $H_i$ with $i \in [m]$, defining the facets of either $\mathcal{P}_j$ or $\mathcal{P}_k$. Consequently, since $p^{\circ}$ belongs to a single hyperplane $H_{jk}$ with $\textnormal{vol}(\mathcal{P}_{k})>0$, we have that the sphere  $\mathbb{S}^{m}_{\epsilon}(p^\circ) \cap \Delta_m$ belongs to $\textnormal{int}(\mathcal{P}_j \cup \mathcal{P}_k)$.
	
	We also show that all the commitments $p^{+i}$ with $i \in [m]$ computed in Algorithms~\ref{alg:hyperplane} are such that $p^{+i} \in \mathbb{S}^{m}_{\epsilon}(p^{\circ}) $ with $\epsilon < \epsilon' =~ 2^{-m(B+2L)}/m$. Formally, we have:
	\begin{equation*}
		\| p^{+i} - p^{\circ}\|_2 = \| p^{\circ} + \alpha (p - p^{\circ}) - p^{\circ}\|_2 = \alpha \| p^{\circ}- p \|_2 \le 2^{-m(B+4L)-1/2} / m < \epsilon'.
	\end{equation*}
	The first equality holds because of the definition of $p^{+i}$, while the second inequality holds because $\max_{p,p' \in \Delta_m} \| p-p'\|_2 \le \sqrt{2}$ and thanks to the definition of $\alpha =  2^{-m(B+4L)-1}/m$. Thus, observing that each $p^{+i} \in \mathbb{S}^{m}_{\epsilon}$ for each $i \in [m]$, we have that in each $p^{+i}$ either $a_j$ or $a_k$ is a best-response.
	
	Finally, we show that if $a^\star_f(p^{+i})=a_j$, then $a^\star_f(p^{-i})=a_k$. More specifically, if $a^\star_f(p^{+i})=a_j$ and $p^{+i} \not \in H_{jk} $, then the following holds:
	\begin{equation*}
		\sum_{i=1}^{m} p^{+i}_i \Big(u_f(a_i,a_j) - u_f(a_i,a_k) \Big) = \alpha \sum_{i=1}^{m} (p^\circ_i - p_i) \Big(u_f(a_i,a_j) - u_f(a_i,a_k) \Big) > 0.
	\end{equation*}
	This, in turn, implies that:
	\begin{equation*}
		0 > -\alpha \sum_{i=1}^{m} \Big(p^\circ_i - p_i) (u_f(a_i,a_j) - u_f(a_i,a_k) \Big) = \sum_{i=1}^{m} p^{-i}_i \Big(u_f(a_i,a_j) - u_f(a_i,a_k) \Big),
	\end{equation*}
	since $p^\circ \in H_{jk}$. Finally, with the same argument, we can show that if $a^\star_f(p^{+i})=a_k$, then $a^\star_f(p^{-i})=a_j$, concluding the proof.
\end{proof}

\hyperSegmen*
\begin{proof}

To prove the lemma we observe that all the points $p^{+i}$ with $i \in [m]$ are linearly independent and do not belong to the separating hyperplane $H_{jk}$ with probability $1-2\delta m$, by means of Lemma~\ref{lem:lin_indep}. Furthermore, employing Lemma~\ref{lem:hyper_ball}, we have that in each point $p^{+i}$ either $a_j$ or $a_k$ is a follower's best response. As a result, employing a union bound, we can compute by binary search $m-1$ points lying on the separating hyperplane $H_{jk}$ with probability at least $1-2 \delta(m+n)^2$.

To conclude the proof, we observe that the binary search to compute $p^\circ$ is consistently performed between a point sampled according to Algorithm~\ref{alg:sample_point}, whose bit-complexity is bounded by ${\mathcal{O}}(Lm^3 + \log(\frac{1}{\delta}))$ as a consequence of Lemma~\ref{lem:sample_point_first}, and either another interior point or a vertex. Therefore, since the bit-complexity of each vertex can be bounded by $O(m^2L)$ by means of Lemma~\ref{lem:vertex_bits}, the number of rounds required to compute $p^\circ$ by means of Algorithm~\ref{alg:binary_search} is ${\mathcal{O}}(m^4L + m\log(\frac{1}{\delta}))$ as prescribed by Lemma~\ref{lem:binary_search}. Furthermore, the bit-complexity of $p^\circ$ is bounded by ${\mathcal{O}}(m^4L + m\log(\frac{1}{\delta}))$. As a result, the bit-complexity of the parameter $\alpha$ is equal to $\mathcal{O}(m^5L + m^2\log(\frac{1}{\delta}))$. Thus, the bit-complexity of each point $p^{+i} = p^\circ + \alpha (p - p^\circ)$ with $p \in H_i \cap \Delta_m$ is bounded by $\mathcal{O}(m^5L + m^2\log(\frac{1}{\delta}))$. Consequently, each of the final $m-1$ binary searches performed to compute the $m-1$ points on the separating hyperplane requires at most $\mathcal{O}(m^6L + m^3\log(\frac{1}{\delta}))$ samples and returns a points on the hyperplane with bit-complexity bounded by $\mathcal{O}(m^6L + m^3\log(\frac{1}{\delta}))$. Therefore, Algorithm~\ref{alg:hyperplane} requires $\mathcal{O}(m^7L + m^4\log(\frac{1}{\delta}))$ samples by accounting for all the $m-1$ binary search performed.
\end{proof}


\subsection{Proofs Omitted from \texttt{Binary-Search}}
\binarySearch*
\begin{proof}
	We notice that the point $p^\circ$ returned by Algorithm~\ref{alg:binary_search} lies on line segment that connect $p^1$ to $p^2$. This implies that $p^\circ \gets p^1 + \lambda^\circ (p^2 - p^1)$ for some $\lambda^\circ \in [0,1]$. Furthermore, $p^\circ$  also lies on a separating hyperplane $H_{jk}$, thus the following holds $\sum_{i \in [m]}p^\circ _i (u_f(a_i,a_j) - u_f(a_i,a_k)) = 0$. Then, by combining the two equations we get:
	\begin{equation*}
		\lambda^\circ =  \frac{ \left| \displaystyle\sum_{i=1}^{m}p^1_i \Big(u_f(a_i,a_j) -u_f(a_i,a_k)\Big) \right| }{ \left | \displaystyle\sum_{i=1}^{m} (p^2_i - p^1_i)\Big(u_f(a_i,a_j) -u_f(a_i,a_k)\Big) \right | } .
	\end{equation*}

	We observe that the bit complexity of the numerator defining $\lambda^\circ$ is the sum of $m$ terms with a bit complexity bounded by $B +4L$. This is because the terms encoding the difference in the follower's utility, as prescribed by Lemma~\ref{lem:sum_rationals}, have a bit complexity bounded by $4L$. Therefore, the sum of these $m$ terms has a bit complexity of $3m(B+4L)$ due to Lemma~\ref{lem:sum_rationals}. Similarly, we can show that the denominator that defines $\lambda^\circ$ has a bit complexity bounded by $3m(4B+4L)$. As a result, the bit complexity of $\lambda^\circ$ is bounded by $3m(5B+8L)$, which is of the order $\mathcal{O}(m(B+L))$. 
	
	Moreover, the minimum distance between two valid values of $\lambda^\circ$ is lower-bounded by $ \epsilon\coloneqq2^{-6m(5B+8L)}$,  according to Lemma~\ref{lem:sum_rationals}. Therefore, when Algorithm~\ref{alg:binary_search} terminates, there exists a unique $\lambda^\circ$ within the interval $[\lambda_1, \lambda_2]$, with bit complexity bounded by $3m(5B+8L)$, given that the distance between $\lambda_1$ and $\lambda_2$ satisfies $|\lambda_1 - \lambda_2| \le \epsilon $. Consequently, the bit complexity of the resulting point $p^\circ$ lying on the separating hyperplane $H_{jk}$ is bounded by $\mathcal{O}(m(B + L))$. This is because $p^\circ \gets p^1 + \lambda^\circ (p^2 - p^1)$ is defined as the sum of two rational numbers with bit complexities bounded by $3m(5B+8L)+B$, and thus it is bounded by $24m(3B+4L)$. Lastly, we observe that the number of rounds required by Algorithm~\ref{alg:binary_search} to terminate is on the order of $\mathcal{O}(\log_2(\nicefrac{1}{\epsilon}))$, and thus $\mathcal{O}(m(B+L))$.
\end{proof}

%
%
%

\subsection{Proofs Omitted from \texttt{Sample-Point}}
\samplepointfirst*
\begin{proof}
	
	In the following, for the sake of the presentation, we provide the proof of the lemma when the polytope $\mathcal{P}$ is such that $\textnormal{vol}_{m-1}(\mathcal{P})>0$, the case when $\mathcal{P}=H_i \cap \Delta_m$ can be proved with the same analysis. In the following, we let $H~\coloneqq \{x \in \mathbb{R}^{m} \,|\, \sum_{i=1}^m \alpha_i x_i =~\beta \}$, be an hyperplane in $\mathbb{R}^{m}$. We show that the probability that the commitment $p$ returned by Algorithm~\ref{alg:sample_point} belongs to $H$ is equal to the probability that $x \in \Xi \coloneqq \{- 1, -\frac{M-1}{M}, \dots, 0, \dots, \frac{M-1}{M}, 1 \}^{m-1}$, sampled according to a uniform probability distribution over the set $\Xi$, belongs to a suitably defined linear space $H'$.
	\begin{align*}
		\mathbb{P}(p \in H) & = \mathbb{P}\left( \sum_{i=1}^{m-1} \alpha_i  p_i+ \alpha_m \left( 1 - \sum_{i=1}^{m-1} p_i \right) = \beta \right)\\
		& = \mathbb{P}\left( \sum_{i=1}^{m-1} \left(\alpha_i -\alpha_m\right)  p_i  = \beta - \alpha_m \right)\\
		& = \mathbb{P}\left( \sum_{i=1}^{m-1} \left(\alpha_i -\alpha_m\right) \left( p^\diamond_i + \rho x_i \right) = \beta - \alpha_m \right)\\
		& = \mathbb{P}\left( \rho \sum_{i=1}^{m-1} \left(\alpha_i -\alpha_m\right) x_i = \beta - \alpha_m - \sum_{i=1}^{m-1} \left(\alpha_i -\alpha_m\right)  p^\diamond_i   \right)\\
		& = \mathbb{P}\left( \sum_{i=1}^{m-1} \left(\alpha_i -\alpha_m\right) x_i = \frac{1}{\rho}	\left(\beta - \alpha_m - \sum_{i=1}^{m-1} \left(\alpha_i -\alpha_m\right)  p^\diamond_i   \right)\right)\\
		& = \mathbb{P}\left( x \in H'\right),
	\end{align*}
	where we let $H'=\left \{ x \in \mathbb{R}^{m-1} \,|\, \sum_{i=1}^{m-1} \left(\alpha_i -\alpha_m\right) x_i = \frac{1}{\rho}	\left(\beta - \alpha_m - \sum_{i=1}^{m-1} \left(\alpha_i -\alpha_m\right)  p^\diamond_i   \right) \right \}$.
	Furthermore, we observe that a discrete probability over the set $\Xi$ can be written as follows:
	\begin{align*}
		\mathbb{P}_{x \sim \mathcal{U}(\Xi)}\left( x = \bar x \right) & = \frac{1}{\textnormal{vol}(Q)} \int_{Q} \mathbbm{1} \left \{  \| s - \bar x\|_\infty \le \frac{1}{2M} \right \} ds
	\end{align*}
	for each $\bar x \in \Xi$, where $Q \coloneqq \{ x \in \mathbb{R}^{m-1} \,| \, |x_i| \le 1+1/2M \,\, \forall i \in [m-1]\}$ is an hypercube in dimension $m-1$. Then, the following holds: 
	\begin{align*}
		\mathbb{P}_{x \sim \mathcal{U}(\Xi)}\left( x \in H' \right) & = \sum_{\bar x \, \in \, \Xi } \mathbbm{1}\left \{ \bar x \in H' \right \} \mathbb{P}_{x \sim \mathcal{U}(\Xi)}\left( x = \bar x \right) 	\\
		& = \frac{1}{\textnormal{vol}(Q)}  \sum_{\bar x \, \in \, \Xi } \mathbbm{1}\left \{ \bar x \in H' \right \}  \int_{Q} \mathbbm{1} \left \{  \| s - \bar x\|_\infty \le \frac{1}{2M} \right \} ds	\\
		& = \frac{1}{\textnormal{vol}(Q)}  \sum_{\bar x \, \in \, \Xi }  \int_{Q} \mathbbm{1} \left \{ s \in Q  \,\, \Big| \,\, \| s - \bar x\|_\infty \le \frac{1}{2M}, \,  \bar x \in H' \right \} ds\\
		& =   \frac{1}{\textnormal{vol}(Q)} \int_{Q} \mathbbm{1} \left \{ s \in Q  \,\, \Big| \,\, \| s - \bar x\|_\infty \le \frac{1}{2M}, \, \bar x \in H' \cap \Xi  \right \} ds\\
		& \le   \frac{1}{\textnormal{vol}(Q)} \int_{Q} \mathbbm{1} \left \{  s \in Q  \,\, \Big| \,\,  \| s - t\|_\infty \le \frac{1}{2M}, t \in H' \right \} ds\\
		& = \frac{1}{\textnormal{vol}(Q)} \int_{Q} \mathbbm{1} \left \{  d_\infty(s,H') \le \frac{1}{2M} \right \} ds\\
		& \le \frac{1}{\textnormal{vol}(Q)} \int_{Q} \mathbbm{1} \left \{  d_2(s,H') \le \frac{\sqrt{m}}{2M} \right \} ds\\
	\end{align*}
	Thus, we can bound the probability that the commitment sampled according to Algorithm~\ref{alg:sample_point} belongs to the separating hyperplane $H$ as follows:
	\begin{align*}
		\mathbb{P}(p \in H)  = \mathbb{P}_{x \sim \mathcal{U}(\Xi)}\left( x \in H' \right) &\leq 2 \frac{ \displaystyle\int_{Q \cap H'}  \frac{\sqrt{m} }{2M} dA }{\textnormal{vol}_{m-1}(Q )}\\
		&= \frac{\sqrt{m} }{M }\frac{ \displaystyle\int_{Q \cap H'} dA }{ (2+1/M)^{m-1}} \\
		&\leq \frac{ \sqrt{2m}}{M }\frac{(2+1/M)^{m-2}}{ (2+1/M)^{m-1} }  = \frac{\sqrt{m}}{\sqrt{2}M} \le \delta. 
	\end{align*}
	Where the second inequality above holds by employing the results by~\citet{Ball1986}~and~\citet{Ivano2021} and the last inequality follows since $M =  \lceil {\sqrt{m}}/{\delta} \rceil$, showing that $\mathbb{P}(p \in H) \leq \delta$. Furthermore, we notice that for each linear space $\widetilde{H}$ with a dimension less than $m-1$, there always exists a hyperplane $H$ of dimension $m-1$ such that $\widetilde{H} \subseteq H$ and thus:
	$$\mathbb{P}(p \in \widetilde H) \leq\mathbb{P}(p \in H) \leq \delta.$$
	
	
	In the following, we show that the point $p$ sampled according to Algorithm~\ref{alg:sample_point} belongs to the polytope $\mathcal{P}$. To do that, we denote the $h$-vertex of the set $\mathcal{V}$ as $v^h$, so that $\mathcal{V} = \{v^1,\dots,v^m\}$, with $V= |\mathcal{V}|$.
	Furthermore, given that $p^\diamond \in \text{int}(\mathcal{P})$, the $\| \cdot \|_2$-distance between $p^\diamond$ and any separating hyperplane $H_{jk}$ that defines the boundary of $\mathcal{P}$ can be lower-bounded as follows:
	\begin{align*}
		d_2(p^\diamond, H_{jk}) & = \left| \frac{ \displaystyle\sum_{i=1}^{m} p^\diamond_i\Big(u_f(a_i,a_j) -u_f(a_i,a_k)\Big)}{ \sqrt{\displaystyle\sum_{i=1}^{m} \Big( u_f(a_i,a_j) -u_f(a_i,a_k) \Big)^2  } } \right| \\
		 & = \left|  \frac{ \displaystyle\sum_{i=1}^{m} \sum_{h=1}^{m} \Big(u_f(a_i,a_j) -u_f(a_i,a_k)\Big)  v^h_i}{ m \sqrt{\displaystyle\sum_{l=1}^{m} \Big( u_f(a_i,a_j) -u_f(a_i,a_k) \Big)^2}   } \right| \\
		& \geq \left(m^2 2^{9m^3L +2mL} \right)^{-1} \eqqcolon r.
	\end{align*}
	To prove the last inequality, we provide a lower bound for the numerator and an upper bound for the denominator of the fraction above. More specifically, the denominator can be upper bounded by $m^2$ since $|u_f(a_i, a_j) - u_f(a_i, a_k) | \le 1$ for each $i \in [m]$.
	To lower bound the numerator, we define: 
	$$\frac{\beta_i}{\gamma_i} := u_f(a_i,a_j) -u_f(a_i,a_k) \quad \forall i \in [m] \quad\text{and}\quad v_{i}^h := \frac{\mu_i^{h}}{\nu_h} \quad \forall v^h \in \mathcal{V},$$
	where we let ${\beta_i}$ and ${\gamma_i}$ be integer numbers for each $i \in [m]$, while ${\nu_h}$ and ${\mu^{h}_i}$ are natural numbers for each $i \in [m]$ and $h \in [V]$. As a result, we have:
	\begin{align*}
		\left|  \sum_{i=1}^{m} \sum_{h=1}^{m} \Big(u_f(a_i,a_j) -u_f(a_i,a_k) \Big)  v^h_i \right| 
		&= \left|  \sum_{i=1}^{m} \sum_{h=1}^{m} \frac{\beta_i \mu^{h}_{i}}{\gamma_i \nu_h} \right| \\
		&= \left| \frac{\displaystyle\sum_{i=1}^{m} \sum_{h=1}^{m} \beta_i \mu^{h}_{i} \left( \prod_{i' \neq i} \gamma_{i'} \prod_{h' \neq h} \nu_{h'} \right)}{\displaystyle\prod_{i=1}^{m}\gamma_i \prod_{h=1}^{m}\nu_h} \right| \\
		&\geq \left( \prod_{i=1}^{m}\gamma_i \prod_{h=1}^{m}\nu_h \right)^{-1}
		\geq 2^{-(9m^3L + 4mL)}.
	\end{align*}
	We observe that the first inequality holds because the numerator defining the above fraction can be lower bounded by one. On the other hand, the denominator can be upper bounded by noting that the bit complexity of $\gamma_i$ is bounded by $4L$ for each $i \in [m]$, as ensured by Lemma~\ref{lem:sum_rationals}. Additionally, for every $h \in [m]$, the bit complexity of $\nu_h$ is bounded by $9m^2L$ thanks to Lemma~\ref{lem:vertex_bits}. With a similar argument, we can show that the distance between $p^\diamond$ and any boundary hyperplane is lower bounded by $r$.
	
	As a final step, we bound the $\| \cdot \|_2$ distance between $p^\diamond$ and $p$. Formally:
	\begin{align*}
		d_2(p^\diamond,p) &= \sqrt{\sum_{i \in [m-1]}\Big(p^\diamond_i + \rho x_i -p^\diamond_i\Big)^2 + \left(p^\diamond_m - 1 + \sum_{i \in [m-1]}p^\diamond_i + \rho x_i\right)^2} \\
		&\leq \sqrt{\sum_{i \in [m-1]}\Big(p^\diamond_i + \rho x_i -p^\diamond_i\Big)^2 + \left(\sum_{i \in [m]}p^\diamond_i + \sum_{i \in [m-1]}\rho x_i - 1\right)^2} \\
		&\leq \sqrt{\sum_{i \in [m-1]}\Big(\rho x_i\Big)^2 + \left( \sum_{i \in [m-1]}\rho x_i \right)^2} \\
		&\leq \sqrt{(m-1)\rho^2 + (m-1)^2 \rho^2} = \rho \sqrt{(m-1)m} < \rho m
	\end{align*}
	As a result, if $\rho = {r}/{m}$ the distance $d_2(p^\diamond,p)$ is strictly smaller than $r$, showing that the commitment $p \in \textnormal{int}(\mathcal{P})$.
	
	 
	We also show that the bit complexity of $p$ is bounded by $40m^3L + 2\log_2(\nicefrac{1}{\delta})$. We observe that the denominator of each rational number $x_i$ with $i \in [m-1]$ is equal to $M =  \lceil {\sqrt{m}}/{\delta} \rceil$. Furthermore, each component of $p^\diamond$ can be expressed as follows:
	$$p^\diamond_i = \frac{1}{m}\sum_{h=1}^{m}v^h_i = \frac{1}{m}\sum_{h=1}^{m}\frac{\mu^{h}_{i}}{\nu_h}= \frac{\displaystyle\sum_{i=1}^m \mu^h_i \left( \prod_{h \ne h'} \nu_{h'} \right)}{m\displaystyle\prod_{h=1}^{m}\nu_h},$$
 	for each $i \in [m]$. 
 	Consequently, the denominator of each rational component $p_i = p_i^\diamond + \rho x_i$ with $i \in [m-1]$ is equal to $D \coloneqq m\prod_{h=1}^{m}\nu_h D_{\rho} M$, where $D_{\rho}$ represents the denominator defining the rational number $\rho$. Additionally, we observe that the last component $p_m$ 
 	can also be expressed as a rational number with the same denominator $D$.
 	
 	Finally, since $p_i \in [0,1]$ for every $i \in [m]$, its bit complexity is bounded by $2B_D$, where the bit complexity of the denominator $D$ can be upper bounded as follows:
	\begin{align*}
		B_D &= \left \lceil \log_2\left(m\prod_{h=1}^{m}\nu_h D_{\rho} M\right) \right \rceil \\
		&= \left \lceil \log_2(m) + \sum_{h=1}^{m}\log_2(\nu_h) + \log_2(m 2^{9m^3L +4mL}) + \log_2(M)\right \rceil \\
		&\le 1 +  \log_2(m) + \sum_{h=1}^{d}\log_2(2^{9m^2L}) + \log_2(m 2^{9m^3L +4mL}) + \log_2(\nicefrac{\sqrt{m}}{\delta} +1) \\
		&\le 20m^3L + \log_2(\nicefrac{1}{\delta}).
	\end{align*}
	Thus, the bit complexity of $p$ is bounded by $40m^3L + 2\log(\nicefrac{1}{\delta})$, concluding the proof.
\end{proof}

\section{{Additional Technical Lemmas}}\label{sec:omitted_lemma_proof}

\begin{restatable}{lemma}{sumrationals}
	\label{lem:sum_rationals}
	Given $q_1, \dots, q_m \in \mathbb{Q}$ represented as fractions, each with bit-complexity bounded by $B \in \mathbb{N}_{+}$, the bit-complexity of their sum $q \coloneqq \sum_{i=1}^m q_i$ is bounded by $4B$ when $m=2$, while it is bounded by $3mB$ when $m>2$. Moreover, the absolute value of their sum is either equal to $0$ or it is greater than or equal to $2^{-Bm}$.
\end{restatable}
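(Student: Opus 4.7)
The plan is to rewrite $q = \sum_{i=1}^m q_i$ as a single fraction with common denominator $\prod_{i=1}^m c_i$, and then bound the bit-complexity of the resulting numerator and denominator via two standard facts: (i) the bit-complexity of a product of integers is at most the sum of the bit-complexities of the factors, and (ii) the bit-complexity of a sum of $k$ integers, each of bit-complexity at most $B'$, is at most $B' + \lceil \log_2 k \rceil$. These are the only two ingredients needed, and they are applied together with the hypothesis $B_{b_i} + B_{c_i} \leq B$ for every $i$.

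For the case $m=2$, I would write $q_1 + q_2 = (b_1 c_2 + b_2 c_1)/(c_1 c_2)$. Using $B_{b_i} + B_{c_i} \leq B$, each of $b_1 c_2$ and $b_2 c_1$ has bit-complexity at most $2B$ (e.g.\ $B_{b_1} + B_{c_2} \leq B_{b_1} + B_{c_1} + B_{b_2} + B_{c_2} \leq 2B$), so the numerator has bit-complexity at most $2B + 1$, while the denominator $c_1 c_2$ has bit-complexity at most $2B$. Summing yields the stated $4B$ bound.

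For $m > 2$, I would use the same rewriting $q = \big(\sum_{i=1}^m b_i \prod_{j \neq i} c_j\big) \big/ \prod_{i=1}^m c_i$. The denominator has bit-complexity at most $\sum_i B_{c_i} \leq mB$. Each of the $m$ summands $b_i \prod_{j \neq i} c_j$ in the numerator has bit-complexity at most $B_{b_i} + \sum_{j \neq i} B_{c_j} \leq B + (m-1)B = mB$, so by (ii) their sum has bit-complexity at most $mB + \lceil \log_2 m \rceil$. Adding the numerator and denominator bit-complexities gives $2mB + \lceil \log_2 m \rceil \leq 3mB$ whenever $m > 2$.

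Finally, for the lower bound on $|q|$, I would note that if $q \neq 0$ then the integer $N \coloneqq \sum_i b_i \prod_{j \neq i} c_j$ is a nonzero integer, so $|N| \geq 1$; the integer $D \coloneqq \prod_i c_i$ satisfies $|D| \leq 2^{\sum_i B_{c_i}} \leq 2^{mB}$. Hence $|q| = |N|/|D| \geq 2^{-mB}$. There is no real obstacle in this proof: the only care required is in the bookkeeping of constants, so that the $m=2$ case lands exactly at $4B$ and the $m>2$ case at $3mB$ (absorbing the $\lceil \log_2 m \rceil$ overhead into the extra factor of $B$ one gains once $m \geq 3$).
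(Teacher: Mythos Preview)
Your approach mirrors the paper's: put $q$ over the common denominator $\prod_i c_i$, bound the numerator via the $m$ summands plus a $\log_2 m$ carry, and for the lower bound use that a nonzero integer numerator has absolute value at least $1$. The $m>2$ case and the lower bound on $|q|$ go through exactly as in the paper.

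In the $m=2$ case, however, your arithmetic does not close: $(2B+1)+2B=4B+1$, not $4B$, so as written you do not reach the stated bound. The paper avoids this loss by \emph{coupling} the numerator and denominator estimates rather than bounding them independently. Letting $i^\star$ index the summand of maximal magnitude, the numerator has bit-complexity at most $1+B_{b_{i^\star}}+\sum_{j\neq i^\star}B_{c_j}$; adding the denominator contribution $\sum_i B_{c_i}$ and invoking $B_{b_{i^\star}}+B_{c_{i^\star}}\leq B$ gives
\[
1+B_{b_{i^\star}}+\sum_{j\neq i^\star}B_{c_j}+\sum_i B_{c_i}
\;=\;1+(B_{b_{i^\star}}+B_{c_{i^\star}})+2\sum_{j\neq i^\star}B_{c_j}
\;\leq\;1+(2m-1)B,
\]
which for $m=2$ is $3B+1\leq 4B$ (using $B\geq 1$). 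Your independent bounds on numerator and denominator forfeit precisely this one saving of $B$; either adopt the coupled estimate above or, equivalently, use that each $B_{b_i}\geq 1$ forces $B_{c_i}\leq B-1$ so the denominator is at most $2B-2$.
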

\begin{proof}
To prove the lemma, we define the sum of $q_1, \dots, q_m$ as $q \in \mathbb{Q}$, and we express each $q_i$ with $i \in [m]$ as $\alpha_i/\beta_i$, where both $\alpha_i,\beta_i \in \mathbb{Z}$ for each $i \in [m]$. Then the following holds:
	\begin{equation*}
	q = \sum_{i=1}^m q_i = \sum_{i=1}^m \frac{\alpha_i}{\beta_i} =  \frac{\displaystyle\sum_{i=1}^m \alpha_i \left( \prod_{i \ne j} \beta_j \right)  }{\displaystyle\prod_{i \in [m]}\beta_i}.
	\end{equation*}
It is easy to see that the bit complexity of the denominator that defines the rational number $q$ is equal to $\sum_{i \in[m]} B_{\beta_i}$, while the absolute value of the numerator of $q$ can be upper-bounded as follows:
\begin{equation*}
\left| \sum_{i=1}^m \alpha_i \prod_{i \ne j} \beta_j \right| \le m  \Bigg|  \left(\alpha_{i^*} \prod_{i \ne i^* } \beta_i \right)  \Bigg|  \le 2^{\log_2(m) + B_{\alpha_{i^*}} + \sum_{i\ne i^*} B_{\beta_{i}}},
\end{equation*}
where we let $i^* \in \argmax_{i \in [m]} \alpha_i \prod_{i \ne j} \beta_j$. As a result, the number of bits required to encode $q$ can be bounded as:
\begin{align*}
B_q &\le \log_2(m) + B_{\alpha_{i^*}} + \sum_{i \ne i^*} B_{\beta_i} + \sum_{i \in[m]} B_{\beta_i } \\
&\le \log_2(m) + B + 2 \sum_{i \ne i^*} B_{\beta_i} \\
&\le \log_2(m) + (2m-1) B \le  3mB.
\end{align*}
The second inequality above holds because $B_{\alpha_i} + B_{\beta_i} \le B$, as the bit complexity of each $q_i$ with $i \in [m]$ is bounded by $B$. Moreover, when $m=2$, we have that $B_q \le \log_2(2) + (4-1) B \le 4B$. Finally, we observe that the absolute value of $q$ is either zero or can be lower bounded as shown in the following:
\begin{equation*}
 \left|  \sum_{i=1}^m q_i \right| = \left| \sum_{i=1}^m \frac{\alpha_i}{\beta_i} \right| = \left|  \frac{\displaystyle\sum_{i=1}^m \alpha_i \left(\prod_{i \ne j} \beta_j \right) }{\displaystyle\prod_{i \in [m]}\beta_i} \right| \ge \frac{1}{2^{mB}},
\end{equation*}
which concludes the proof.
\end{proof}

\begin{restatable}{lemma}{vertexbits}
	\label{lem:vertex_bits}
	Given an SG, if follower's payoffs are represented as fractions with bit-complexity bounded by $L$, then each vertex $v \in V(\mathcal{U}_{j})$ of an $\mathcal{U}_j$ computed by Algorithm~\ref{alg:learning_commitment} has bit-complexity at most $9Lm^2$. Furthermore, with a bit-complexity of $9Lm^2$, all the components of the vector identifying a vertex can be written as fractions with the same denominator.
\end{restatable}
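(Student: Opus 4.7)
The plan is to write every $v \in V(\mathcal{U}_j)$ as the unique solution of a square linear system, and then bound the bit-complexity of its entries by combining Cramer's rule with Hadamard's inequality. By the vertex characterisation recalled at the end of Section~\ref{sec:preliminaries}, $v$ is obtained by intersecting $\Delta_m$ with $m-1$ linearly independent hyperplanes selected among the separating hyperplanes $H_{jk}$ defining $\mathcal{U}_j$ and the boundary hyperplanes $H_i$ of $\Delta_m$. Hence $v$ is the unique solution of a system $Av=b$ whose $m$ rows are these $m-1$ hyperplane equations together with the simplex equation $\sum_i v_i = 1$, and whose right-hand side $b$ has entries in $\{0,1\}$.

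As a first step I would bound the bit-complexity of the entries of $A$. A row coming from $H_{jk}$ has $i$-th entry $u_f(a_i,a_j)-u_f(a_i,a_k)$, which is a sum of two rationals of bit-complexity at most $L$ and therefore, by Lemma~\ref{lem:sum_rationals} applied with $m=2$, has bit-complexity at most $4L$; the other rows (boundary and simplex) have entries in $\{0,1\}$. Writing each entry of $A$ as $n_{ij}/d_{ij}$ with $|n_{ij}|, d_{ij} \leq 2^{4L}$, I would clear denominators row-by-row by multiplying row $i$ by $D_i \coloneqq \prod_j d_{ij} \leq 2^{4Lm}$, producing an integer matrix $\widetilde A$ with $|\widetilde A_{ij}| \leq 2^{4Lm}$ and an integer right-hand side $\widetilde b$ with $|\widetilde b_i| \leq 2^{4Lm}$. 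Since this row-scaling multiplies $\det(A)$ and every $\det(A_i)$ appearing in Cramer's rule by the same factor $\prod_i D_i$, it cancels and yields
\[
v_i \,=\, \frac{\det(A_i)}{\det(A)} \,=\, \frac{\det(\widetilde A_i)}{\det(\widetilde A)},
\]
where $\widetilde A_i$ is obtained from $\widetilde A$ by replacing column $i$ with $\widetilde b$. This already proves the second claim of the lemma: every component of $v$ can be written as a fraction sharing the common integer denominator $\det(\widetilde A)$.

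To conclude, I would bound $|\det(\widetilde A)|$ and $|\det(\widetilde A_i)|$ via Hadamard's inequality by $m^{m/2} \cdot (2^{4Lm})^m = 2^{4Lm^2 + (m/2)\log_2 m}$, so that each has bit-complexity at most $4Lm^2 + (m/2)\log_2 m + 1$. Summing the numerator and denominator contributions, the bit-complexity of $v_i$ as a fraction is at most $8Lm^2 + m\log_2 m + 2$, which is bounded by $9Lm^2$ for every $L \geq 1$ and $m \geq 2$, the only regime relevant to Algorithm~\ref{alg:learning_commitment}. I do not expect a conceptual obstacle here; the only delicate point is the bookkeeping that absorbs the additive $m\log_2 m$ slack into a clean $9Lm^2$ bound, which is precisely why the stated constant is $9$ rather than $8$.
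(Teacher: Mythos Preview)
Your proposal is correct and follows essentially the same approach as the paper: set up the $m\times m$ linear system from the simplex constraint and the $m-1$ defining hyperplanes, clear denominators row-wise to obtain an integer system with entries bounded by $2^{4Lm}$, apply Cramer's rule to exhibit the common denominator $\det(\widetilde A)$, and bound the determinants via Hadamard's inequality to reach $8Lm^2 + m\log_2 m + 2 \le 9Lm^2$. The argument and the constants match the paper's proof almost line for line.
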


\begin{proof}
	Let $v$ be a vertex belonging to an upper bound $\mathcal{U}_{j}$ such that $\text{vol}(\mathcal{U}_{j})>0$. Then such a vertex lies on the hyperplane $H'$ ensuring that the sum of its components is equal to one. Furthermore, it also belongs to a subset of $m-1$ linearly independent hyperplanes, which can pertain either to the boundary planes denoted as $H_i$ with $i \in [m]$, or to the separating hyperplanes between two followers' best response regions, \emph{i.e}., $H_{jk}$ with $a_j,a_k \in \mathcal{A}_f$. Consequently, there exists a matrix $A \in \mathbb{Q}^{m \times m}$ and a vector $b \in \mathbb{Q}^{m}$ such that $Av=b$. The entries of the matrix $A$ may encode the difference in terms of follower's utility between two follower actions $a_j$ and $a_k$ in $\mathcal{A}_{f}$ for each leader's action $a_i \in \mathcal{A}_{\ell}$, \emph{i.e}., $u_f(a_i,a_j) - u_f(a_i,a_k)$. As a result, the bit complexity of each entry of the matrix $A$ can be bounded by $4L$, as a consequence of Lemma~\ref{lem:sum_rationals}, and by observing that the coefficients that defines the hyperplanes $H'$ or $H_i$, with $i \in [m]$, are either 0 or 1. 
	Therefore, we can multiply each row of the matrix $A_i \in \mathbb{Q}^{m}$ and the corresponding $b_i \in \mathbb{Q}$ with $i \in [m]$, 
	by a constant bounded by $2^{4Lm}$.
	
	In this way, we can formulate the previous linear systems to compute the vertex $v$ using a matrix $A' \in \mathbb{Z}^{m \times m}$ and a vector $b' \in \mathbb{Z}^{m}$, such that $A'v=b'$, where each component of $A'$ and $b'$ satisfies $|a'_{ij}| \le 2^{4Lm}$ and $|b'_{j}| \le 2^{4Lm}$ for each $i,j \in [m]$.	To prove the lemma, we employ a similar approach to the one presented in Lemma 8.2 by~\citet{Bertsimas}, and we define ${A'}(j)$ as the matrix obtained by substituting the $j$-th column of $A'$ with $b'$. Then, by Cramer's rule, the value of the $j$-th component of $v_j$ can be computed as follows:
	\begin{equation*}
		v_j = \frac{\det(A'(j))}{\det(A')} \,\,\,\ \textnormal{$\forall j \in [m]$}.
	\end{equation*}
	We observe that both determinants are integer numbers as the entries of both $A'$ and $b'$ are all integers, thus by Hadamard's inequality we have:
	\begin{equation*}
		|\det(A')| \leq \hspace{-1.5mm} \prod_{i \in [m]}\sqrt{\sum_{j \in [m]} {{a}'_{ji}}^2 } \leq \hspace{-1.0 mm} \prod_{i \in [m]}\sqrt{\sum_{j \in [m]} (2^{4Lm})^2 } = \hspace{-1.5mm}
		\prod_{i \in [m]}\hspace{-1mm}\sqrt{m(2^{4Lm})^2} 
		= (2^{4Lm^2}) m^{\frac{m}{2}}.
	\end{equation*}
	Furthermore, observing that the same upper bound holds for $|\det(A^j)|$, the bit complexity of $v$ is bounded by:
	\begin{equation*}
		2 \left \lceil \log_2(2^{4Lm^2} m^{m/2}) \right \rceil \le 2 ( \log_2(2^{4Lm^2}m^{m/2}) +1)=  8Lm^2 + {m} \log_2(m) +2 \le 9Lm^2.
	\end{equation*}
	Finally, we notice that this upper bound holds when every component of the vertex has the same denominator $\det(A')$, concluding the proof.
\end{proof}

\section{On the Running Time of Algorithm~\ref{alg:learning_commitment}}\label{sec:app_running_time}


\begin{restatable}{theorem}{runningtimemain}\label{thm:main_thm_running_time}
	With probability at least $1-\zeta$, the running time of Algorithm~\ref{alg:learning_commitment} is polynomial when either the number of follower's actions $n$ or that of leader's actions $m$ is fixed.  
\end{restatable}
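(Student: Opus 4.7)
The plan is to show that each elementary step performed by Algorithm~\ref{alg:learning_commitment} can be executed in polynomial time when either $m$ or $n$ is fixed, and then to multiply by the polynomial bound on the total number of steps already guaranteed by Theorem~\ref{thm:main_thm}. A preliminary but crucial observation is that Lemma~\ref{lem:sample_point_first}, Lemma~\ref{lem:binary_search}, and Lemma~\ref{lem:vertex_bits} together imply that every rational number manipulated by the algorithm has bit-complexity polynomial in $m$, $n$, $L$, and $\log(\nicefrac{1}{\zeta})$, so all arithmetic operations take polynomial time.

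I would then go through the subroutines in increasing order of complexity. \texttt{Binary-Search} iterates $\mathcal{O}(m(B{+}L))$ times and performs only rational arithmetic plus one invocation of a Stern--Brocot-tree search, which is polynomial-time (see~\citep{Fori2007}); the bit-complexity $B$ of its inputs is polynomial by Lemma~\ref{lem:sample_point_first}. \texttt{Sample-Int} requires $d \le m$ linearly independent vertices of a polytope $\mathcal{P}$ described by $\mathcal{O}(m+n)$ halfspaces in $\mathbb{R}^m$; standard vertex enumeration runs in time polynomial in the total number of vertices, which is at most $\binom{m+n}{m}$ and hence polynomial when either $m$ or $n$ is fixed. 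Drawing the discrete uniform sample is immediate. Consequently \texttt{Find-Hyperplane}, which makes $\mathcal{O}(m)$ calls to \texttt{Binary-Search} and \texttt{Sample-Int} and solves a single $(m{-}1)\times(m{-}1)$ linear system, also runs in polynomial time.

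The main obstacle is the sampling step at Line~\ref{line:sample_pint}, since $\Delta_m \setminus \bigcup_{a_k \in \mathcal{C}} \mathcal{U}_k$ is generally \emph{not} convex. The plan is to decompose it into a polynomial number of polytopes and apply \texttt{Sample-Int} to any one of them having non-zero volume. When $n$ is fixed, expand $\Delta_m \cap \bigcap_{a_k\in\mathcal{C}}\bigl(\mathbb{R}^m\setminus\mathcal{U}_k\bigr)$ by distributivity: each $\mathbb{R}^m\setminus\mathcal{U}_k$ is a union of at most $m+n$ open halfspaces (the complements of the facets of $\mathcal{U}_k$), so the distributed expression is a union of at most $(m+n)^{|\mathcal{C}|}\le (m+n)^n$ polytopes, which is polynomial in $m$ for fixed~$n$. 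When $m$ is fixed, I would instead invoke the classical bound on the combinatorial complexity of hyperplane arrangements in fixed dimension: the $\mathcal{O}(n^2)$ separating and boundary hyperplanes involved in all the $\mathcal{U}_k$'s induce an arrangement of $\mathcal{O}(n^{2m})$ full-dimensional cells that can be enumerated in time polynomial in $n$, and the desired region is the union of those cells not contained in any $\mathcal{U}_k$. In either regime the decomposition has polynomially many pieces, each defined by polynomially many halfspaces of polynomial bit-complexity, so \texttt{Sample-Int} applies to each of them in polynomial time.

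Finally, I would combine these bounds with the structure of Algorithm~\ref{alg:learning_commitment}: the outer while loop executes at most $n$ times; the inner while loop at Line~\ref{line:partition_loop3} is controlled by the number of vertex re-initializations, which is bounded by the $\mathcal{O}(n)$ separating hyperplanes discovered while closing a single action times the $\binom{m+n}{m}$ vertices of $\mathcal{U}_j$; and the final optimization at Line~\ref{line:optimal} reduces to evaluating $u_\ell$ on $\mathcal{O}(n\binom{m+n}{m})$ vertices. Since every subroutine runs in polynomial time under the bit-complexity guarantees of Section~\ref{sec:sample point}, and this happens with probability at least $1-\zeta$ by Theorem~\ref{thm:main_thm}, the overall running time is polynomial whenever either $m$ or $n$ is fixed.
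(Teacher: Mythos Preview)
Your proof is correct and, for the most delicate step (sampling from $\Delta_m\setminus\bigcup_{a_k\in\mathcal C}\mathcal U_k$ at Line~\ref{line:sample_pint}), it agrees with the paper on the fixed-$n$ case but takes a genuinely different route for the fixed-$m$ case.

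For fixed $n$, both you and the paper expand $\Delta_m\cap\bigcap_{a_k\in\mathcal C}(\mathbb R^m\setminus\mathcal U_k)$ via De~Morgan/distributivity into a union of polytopes and then call \texttt{Sample-Int} on any cell of non-zero volume. The paper keeps only the separating-hyperplane complements (since the boundary-hyperplane complements become empty after intersecting with $\Delta_m$), obtaining at most $n^{|\mathcal C|}\le n^n$ pieces; your count of $(m+n)^n$ is looser but equally polynomial for fixed $n$.

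For fixed $m$, you invoke the classical bound on arrangements of $\mathcal O(n^2)$ hyperplanes in fixed dimension, giving $\mathcal O(n^{2m})$ full-dimensional cells that can be enumerated in polynomial time, and observe that the target region is exactly the union of the cells lying outside every $\mathcal U_k$. The paper instead avoids arrangement machinery altogether: it observes that any cell $\widetilde{\mathcal U}_\rho$ of the De~Morgan decomposition with positive volume contains a simplex spanned by $m$ of its vertices, that all such vertices live in a universal pool of size at most $(m+n^2)^m$ (intersections of $m-1$ of the available hyperplanes with $\Delta_m$), and simply enumerates all $\binom{(m+n^2)^m}{m}=\mathcal O(n^{2m^2})$ candidate $m$-tuples until one is found whose convex hull lies inside the target region. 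Your argument is shorter and yields the better exponent $n^{2m}$, at the price of citing arrangement complexity; the paper's argument is more self-contained but pays $n^{2m^2}$.

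The remaining parts of your proposal (polynomial bit-complexity throughout via Lemmas~\ref{lem:sample_point_first}, \ref{lem:binary_search}, \ref{lem:vertex_bits}; polynomial-time vertex enumeration for \texttt{Sample-Int}; the loop-count accounting) match the paper's treatment and are fine.
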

\begin{proof}

	We observe that, by employing the same argument used to prove Theorem~\ref{thm:main_thm}, before entering the while loop at Line~\ref{line:partition_loop3}, the set $\mathcal{C}$ is either empty or is such that if $a_k \in \mathcal{C}$, then with a probability of at least $1-\zeta$, $\textnormal{vol}(\mathcal{U}_k)>0$, and $\mathcal{U}_k=\mathcal{P}_k$. Consequently, with this probability, the set $\Delta_m \setminus \bigcup_{a_k \in \mathcal{C}} \mathcal{U}_{a_k}$ is always non-empty if Algorithm~\ref{alg:learning_commitment} has not terminated.
	
	In the following, we show that for each set of actions $\mathcal{C}$ computed through the execution of Algorithm~\ref{alg:learning_commitment}, we can always efficiently sample a point from the interior of $\Delta_m \setminus \bigcup_{a_k \in \mathcal{C}} \mathcal{U}_{a_k}$ if either $n$ or $m$ is fixed. To achieve this, we define for each action $a_j \in \mathcal{C}$ the set $U(j) \subseteq {A}_f$ as $U(j) = \{a_k \in {A}_f \;|\; a_k \ne a_j \wedge H_{jk} \cap \mathcal{P}_j \text{ is a facet of } \mathcal{P}_j\} = \{a_k \in {A}_f \;|\; a_k \ne a_j \wedge \text{vol}_{m-2}(H_{jk} \cap \mathcal{P}_j) > 0 \}$.
	Furthermore, we let $P_j \coloneqq \bigcup_{a_k \in U(j)} (j,k)$ and $P \coloneqq \bigtimes_{a_j \in \mathcal{C}} P_j$. Observe that an element $\rho \in P$ indicates a pair $(j,k)$ for every $a_j \in \mathcal{C}$. 
	We denote with $\rho(j)$ the pair $(j,k)$ relative to action $a_j$. As a result, the following holds:
	\begin{align*}
		\Delta_m \setminus \bigcup_{a_k \in \mathcal{C}} \mathcal{P}_{k}  & =  \Delta_m \setminus  \bigcup_{a_j \in \mathcal{C}} \mathcal{U}_j  \\
		&=  \Delta_m \setminus \  \bigcup_{a_j \in \mathcal{C}} \Big(\mathcal{U}'_j \cap \Delta_m \Big) \\ 
		&=  \Delta_m \setminus \ \Big( \Big( \bigcup_{a_j \in \mathcal{C}} \mathcal{U}'_j \, \Big) \cap \Delta_m \Big)  \\
		&=  \Delta_m \cap \Big(  \bigcup_{a_j \in \mathcal{C}} \mathcal{U}'_j \Big)^{C}
		\\
		&= \Delta_m \cap \Big(  \bigcap_{a_j \in \mathcal{C}} {\mathcal{U}_j'}^{C} \Big)\\
		& = \Delta_m \cap \Big(  \bigcap_{a_j \in \mathcal{C}} \Big( \bigcap_{a_k \in U(j)} \mathcal{H}_{jk} \Big)^{C} \, \Big) \\
		&= \Delta_m \cap \Big(  \bigcap_{a_j \in \mathcal{C}} \Big( \bigcup_{a_k \in U(j)} \mathcal{H}^{C}_{jk} \Big)\Big)\\
		&=  \Delta_m \cap \bigcup_{\rho \, \in \, P} \, \Big(  \bigcap_{\, a_j \in \mathcal{C}} \mathcal{H}^{C}_{\rho(j)} \Big) \\
		&=   \bigcup_{\rho \, \in \, P} \, \Big(  \bigcap_{\, a_j \in \mathcal{C}} \mathcal{H}^{C}_{\rho(j)}  \cap \Delta_m \Big) \\
		&= \bigcup_{\rho \, \in \, P} \mathcal{\widetilde{U}}_{\rho} ,
	\end{align*}
	where the equalities above follow by iteratively employing De Morgan's laws. Thus, the set $\Delta_m \setminus \bigcup_{a_k \in \mathcal{C}} \mathcal{U}_k$ is non-empty with probability at least $1-\zeta$, as observed before, and it can be decomposed into $|{P}|$ convex sets $\mathcal{\widetilde{U}}_{\rho}$ with $\rho \in {P}$. As a result, we can employ Algorithm~\ref{alg:sample_point}, since $\mathcal{\widetilde{U}}_{\rho}$ is convex. Consequently, we can select the $\rho \in {P}$ such that $\mathcal{\widetilde{U}}_{\rho}$ has non-empty interior. Since the set ${P}$ has a size of at most $\mathcal{O}(n^n)$, we can enumerate all the elements of $P$ in polynomial time, when the number of follower's actions $n$ is fixed. 
	
	Finally, for every region $\mathcal{\widetilde{U}}_{\rho}$ with non-empty interior, for $\rho \in P$, there exists a subset of its vertices $V$ with dimension $m$, ensuring $\textnormal{co}(V) \subseteq \mathcal{\widetilde{U}}_{\rho}$ and $\textnormal{vol}(co(V)) > 0$. Consequently, by enumerating all possible subsets of the vertices defining the regions $\mathcal{\widetilde{U}}_{\rho}$ with $m$ elements, and observing that the number of such vertices is at most $\binom{m+n^2}{m} \le (m+n^2)^m$, we are required to check a maximum of $\binom{(m+n^2)^m}{m}$ regions defined by a set of these $m$ vertices. This results in a complexity of $O(n^{2m^2})$ when $m$ is constant.
	This enumeration allows us to compute a set $V$ containing $m$ linearly independent vertices of a polytope $\mathcal{\widetilde{U}}_{\rho}$ for some $\rho \in P$. We observe that Algorithm~\ref{alg:sample_point} actually needs just $m$ linearly independent vertices of the polytope $\mathcal{P}$ (see line~\ref{line:sample_li_vertices}) and no other parameter. Thus, we can execute $\texttt{Sample-Int}(\mathcal{\widetilde{U}}_{\rho})$ by assigning $\mathcal{V} \gets V$  at line~\ref{line:sample_li_vertices} Algorithm~\ref{alg:sample_point}, where $V$ is the set of $m$ linearly independent vertices found above.
	
	 As a result, since sampling from the set $\Delta_m \setminus \bigcup_{a_k \in \mathcal{C}} \mathcal{U}_{a_k}$ can be efficiently done when either $n$ or $m$ is constant, and noting that the running time of all other steps in Algorithm~\ref{alg:learning_commitment} can be performed in polynomial time, we deduce that the overall running time of Algorithm~\ref{alg:learning_commitment} is polynomial if either the number of follower's or leader's actions is a fixed parameter. 
\end{proof}

\section{Extension to the Case with Equivalent Follower's Actions }\label{sec:app_extension}


In this section, we show how to extend our results to the general case in which there could equivalent follower's actions, where two actions $a_j, a_k \in A_f$ are \emph{equivalent} if $u_f(a_i,a_j) = u_f(a_i,a_k)$ for all $a_i \in A_\ell$.

Our algorithm can be easily extended to handle such a case. To do this, we introduce the following additional notation. First, we introduce the notion of \emph{leader separating hyperplane}. More formally, given a pair of follower's actions $a_j, a_k \in A_f$ such that $a_j \neq a_k$, we let $\mathcal{H}^{\ell}_{jk} \subseteq \mathbb{R}^m$ be the halfspace in which $a_j$ is (weakly) better than $a_k$ in terms of leader's utility, where:
\[
\mathcal{H}^{\ell}_{j k} \hspace{-0.5mm} \coloneqq \hspace{-0.5mm} \left\{  p \in \mathbb{R}^{m} \hspace{-0.5mm} \mid \hspace{-0.5mm} \sum_{a_i \in A_\ell} \hspace{-0.5mm} p_i \big( u_{\ell}(a_i,a_j) \hspace{-0.5mm} - \hspace{-0.5mm} u_{\ell}(a_i, a_k) \big) \hspace{-0.5mm} \geq \hspace{-0.5mm} 0  \right\} \hspace{-0.5mm} .
\]
Furthermore, we let $H^{\ell}_{j k} \coloneqq \partial \mathcal{H}^{\ell}_{jk}$ the hyperplane defining the boundary of the halfspace $\mathcal{H}^{\ell}_{jk}$, which we call the {leader separating hyperplane} between $a_j$ and $a_k$. In addition, given a follower's action $a_j \in {A}_f$, with an abuse of notation we define $A_f(a_j)$ as the set of follower's actions $a_k \in A_f : a_j \neq a_k$ that are equivalent to $a_j$. Formally:
\[
A_f(a_j) \coloneqq \left\{a_k \in {A}_f  \mid a_k \neq a_j \wedge u_{f}(a_i,a_j) = u_{f}(a_i, a_k) \,\, \forall a_i \in \mathcal{A}_l \right\}.
\]
Then, we can re-define the best-response region of action $a_j \in A_f$ as:
\[
\widetilde{\mathcal{P}}_j \coloneqq \Delta_{m} \cap \Bigg(  \bigcap_{a_k \in A_f: a_k \neq a_j} \mathcal{H}_{jk} \Bigg) \cap \Bigg(  \bigcap_{ a_k \in A_f(a_j)} \mathcal{H}^{\ell}_{jk} \Bigg).
\]
Intuitively, the region $ \widetilde{\mathcal{P}}_j $ corresponds to the region in which $a_j$ is a follower's best response when also accounting for tie-breaking, which is done in favor of the leader.
Consequently, we can apply Algorithm~\ref{alg:learning_commitment} and achieve equivalent results. This is because the regions $\widetilde{\mathcal{P}}_j$ are still polytopes, with the only difference being that the number of vertices defining them is bounded by $\binom{m+2n-1}{m}$, as the number of hyperplanes that define their boundary also takes into account leader separating hyperplanes. Furthermore, by letting $L$ be the bit-complexity of leader's payoffs, it is possible to restate Theorem~\ref{thm:main_thm} as:

\begin{theorem}\label{thm:main_thm_2}
	Given any $\zeta \in (0,1)$, with probability at least $1-\zeta$, Algorithm~\ref{alg:learning_commitment} terminates with $p^\star$ being an optimal strategy to commit to, by using a number of samples of the order of $\widetilde{\mathcal{O}}\left(n^2\left(m^7L\log(\nicefrac{1}{\zeta})+ \binom{m+2n}{m} \right)\right)$, 
\end{theorem}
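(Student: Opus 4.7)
The plan is to adapt the proof of Theorem~\ref{thm:main_thm} to the setting with equivalent follower's actions, by showing that the algorithmic machinery transfers almost verbatim once one treats each $\widetilde{\mathcal{P}}_j$ as the ``effective'' best-response region in place of $\mathcal{P}_j$. The first observation is that each $\widetilde{\mathcal{P}}_j$ is still a polytope: it is the intersection of $\Delta_m$ with at most $n-1$ follower separating halfspaces $\mathcal{H}_{jk}$, at most $n-1$ leader separating halfspaces $\mathcal{H}^\ell_{jk}$ (one for each action in $A_f(a_j)$, but actually only those where the equivalence is relevant), and $m$ boundary halfspaces. Hence $|V(\widetilde{\mathcal{P}}_j)| \leq \binom{m+2n-1}{m} \leq \binom{m+2n}{m}$, which accounts for the only change in the combinatorial factor of the sample-complexity bound.

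Next, I would re-examine the three structural ingredients underlying Theorem~\ref{thm:main_thm}: Lemma~\ref{lem:vertex_two}, Lemma~\ref{lem:vertex_optimal}, and Lemma~\ref{lem:find_hyperplane}. Lemma~\ref{lem:vertex_two} applies to arbitrary polytopes cut out by hyperplanes whose coefficients are rationals of bounded bit-complexity, so, once $L$ is interpreted as the joint bit-complexity of \emph{both} players' payoffs, its proof goes through identically with $\widetilde{\mathcal{P}}_j$ in place of $\mathcal{P}_j$ (the $H^\ell_{jk}$ have the same algebraic form as the $H_{jk}$). Lemma~\ref{lem:vertex_optimal} is in fact simpler in this formulation: by construction of $\widetilde{\mathcal{P}}_j$, the tie-breaking rule is already encoded in the region boundaries, so any optimal commitment lies at a vertex of some $\widetilde{\mathcal{P}}_j$ with $\mathrm{vol}(\widetilde{\mathcal{P}}_j) > 0$. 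For Lemma~\ref{lem:find_hyperplane}, the key point is that \texttt{Binary-Search} can now cross a leader separating hyperplane $H^\ell_{jk}$: on either side the follower is indifferent between the two equivalent actions $a_j$ and $a_k$, and the tie-breaking rule guarantees that the oracle returns exactly the action preferred by the leader in the corresponding halfspace. Hence the sample drawn by the oracle still reliably signals which side of $H^\ell_{jk}$ the queried point lies in, and the binary search, sampling, and vertex-enumeration arguments continue to work unchanged.

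The main obstacle is verifying that \texttt{Sample-Int} and \texttt{Find-Hyperplane} still enjoy the probabilistic guarantees of Lemmas~\ref{lem:sample_point_first}, \ref{lem:lin_indep}, and \ref{lem:hyper_ball} when leader separating hyperplanes are in play: the union bounds in those proofs iterated over at most $n^2$ separating and $m$ boundary hyperplanes, and they must now be extended to include up to $n^2$ additional leader separating hyperplanes. This merely multiplies the failure probability by a constant factor, which can be absorbed into the definition of $\delta$ at Line~\ref{line:delta} of Algorithm~\ref{alg:learning_commitment} (replacing $n^2 + nm$ by $2n^2 + nm$, say) without affecting the asymptotic sample complexity. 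Finally, assembling these pieces exactly as in the proof of Theorem~\ref{thm:main_thm}---the $\widetilde{\mathcal{O}}(m^7 L \log(1/\zeta))$ per-hyperplane cost multiplied by at most $n$ hyperplanes per closed action and $n$ actions, plus the vertex-checking cost $\binom{m+2n}{m}$ per action---yields the claimed bound $\widetilde{\mathcal{O}}(n^2 ( m^7 L \log(1/\zeta) + \binom{m+2n}{m} ))$ with probability at least $1-\zeta$.
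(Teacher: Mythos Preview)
Your proposal is correct and takes essentially the same approach as the paper: treat the $\widetilde{\mathcal{P}}_j$ as the effective best-response regions, observe they are polytopes cut out by at most $2(n-1)$ separating hyperplanes (follower and leader) plus $m$ boundary hyperplanes, and rerun the machinery of Theorem~\ref{thm:main_thm} with the vertex count updated to $\binom{m+2n}{m}$ and $L$ reinterpreted to cover both players' payoffs. In fact the paper only sketches this reduction in a paragraph before the theorem statement, so your write-up is more detailed than the original; the one minor imprecision is that a closed action may now require up to $2(n-1)$ rather than $n$ calls to \texttt{Find-Hyperplane}, but this constant factor is absorbed by the $\widetilde{\mathcal{O}}$ and does not affect the stated bound.
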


Theorem~\ref{thm:main_thm_2} shows that we can achieve similar results to those obtained in Theorem~\ref{thm:main_thm}, even in cases where there equivalent follower's actions are allowed, by using Algorithm~\ref{alg:learning_commitment}. Specifically, when the number of leader's actions is fixed, the number of samples required to compute an optimal commitment is of the same order as when there are no equivalent follower's actions. In contrast, when the number of follower's actions is fixed, the required number of samples is $\widetilde{\mathcal{O}}(m^{2n})$, differently from the case with no coinciding follower's actions, where it is equal to $\widetilde{\mathcal{O}}(m^{n})$.

\end{document}